\title{Grandchildren-weight-balanced binary search trees}
\author{Vincent Jugé}{LIGM, Univ Gustave Eiffel \& CNRS, France \and IRIF, Université Paris Cité \& CNRS, France}{vincent.juge@univ-eiffel.fr}{https://orcid.org/0000-0003-0834-9082}{}
\authorrunning{V. Jugé}
\keywords{Data structures, Balanced binary trees}
\newcommand{\BB}[1]{\mathsf{BB}[#1]}
\newcommand{\GCB}[1]{\mathsf{GCB}[#1]}
\newcommand{\RGCB}[1]{\mathsf{RGCB}[#1]}
\newcommand{\GCBB}[2]{\mathsf{GCB}_{#1}[#2]}
\newcommand{\RGCBB}[2]{\mathsf{RGCB}_{#1}[#2]}
\newcommand{\anc}[2]{#1^{(#2)}}
\newcommand{\balc}{\mathsf{bal_C}}
\newcommand{\balgc}{\mathsf{bal_{GC}}}
\newcommand{\cbal}[1]{\text{\ifthenelse{\equal{#1}{}}{}{$#1$-}\textsf{\small C}-balance\ifthenelse{\equal{#1}{}}{}{d}}}
\newcommand{\gcbal}[1]{\text{\ifthenelse{\equal{#1}{}}{}{$#1$-}\textsf{\small GC}-balance\ifthenelse{\equal{#1}{}}{}{d}}}
\newcommand{\scrB}{\mathscr{B}}
\newcommand{\sA}{\mathsf{C}}
\newcommand{\sB}{\mathsf{GC}}
\newcommand{\sC}{\mathsf{C}}
\newcommand{\scc}{\mathsf{c}}
\newcommand{\sF}{\mathsf{F}}
\newcommand{\sG}{\mathsf{G}}
\newcommand{\sH}{\mathsf{H}}
\newcommand{\sL}{\mathsf{L}}
\newcommand{\bN}{\mathbf{N}}
\newcommand{\sR}{\mathsf{R}}
\newcommand{\sS}{\mathsf{S}}
\newcommand{\D}{\mathcal{D}}
\newcommand{\E}{\mathcal{E}}
\renewcommand{\O}{\mathcal{O}}
\renewcommand{\S}{\mathcal{S}}
\newcommand{\T}{\mathcal{T}}
\newcommand{\U}{\mathcal{U}}
\newcommand{\alphaa}{\hat{\alpha}}
\newcommand{\alphab}{\alpha^\bullet}
\newcommand{\betaa}{\hat{\beta}}
\let\oldnl\nl
\newcommand{\drawline}{\BlankLine\renewcommand{\nl}{\let\nl\oldnl}\hrulefill\BlankLine\setcounter{AlgoLine}{0}}
\theoremstyle{plain}
\newtheorem{routine}[theorem]{Algorithm}
\newcolumntype{L}[1]{>{\raggedright\let\newline\\\arraybackslash\hspace{0pt}}m{#1}}
\newcolumntype{C}[1]{>{\centering\let\newline\\\arraybackslash\hspace{0pt}}m{#1}}
\newcolumntype{R}[1]{>{\raggedleft\let\newline\\\arraybackslash\hspace{0pt}}m{#1}}
\begin{document}

\maketitle

\begin{abstract}
We revisit weight-balanced trees, also known as trees of bounded balance.
Invented by Nievergelt and Reingold in 1972, these trees are obtained by assigning a weight to each node and requesting that the weight of each node should be quite larger than the weights of its children, the precise meaning of ``quite larger'' depending on a real-valued parameter~$\gamma$.
Blum and Mehlhorn then showed how to maintain them in a recursive (bottom-up) fashion when~$2/11 \leqslant \gamma \leqslant 1-1/\sqrt{2}$, their algorithm requiring only an amortised constant number of tree rebalancing operations per update (insertion or deletion).
Later, in 1993, Lai and Wood proposed a top-down procedure for updating these trees when~$2/11 \leqslant \gamma \leqslant 1/4$.

Our contribution is two-fold.
First, we strengthen the requirements of Nievergelt and Reingold, by also requesting that each node should have a substantially larger weight than its grandchildren, thereby obtaining what we call grandchildren-balanced trees.
Grandchildren-balanced trees are not harder to maintain than weight-balanced trees, but enjoy a smaller node depth, both in the worst case (with a 6~\% decrease) and on average (with a 1.6~\% decrease).
In particular, unlike standard weight-balanced trees, all grandchildren-balanced trees with~$n$ nodes are of height less than~$2 \log_2(n)$.

Second, we adapt the algorithm of Lai and Wood to all weight-balanced trees, i.e., to all parameter values~$\gamma$ such that~$2/11 \leqslant \gamma \leqslant 1-1/\sqrt{2}$.
More precisely, we adapt it to all grandchildren-balanced trees for which~$1/4 < \gamma \leqslant 1 - 1/\sqrt{2}$.
Finally, we show that, except in limit cases (where, for instance,~$\gamma = 1 - 1/\sqrt{2}$), all these algorithms result in making a constant amortised number of tree rebalancing operations per tree update.
\end{abstract}


\section{Introduction}\label{sec:intro}

Among the most fundamental data structures are search trees.
Such trees are aimed at representing sets of pairwise comparable elements of size~$n$ while allowing basic operations such as membership test, insertion and deletion, which typically require a time linear in the depth of the tree.
That is why various data structures such as height-balanced (AVL) trees~\cite{AVL62}, weight-balanced trees~\cite{NR72}, B-trees~\cite{B72}, red-black trees~\cite{RB78}, splay trees~\cite{Splay85}, relaxed~$k$-trees~\cite{FJL01} or weak AVL trees~\cite{HST15} were invented since the 1960s.
All of them provide worst-case or amortized~$\mathcal{O}(\log(n))$ complexities for membership test, insertion and deletion, and may use local modifications of the tree shape called \emph{rotations}.

The most desirable features of such trees include their height, their internal and external path lengths, and the amortised number of rotations triggered by an insertion or deletion.
Such statistics are presented in Table~\ref{table:stats} below for families of binary trees.

Note that, although relaxed~$k$-trees can be made arbitrarily short (given that each tree should be of height at least~$\log_2(n)$), achieving these excellent guarantees on the height of the trees requires performing significantly more rotations, and thus considering other data structures may still be relevant.
This is, for instance, why Haeupler, Tarjan and Sen proposed weak AVL trees in 2009: these are a variant of AVL trees, isomorphic to red-black trees, whose height can be worse than that of AVL trees (but never worse than that of red-black trees), but which require only a constant number of rotations per update, whereas standard AVL trees may require up to~$\log(n)$ rotations per update.

{\renewcommand*{\arraystretch}{1.1}
\setlength{\tabcolsep}{2pt}
\begin{table}[t]
\begin{center}
\begin{tabular}{|C{26mm}|L{0mm}C{29mm}R{0mm}|L{0mm}C{34mm}R{0mm}|L{0mm}C{30mm}R{0mm}|}
\hline
\multirow{2}{*}{Tree family} & \multicolumn{9}{c|}{Worst-case} \\
\cline{2-10}
 & \multicolumn{3}{c|}{height} & \multicolumn{3}{c|}{path length} & \multicolumn{3}{c|}{am. rotations/update} \\
\hline
AVL & & {}$1.4404 \log_2(n)$& \llap{\cite{AVL62}} & & {}$1.4404 n \log_2(n)$ & \llap{\cite{KW90}} & & {}$\Theta(\log(n))$ & \llap{\cite{ALT16}} \\
\hline
Weight-balanced & & {}$2 \log_2(n)$ & \llap{\cite{NR72}} & & {}$1.1462 n \log_2(n)$ & \llap{\cite{NR72}} & & {}$\Theta(1)$ & \llap{\cite{BM80}} \\
\hline
Red-black & & {}$2 \log_2(n)$ & \llap{\cite{RB78}} & & {}$2 n \log_2(n)$ & \llap{\cite{CW92}} & & {}$\Theta(1)$ & \llap{\cite{RB78}} \\
\hline
Splay & & {}$n$ & \llap{\cite{Splay85}} & & {}$n^2/2$ & \llap{\cite{Splay85}} & & {}$\Theta(1)$ & \llap{\cite{Splay85}} \\
\hline
Relaxed~$k$- & & {}$(1 + \varepsilon) \log_2(n)$ & \llap{\cite{FJL01}} & & {}$(1 + \varepsilon) n \log_2(n)$ & \llap{\cite{FJL01}} & & {}$\O(1 / \varepsilon)$& \llap{\cite{FJL01}} \\
\hline
Weak AVL & & {}$2 \log_2(n)$& \llap{\cite{RB78}} & & {}$2 n \log_2(n)$ & \llap{\cite{CW92}} & & {}$\Theta(1)$ & \llap{\cite{HST15}} \\
\hline
\hline
Grandchildren-balanced & & {}$1.8798 \log_2(n)$ & & & {}$1.1271 n \log_2(n)$ & & & {}$\Theta(1)$ & \\
\hline
\end{tabular}
\end{center}
\caption{Asymptotic approximate worst-case height, internal/external path length and amortised number of rotations per update in binary trees with~$n$ nodes.
Next to each worst-case bound are indicated references where this bound is proved.}
\label{table:stats}
\end{table}
}

In 1972, Nievergelt and Reingold invented \emph{weight-balanced trees}, or \emph{trees of bounded balance}~\cite{NR72}.
This family of trees depends on a real parameter~$\gamma$, and is denoted by~$\BB{\gamma}$.
It is based on the notion of \emph{weight} of a node~$n$, which is the integer~$|n|$ defined as one plus the number of descendants of a node~$n$;
alternatively,~$|n|$ is the number of empty sub-trees descending from~$n$.
Although they might be less efficient than other families of balanced trees in general, they are a reference data structure for order-statistic trees: they are relevant in contexts where we need to store the rank of elements in a dynamic set~\cite{BJ09} (i.e., finding efficiently the~$k$\textsuperscript{th} smallest element, or counting elements smaller than a given bound).
They are also widely used in libraries of mainstream languages, e.g., Haskell set or map implementations~\cite{Haskell10a,Haskell10b}.

The family~$\BB{\gamma}$ consists of those binary search trees in which, for all nodes~$u$ and~$v$, we have~$|u| \geqslant \gamma |v|$ whenever~$u$ is a child of~$v$.
Nievergelt and Reingold also gave an algorithm for dynamically maintaining~$\BB{\gamma}$-trees of size~$n$ whenever~$\gamma \leqslant \sqrt{2}-1$, which required only~$\O(\log(n))$ element comparisons and pointer moves per modification.

Blum and Mehlhorn then found that this algorithm worked only when~$2/11 \leqslant \gamma \leqslant \sqrt{2}-1$.
They also proved that, in that case, it required only~$\O(1)$ amortised pointer moves per modification~\cite{BM80}.
Lai and Wood subsequently proposed an algorithm for maintaining such trees through a single top-down pass per modification whenever~$2/11 \leqslant \gamma \leqslant 1/4$, or whenever updates were guaranteed to be non-redundant~\cite{LW93}.
This limitation on the domain of~$\gamma$ when tackling possibly redundant updates is somewhat unfortunate, given that those values of~$\gamma$ for which~$\BB{\gamma}$ have the best upper bounds on their height and path length are those for which~$\gamma$ approaches~$\sqrt{2}-1$.

Top-down maintenance algorithms have two main advantages over bottom-up algorithms.
First, bottom-up algorithms require maintaining a link from each node to its parent or storing the list of nodes we visited along a given branch on a stack, which makes them typically slower than top-down algorithms~\cite{BW20}.
Second, they are also a strong bottleneck in concurrent or parallel settings: going down to a leaf and then back to the root must be an atomic operation (or requires maintaining children-to-parent links), since it might finish by changing the root~\cite{RB78,HST15}.

Thus, our goal here is two-fold.
First, by making small changes to a long-known data structure, we improve the guarantees it offers in terms of tree height.
Second, we aim at circumventing the limitations of the algorithm from Lai and Wood, by proposing a top-down updating algorithm that will be valid in all cases, and which we also extend to our enhanced data structure.

\paragraph*{Contributions}

We propose a new variant of weight-bounded trees, which we call \emph{grandchildren-balanced trees}.
This family of trees depends on two real parameters~$\alpha$ and~$\beta$, and is denoted by~$\GCB{\alpha,\beta}$.
It consists of those binary search trees in which, for all nodes~$u$ and~$v$, we have~$|u| \leqslant \alpha |v|$ whenever~$u$ is a child of~$v$, and~$|u| \leqslant \beta |v|$ whenever~$u$ is a grandchild of~$v$.
Grandchildren-balanced trees generalise weight-bounded trees, because~$\GCB{\alpha,\alpha^2} = \BB{1-\alpha}$.

We prove that, for well-chosen values of~$\alpha$ and~$\beta$, the height and internal and external path lengths of grandchildren-balanced trees are smaller than those of weight-bounded trees;
in particular, we break the~$2 \log_2(n)$ lower bound on the worst-case height of weight-balanced trees.
These results are achieved by using Algorithm~\ref{alg:BU-rebalancing}.

We also extend the algorithm of Lai and Wood, and propose an algorithm for maintaining grandchildren-balanced trees in a single top-down pass per modification, which requires only~$\O(1)$ amortised pointer moves per modification; this is Algorithm~\ref{alg:TD-updating}.
Our algorithm is valid for all relevant values of~$\alpha$ and~$\beta$, including those for which~$\GCB{\alpha,\beta}$-trees enjoy the best upper bounds on their height and path length.

Several proofs are omitted, or only sketched, in the body of this article;
they can be found in the appendix.

\section{Grandchildren-balanced trees}\label{sec:grandchildren}

In this section, we describe a new family of binary search trees, called \emph{grandchildren-balanced trees}, which depends on two real parameters~$\alpha$ and~$\beta$.
We also describe the domain in which the pair~$(\alpha,\beta)$ is meaningful and, for such pairs, we provide upper bounds on the height and internal and external path lengths of grandchildren-balanced trees.

Let us first recall the definition given in Section~\ref{sec:intro}.

\begin{definition}\label{def:GB-trees}
The \emph{weight} of a binary tree~$\T$, denoted by~$|\T|$, is defined as the number of empty sub-trees of~$\T$; alternatively,~$|\T|-1$ is the number of nodes of~$\T$.
The \emph{weight} of a node~$n$ of~$\T$, also denoted by~$|n|$, is defined as the weight of the sub-tree of~$\T$ rooted at~$n$.

In what follows, it may be convenient to see each empty tree (of weight~$1$) as if it had two empty children of weight~$1/2$ each, themselves having two empty children of weight~$1/4$ each, and so on.
In other words, we may see each empty tree as an infinite complete binary tree whose nodes of depth~$d$ have weight~$2^{-d}$.

Let~$n$ be a node of~$\T$, let~$n_1$ and~$n_2$ be its children, and let~$n_{11}$,~$n_{12}$,~$n_{21}$ and~$n_{22}$ be its grandchildren.
The \emph{child-balance} and the \emph{grandchild-balance} (also called \cbal{} and \gcbal{}) of~$n$ in~$\T$ are defined as the real numbers
\[\balc(\T,n) = \frac{\max\{|n_1|,|n_2|\}}{|n|} \text{ and } \balgc(\T,n) = \frac{\max\{|n_{11}|,|n_{12}|,|n_{21}|,|n_{22}|\}}{|n|}.\]
When the context is clear, we may omit referring to~$\T$, thereby simply writing~$\balc(n)$ and~$\balgc(n)$;
alternatively, when~$n$ is the root of~$\T$, its \cbal{} and the \gcbal{} may also be directly denoted by~$\balc(\T)$ and~$\balgc(\T)$.

Given real numbers~$\alpha$ and~$\beta$, we say that a node~$n$ of~$\T$ is \emph{$\alpha$-child balanced} (or~$\cbal{\alpha}$) when~$\balc(n) \leqslant \alpha$;
\emph{$\beta$-grandchild-balanced} (or~$\gcbal{\beta}$) when~$\balgc(n) \leqslant \beta$;
and~\emph{$(\alpha,\beta)$-balanced} when~$n$ is both~$\cbal{\alpha}$ and~$\gcbal{\beta}$.

Finally, we say that~$\T$ is a~$\GCBB{x,y}{\alpha,\beta}$-tree when its root is~$(x,y)$-balanced and its other nodes are~$(\alpha,\beta)$-balanced.
For the sake of concision, we simply say that~$\T$ is a~$\GCBB{x}{\alpha}$-tree, a~$\GCB{\alpha,\beta}$-tree or a~$\GCB{\alpha}$-tree when~$(y,\beta) = (1,1)$,~$(x,y) = (\alpha,\beta)$ or~$(x,y,\beta) = (\alpha,1,1)$, respectively.
\end{definition}

Below, we will be mostly interested in the family of~$\GCB{\alpha,\beta}$-trees.
Although it generalises weight-bounded trees invented by Nievergelt and Reingold, we had to shift away from their notation.
Indeed, weight-bounded trees were parametrised with a real number~$\gamma$, by requesting that~$|u| \geqslant \gamma |v|$ whenever~$u$ is a child of~$v$;
however, being~$\gcbal{\beta}$ cannot be expressed by using constraints such as ``$|u| \geqslant \gamma |v|$ whenever~$u$ is a grandchild of~$v$''.
In particular, in~\cite{BM80,NR72}, the \emph{root-balance} of the node~$n$ is simply defined as the real number~$\rho(n) = |n_1|/|n|$;
here, we set~$\balc(n) = \max\{\rho(n),1-\rho(n)\}$.

The family of~$\GCB{\alpha,\beta}$-trees coincides with
\begin{itemize}[itemsep=0.5pt,topsep=0.5pt]
\item both~families of~$\GCB{\alpha}$-trees and~$\BB{1-\alpha}$-trees when~$\alpha^2 \leqslant \beta$;
\item the empty set when~$\alpha < 1/2$;
\item the family of~$\GCB{1,\beta}$-trees when~$1 \leqslant \alpha$;
\item the family of~$\GCB{2\beta,\beta}$-trees when~$\beta < \alpha/2$;
\item the family of all binary trees when~$\alpha = \beta = 1$.
\end{itemize}
Thus, we assume below that the pair~$(\alpha,\beta)$ belongs to the domain
\[\D = \{(\alpha,\beta) \colon \alpha/2 \leqslant \beta \leqslant \alpha^2 \text{ and } 1/2 \leqslant \alpha \leqslant 1\} \setminus \{(1,1)\},\]
represented in Figure~\ref{fig:domains} along with two other domains~$\D'$ and~$\D''$ at the end of Section~\ref{sec:grandchildren}.

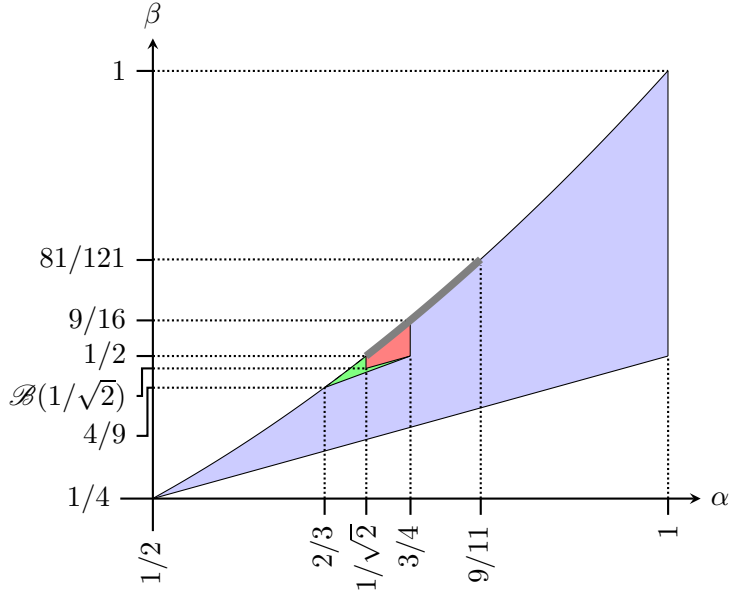
\begin{figure}[t]
\begin{center}
\begin{tikzpicture}[scale=1]
\begin{axis}[thick,smooth,no markers,hide axis]
\addplot+[thin,name path=A,black,domain=0.5:1] {x/2};
\addplot+[thin,name path=B,black,domain=0.5:1] {x*x};

\addplot+[thin,name path=C,black,domain=0.70711:0.75] {0.5*sqrt(1+4*x)-0.5};
\addplot+[very thin,name path=D,black,domain=0.70711:0.75] {x*x};

\addplot+[thin,name path=E,black,domain=0.66667:0.75] {0.66667*x};
\addplot+[very thin,name path=F,black,domain=0.66667:0.75] {x*x};

\addplot[blue!20] fill between[of=A and B];
\addplot[green!50] fill between[of=E and F];
\addplot[red!50] fill between[of=C and D];
\addplot+[thin,black,solid,domain=0.5:1] (1,x);
\addplot+[thin,black,solid,domain=0.47832:0.5] (0.70711,x);
\addplot+[thin,black,solid,domain=0.5:0.5625] (0.75,x);
\addplot+[line width=1mm,black!50,domain=0.70711:0.818182] {x*x};
\end{axis}
\begin{scope}[shift={(-5.125,-1.11)},xscale=11.416,yscale=6.322,>=stealth]
\draw[thick,->] (0.46839,0.25) -- (1.03161,0.25);
\draw[thick,->] (0.5,0.19292) -- (0.5,1.05708);
\draw[thick,densely dotted] (0.5,0.44444) -- (0.66667,0.44444);
\draw[thick,densely dotted] (0.5,0.4783) -- (0.70711,0.4783);
\draw[thick,densely dotted] (0.5,0.5) -- (0.70711,0.5);
\draw[thick,densely dotted] (0.5,0.5625) -- (0.745,0.5625);
\draw[thick,densely dotted] (0.5,0.669421) -- (0.813182,0.669421);
\draw[thick,densely dotted] (0.5,1) -- (1,1);

\draw[thick,densely dotted] (0.66667,0.25) -- (0.66667,0.44444);
\draw[thick,densely dotted] (0.70711,0.25) -- (0.70711,0.4783);
\draw[thick,densely dotted] (0.75,0.25) -- (0.75,0.5);
\draw[thick,densely dotted] (0.818182,0.25) -- (0.818182,0.659421);
\draw[thick,densely dotted] (1,0.25) -- (1,0.5);

\node[anchor=west] at (1.03161,0.25) {$\alpha$};
\node[anchor=south] at (0.5,1.05708) {$\beta$};
\node[anchor=east,rotate=90] at (0.5,0.19292) {$1/2$};
\node[anchor=east] at (0.46839,0.25) {$1/4$};

\foreach \y/\z/\l/\p in {0.44444/0.36/{4/9}/0.33333,0.4783/0.43/{\scrB(1/\sqrt{2})}/0.66667,0.5/0.5/{1/2}/1,0.5625/0.5625/{9/16}/1,0.668421/0.668421/{81/121}/1,1/1/1/1}{
\draw[thick] (0.5,\y) -- (0.5-\p*0.015805,\y) -- (0.5-\p*0.015805,\z) -- (0.5-0.015805,\z);
\node[anchor=east] at (0.5-0.015805,\z) {$\l$};
}

\foreach \x/\l in {0.66667/{2/3},0.70711/{1/\sqrt{2}},0.75/{3/4},0.818182/{9/11},1/1}{
\draw[thick] (\x,0.25) -- (\x,0.22146);
\node[anchor=east,rotate=90] at (\x,0.22146) {$\l$};
}
\end{scope}
\end{tikzpicture}
\vspace{-1.2em}
\end{center}
\caption{Two-dimensional domains~${\color{red}\D'} \subseteq {\color{green}\D''} \subseteq {\color{blue!50}\D}$, and one-dimensional (thick, gray) curve~${\color{black!50}\mathcal{C}}$.
Whereas~$\BB{1-\alpha}$-trees correspond to choosing~$(\alpha,\beta)$ on the curve~${\color{black!50}\mathcal{C}}$, our~$\GCB{\alpha,\beta}$-trees can be created on the entire domain~${\color{red}\D'}$.
\label{fig:domains}}
\end{figure}

We first obtain the following upper bound on the height of~$\GCB{\alpha,\beta}$-trees.

\begin{theorem}\label{thm:height}
When~$(\alpha,\beta) \in \D$, each~$\GCB{\alpha,\beta}$-tree with~$\bN$ nodes is of height
\[h \leqslant - 2 \log_\beta(\bN+1).\]
\end{theorem}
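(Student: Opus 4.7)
The plan is to convert the claimed height bound into a weight lower bound: since $\beta<1$ on $\D$ (noting $(\alpha,\beta)=(1,1)$ is excluded), the inequality $h \leqslant -2\log_\beta(\bN+1)$ is equivalent to $|\T| = \bN+1 \geqslant \beta^{-h/2}$. So the whole task reduces to showing that any $\GCB{\alpha,\beta}$-tree of height $h$ has weight at least $\beta^{-h/2}$.

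To do this, I would pick a deepest path $v_0, v_1, \ldots, v_h$ from the root $v_0$ down to a node $v_h$ at maximum depth, and iterate the balance inequalities along it. The \gcbal{} condition gives $|v_{2k+2}|\leqslant \beta\,|v_{2k}|$ for each~$k$, so by an immediate induction
\[|v_{2k}| \leqslant \beta^k |\T| \quad\text{and}\quad |v_{2k+1}| \leqslant \alpha\beta^k |\T|,\]
the second line obtained by composing the \cbal{} inequality $|v_1|\leqslant \alpha|v_0|$ with the previous bound applied from $v_1$. Since $v_h$ is at maximum depth, it has two empty children and hence $|v_h|\geqslant 2$.

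It then remains to split on the parity of $h$. When $h=2k$ is even, the inequality $2 \leqslant |v_h| \leqslant \beta^k|\T|$ yields $|\T| \geqslant 2\beta^{-h/2} \geqslant \beta^{-h/2}$, which is already better than needed. When $h=2k+1$ is odd, the inequality $2 \leqslant |v_h| \leqslant \alpha\beta^k|\T|$ yields $|\T|\geqslant 2\beta^{-k}/\alpha$, and the desired bound $|\T|\geqslant \beta^{-h/2} = \beta^{-k}/\sqrt{\beta}$ reduces to $\sqrt{\beta}\geqslant \alpha/2$, i.e.\ $\beta\geqslant \alpha^2/4$. This is where the definition of $\D$ intervenes: the constraint $\beta\geqslant \alpha/2$ combined with $\alpha\leqslant 1$ gives $\beta\geqslant \alpha/2\geqslant \alpha^2/4$, closing the odd case.

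The only subtlety I anticipate is precisely this odd-height case, since the grandchild condition naturally controls weights only two levels at a time and the last step must be paid for with the \cbal{} bound; verifying that the domain $\D$ is exactly large enough for the resulting inequality $\sqrt{\beta}\geqslant\alpha/2$ to hold is the crux. Once both parities are settled, taking logarithms in base $\beta$ (flipping the inequality because $\log\beta<0$) converts $|\T|\geqslant \beta^{-h/2}$ into $h\leqslant -2\log_\beta(\bN+1)$, finishing the proof. An alternative phrasing would be to invoke the paper's convention that an empty subtree can be regarded as an infinite complete binary tree with weights $2^{-d}$; this lets one continue the path virtually to depth $h+1$ (weight $1$) and handle both parities uniformly, but it is essentially equivalent to the case analysis above.
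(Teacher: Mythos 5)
Your proof is correct and follows essentially the same route as the paper: both arguments trace a deepest root-to-leaf path and use the grandchild-balance to shrink weights by a factor~$\beta$ every two levels (with one child-balance step absorbing the odd level), concluding from $|v_h|\geqslant 2$ and the domain constraint $\beta\geqslant\alpha/2$, $\alpha\leqslant 1$ that $\bN+1\geqslant\beta^{-h/2}$. The paper merely phrases this bottom-up, as an induction giving $|\anc{n}{k}|\geqslant 2\beta^{-k/2}\alpha$ for the ancestors of a deepest node, whereas you argue top-down with an explicit parity split; the two are equivalent.
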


\begin{proof}
Let~$\T$ be a~$\GCB{\alpha,\beta}$-tree with~$\bN$ nodes.
Let~$h$ be its height, and let~$n$ be a node of~$\T$ at depth~$h$.
An induction on~$k$ proves that~$|\anc{n}{k}| \geqslant 2 \beta^{-k/2} \alpha$ whenever~$0 \leqslant k \leqslant h$, 
where~$\anc{n}{k}$ denotes the~$k$\textsuperscript{th} ancestor of~$n$, and is defined by~$n$ if~$k = 0$, or as the parent of~$\anc{n}{k-1}$ if~$k \geqslant 1$.
It follows that~$\bN + 1 \geqslant |\anc{n}{h}| \geqslant 2 \beta^{-h/2} \alpha \geqslant \beta^{-h/2}$, i.e., that~$h \leqslant - 2 \log_\beta(\bN+1)$.
\end{proof}

We also focus on the internal and external path lengths of~$\GCB{\alpha,\beta}$-trees.
The \emph{internal} path length of a tree~$\T$ is the sum of the lengths of the paths from the root of~$\T$ to the nodes of~$\T$;
the \emph{external} path length of~$\T$ is the sum of the lengths of the paths from the root of~$\T$ to the empty sub-trees of~$\T$.
These lengths are closely related to the average number of queries required to check membership in the set of labels of~$\T$.

Indeed, let~$\T$ be a binary search tree whose~$\bN$ nodes are labelled by elements of a linearly ordered set~$\E$, and let~$\lambda_{\mathsf{int}}$ and~$\lambda_{\mathsf{ext}}$ be its internal and external path lengths.
The labels of the nodes of~$\T$ form a linearly ordered set~$\S \subseteq \E$ of size~$\bN$, which splits~$\E$ into~$\bN+1$ intervals.
Then, the average number of queries used to find an element~$x$ is:
\begin{itemize}[itemsep=0.5pt,topsep=0.5pt]
\item {}$\lambda_{\mathsf{int}}/\bN$ when~$x$ is chosen uniformly at random in~$\S$;
\item {}$\lambda_{\mathsf{ext}}/(\bN+1)$ when~$x$ is chosen in~$\E \setminus \S$, and the interval of~$\E \setminus \S$ to which~$x$ belongs is chosen uniformly at random.
\end{itemize}

By construction, each node~$n$ increases the lengths~$\lambda_{\mathsf{int}}$ and~$\lambda_{\mathsf{ext}}$ by~$|n|-2$ and~$|n|$, respectively.
This means that~$\lambda_{\mathsf{ext}}$ is the sum of the weights of the tree nodes, and that~$\lambda_{\mathsf{int}} = \lambda_{\mathsf{ext}} - 2 \bN$.

By adapting the proof from~\cite{NW73}, we obtain the following upper bound on the external path lengths of~$\GCB{\alpha,\beta}$-trees.

\begin{restatable}{theorem}{ipl}\label{thm:IPL}
When~$(\alpha,\beta) \in \D$, each~$\GCB{\alpha,\beta}$-tree with~$\bN$ nodes is of external path length
\[\lambda \leqslant (\bN+1) \log_2(\bN+1) / \Delta,\]
where~$\Delta = (\sH_2(\alpha) + \alpha \sH_2(\beta/\alpha))/(1+\alpha)$, and~$\sH_2(x) = -x \log_2(x) - (1-x) \log_2(1-x)$ is Shannon's binary entropy.
\end{restatable}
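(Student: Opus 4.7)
The plan is to adapt the Nievergelt--Wong argument~\cite{NW73} through a two-level recurrence that simultaneously exploits both the child-balance and the grandchild-balance constraints. Since $\lambda_{\mathsf{ext}}(\T) = \sum_n |n|$ summed over internal nodes of~$\T$, we have the one-level identity $\lambda_{\mathsf{ext}}(\T) = |\T| + \lambda_{\mathsf{ext}}(\T_1) + \lambda_{\mathsf{ext}}(\T_2)$, where $\T_1$ and $\T_2$ are the subtrees of the root; but this decomposition cannot by itself exploit the grandchild constraint. The key idea is to \emph{expand only the heavier subtree} one additional level: assuming without loss of generality that $\T_1$ is the heavier subtree, and letting $\T_{11}$ and $\T_{12}$ denote the subtrees of its root, we have
\[
\lambda_{\mathsf{ext}}(\T) = |\T| + |\T_1| + \lambda_{\mathsf{ext}}(\T_{11}) + \lambda_{\mathsf{ext}}(\T_{12}) + \lambda_{\mathsf{ext}}(\T_2).
\]

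Let $f(W)$ denote the supremum of $\lambda_{\mathsf{ext}}(\T)$ taken over $\GCB{\alpha,\beta}$-trees~$\T$ of weight~$W$. I would prove by strong induction on~$W$ that $f(W) \leqslant (W \log_2 W)/\Delta$, handling trees of weight at most a small constant as base cases. Setting $p_1 = |\T_1|/|\T| \in [1/2, \alpha]$ and $r_1 = |\T_{11}|/|\T_1|$, the $\alpha$-child-balance of the root of~$\T_1$, combined with the grandchild-balance inequality $|\T_{11}| \leqslant \beta |\T|$ at the root of~$\T$, yields $r_1 \in [1/2, \min\{\alpha, \beta/p_1\}]$. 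Applying the induction hypothesis to the three strictly smaller subtrees~$\T_{11}$,~$\T_{12}$,~$\T_2$ and collecting the resulting weighted logarithms via the chain rule for entropy, the target inequality $\lambda_{\mathsf{ext}}(\T) \leqslant (|\T| \log_2 |\T|)/\Delta$ reduces to the scalar inequality
\[
(1 + p_1)\, \Delta \leqslant \sH_2(p_1) + p_1\, \sH_2(r_1).
\]

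The main technical obstacle is to verify this inequality throughout the feasible region. For fixed~$p_1$, the right-hand side is decreasing in $r_1$ on~$[1/2, \alpha]$, so its minimum is attained at $r_1 = \min\{\alpha, \beta/p_1\}$. This splits the analysis into two regimes. When $p_1 \leqslant \beta/\alpha$ the grandchild constraint is slack and one must bound $(\sH_2(p_1) + p_1\, \sH_2(\alpha))/(1+p_1)$ from below by~$\Delta$; when $p_1 \geqslant \beta/\alpha$ the grandchild constraint binds and one must bound $(\sH_2(p_1) + p_1\, \sH_2(\beta/p_1))/(1+p_1)$ from below by~$\Delta$. In each regime, a single-variable calculus argument (examining the derivative and the endpoint values) shows that the global minimum over $p_1 \in [1/2, \alpha]$ of the resulting expression is attained at $(p_1, r_1) = (\alpha, \beta/\alpha)$, where it equals~$\Delta$ by construction. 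A brief case analysis covers the degenerate situation where $\T_1$ is a leaf (so $r_1 = 1/2$, for which $\sH_2(1/2) = 1$ makes the inequality immediate) and the small base cases, after which the induction closes and the theorem follows.
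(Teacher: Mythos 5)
Your proposal follows essentially the same route as the paper: the same one-extra-level decomposition $\lambda(\T) = |\T| + |\T_1| + \lambda(\T_{11}) + \lambda(\T_{12}) + \lambda(\T_2)$ of the heavier subtree, the same induction on the weight, and the same reduction to the scalar inequality $\sH_2(x) + x\,\sH_2(y) \geqslant \Delta(x+1)$ over the region $1/2 \leqslant x \leqslant \alpha$, $1/2 \leqslant y \leqslant \min\{\alpha,\beta/x\}$, split along $x = \beta/\alpha$ exactly as in the paper. The only difference is that you defer the two-regime verification to ``single-variable calculus,'' where the paper makes it explicit via concavity of $x \mapsto \sF_{\alpha,\beta}(x,\beta/x)$ together with the endpoint values $\sF_{\alpha,\beta}(\alpha,\beta/\alpha)=0$ and $\sF_{\alpha,\beta}(\beta/\alpha,\alpha)\geqslant 0$, and a monotonicity-in-$\beta$ argument on the slack side; your stated conclusions there are correct.
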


In what follows, we will investigate more closely the family of~$\GCB{\alpha,\beta}$-trees when~$(\alpha,\beta)$ belongs to the more restricted domain
\[\D' = \{(\alpha,\beta) \colon 1/\sqrt{2} \leqslant \alpha < 3/4 \text{ and } \scrB(\alpha) \leqslant \beta \leqslant \alpha^2\} \text{, where } \scrB(\alpha) = \frac{\sqrt{1+4\alpha}-1}{2}.\]
In the limit cases where~$\alpha = 1/\sqrt{2} \approx 0.7071$ and~$\beta = \scrB(\alpha) \approx 0.4783$, Theorems~\ref{thm:height} and~\ref{thm:IPL} translate into the inequalities
\[h \leqslant 1.8798 \log_2(\bN+1) \text{ and }
\lambda \leqslant 1.1271 (\bN+1) \log_2(\bN+1),\]
thereby improving the inequalities
\[h \leqslant 2 \log_2(\bN+1) \text{ and }
\lambda \leqslant 1.1462 (\bN+1) \log_2(\bN+1)\]
obtained in the best case (i.e., in the limit case where~$\alpha = 1/\sqrt{2}$ and~$\beta = 1/2$) for~$\BB{1-\alpha}$-trees.

On the domain~$\D'$, or even on the larger domain~$\D'' = \{(\alpha,\beta) \colon 2/3 \leqslant \beta / \alpha \leqslant \alpha < 3/4\}$, the upper bounds presented in Theorem~\ref{thm:height} and~\ref{thm:IPL} are (unsurprisingly) quite tight.

\begin{restatable}{proposition}{tightipl}
Let~$(\alpha,\beta)$ be a pair belonging to~$\D''$, and let~$\bN \geqslant 0$ be an integer.
There exists a~$\GCB{\alpha,\beta}$-tree with~$\bN$ nodes, of height~$h \geqslant -2 \log_2(\bN+1)/\log_2(\beta) - 7$ and external path length~$\lambda \geqslant (\bN+1) \log_2(\bN+1)/\Delta - 4 \bN$.
\end{restatable}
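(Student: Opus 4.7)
The plan is to construct explicitly, for each integer $\bN \geq 0$, a $\GCB{\alpha,\beta}$-tree $\T$ with $\bN$ nodes whose height and external path length nearly saturate the upper bounds of Theorems~\ref{thm:height} and~\ref{thm:IPL}. The construction mirrors the inductive argument that proves Theorem~\ref{thm:height}: the tree has a ``heavy spine'' along which the weight shrinks by essentially a factor $\beta$ every two levels.

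Concretely, given a target weight $W = \bN+1$ (large enough), I would set $W' = \lfloor \beta W \rfloor$ and choose positive integers $W_L \approx (1-\alpha) W$ and $W_M \approx (\alpha - \beta) W$ summing to $W - W'$. The tree $\T(W)$ is then the tree whose root has a left child $L$ of weight $W_L$ and a right child of weight $W_M + W'$; this right child is itself a node with left child $L'$ of weight $W_M$ and right child $\T(W')$. The light sub-trees $L$ and $L'$ are built recursively as smaller instances of the same extremal family. Checking that $\T(W) \in \GCB{\alpha,\beta}$ is a finite case analysis relying on the conditions $2/3 \leq \beta/\alpha \leq \alpha < 3/4$ that characterise $\D''$: the root's grandchild-balance equals $\beta$ (attained by $\T(W')$), the intermediate node has child-balance $\beta/\alpha \leq \alpha$ (because $\beta \leq \alpha^2$), and the requirement that children of $L$ have weight at most $\beta W$ reduces to $\alpha(1-\alpha) \leq \beta$, which follows from $\beta \geq 2\alpha/3$ and $\alpha \geq 2/3$.

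The height bound follows by induction: since $\T(W)$ contains $\T(W')$ as a grandchild of its root, $h(\T(W)) \geq h(\T(W')) + 2$, and unrolling down to a base case of weight $O(1)$ yields $h(\T(W)) \geq -2\log_2(W)/\log_2(\beta) - c_h$ with a constant $c_h$ absorbing base cases and the rounding of $W'$. For the external path length, the decomposition $\lambda(\T(W)) = (1+\alpha)W + \lambda(L) + \lambda(L') + \lambda(\T(W'))$, combined with the entropy identity
\[
\sH_2(\alpha) + \alpha \sH_2(\beta/\alpha) = -(1-\alpha)\log_2(1-\alpha) - (\alpha - \beta)\log_2(\alpha - \beta) - \beta \log_2(\beta) = (1+\alpha)\Delta,
\]
yields, under the inductive hypothesis $\lambda(S) \geq |S|\log_2|S|/\Delta - 4|S| + 4$, a clean telescoping in which the $(1+\alpha)W$ contribution of the root-plus-intermediate node and the $-W(1+\alpha)$ coming from the entropy term cancel exactly, preserving the bound.

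The main obstacle I foresee is the bookkeeping around integer rounding of weights together with the small-$W$ base cases. The identities above hold exactly only in the continuous limit, so $|L|, |L'|, |\T(W')|$ deviate from the ideal values $(1-\alpha)W, (\alpha - \beta)W, \beta W$ by $O(1)$, producing an $O(\log W)$ error term at each level of the induction. Showing that the total accumulated error fits within the $-7$ and $-4\bN$ slacks in the statement is tedious but routine; a careful accounting, together with an explicit verification of the bounds for weights $W$ below a fixed threshold, should close the proof.
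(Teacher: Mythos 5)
Your construction is the paper's construction: the extremal tree there has a left subtree of weight $s_1 = s - \lfloor\alpha s\rfloor \approx (1-\alpha)s$ and a right child whose two subtrees have weights $\approx \beta s$ and $\approx(\alpha-\beta)s$, all three built recursively from the same family; the membership check on $\D''$, the height argument via the heavy grandchild of weight $\approx\beta s$, and the path-length recursion via the identity $(1+\alpha)\Delta = -(1-\alpha)\log_2(1-\alpha)-(\alpha-\beta)\log_2(\alpha-\beta)-\beta\log_2\beta$ all coincide with the paper's. The genuine gap is the part you defer as ``tedious but routine'': that is where most of the paper's proof actually lives, and your sketch of how to handle it does not work as stated. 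In a structural induction whose hypothesis is exactly the target bound $\lambda(S) \geqslant |S|\log_2|S|/\Delta - 4|S| + 4$, there is no ``total accumulated error'' that the final slack can absorb: each inductive step must re-establish the hypothesis on its own, so the per-step rounding loss has to be beaten by the per-step slack, which here is only the constant $3\cdot 4-4=8$ coming from the additive terms. Moreover, your estimate of the per-step loss as $O(\log W)$ is both too pessimistic and fatal to your own plan: if it really were $\Theta(\log W)$, no constant slack could close the induction at all. In fact, because the three integer weights sum exactly to $W$, the deviations cancel in the dominant terms and the per-step loss is $O(1)$ --- but whether that $O(1)$ stays below the available slack uniformly over $\D''$ and over all weights is precisely the delicate quantitative question, and it is not settled by your argument.

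The paper resolves this with a shifted induction hypothesis $\lambda(s) \geqslant \lambda^+(s+\kappa)+v$, where $\lambda^+(x)=x\log_2(x)/\Delta-4x$ and $\kappa=(\gamma+1)/(1-\beta)$ is chosen so that the recursion becomes exact in the shifted variable (e.g.\ $s_{21}+\kappa \geqslant \beta(s+\kappa)$ and $s_1+\kappa\geqslant(1-\alpha)(s+\kappa)$); it then needs explicit numerical bounds ($4/5\leqslant\Delta$, $3\leqslant\kappa\leqslant 4$, $3\leqslant v\leqslant 4$), monotonicity of $\lambda^+$, a computer verification of the base cases $s\leqslant 26$, and a separate final step converting the shifted bound back into the stated form $\lambda \geqslant (\bN+1)\log_2(\bN+1)/\Delta - 4\bN$. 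The same shift is what makes the height recursion $W'=\lfloor\beta W\rfloor$ clean and yields the explicit constant $-7$. So: right tree, right identities, right inductive skeleton, but the quantitative bookkeeping --- which is the actual content of the proposition's explicit constants --- is missing, and the route you propose for it (error accumulation against the final slack) must be replaced by a shift of the induction hypothesis or an equivalent device.
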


\section{Bottom-up algorithm}\label{sec:BU}

In this section, we propose a simple algorithm for rebalancing a~$\GCB{\alpha,\beta}$ tree~$\T$ after having added a leaf to~$\T$ or deleted a node from~$\T$, when the pair~$(\alpha,\beta)$ belongs to the domain~$\D'$.

Rebalancing~$\GCB{\alpha,\beta}$ trees might also be possible when~$3/4 \leqslant \alpha \leqslant 1$ and~$\mathscr{B}(\alpha) \leqslant \beta \leqslant \alpha^2$ by using similar ideas, but doing so may require more complicated algorithms, in particular for dealing with base cases, which is not necessarily worth the effort since such parameters provide us with trees of larger height and path length.

In case of a deletion, as illustrated in the three cases of Figure~\ref{fig:in-out}, we can safely assume that the node to be deleted is a leaf~$s$, by using Hibbard's deletion technique~\cite{H62}.
Indeed, if the targeted node~$a$ has just one child, since~$a$ is~$\cbal{\alpha}$ and~$\alpha < 3/4$, we know that~$|a| \leqslant 3$, which means that this child is a leaf; thus, we can just focus on deleting that child and, \emph{a posteriori}, replace the key of~$a$ with the key of this just-deleted child.
Similarly, if the targeted node~$a$ has two children, we can first identify the successor of~$a$ as the deepest node~$b$ of the left branch stemming from the right child of~$a$;
$b$ has no left child, which means that we can just focus on deleting it and, \emph{a posteriori}, replace the key of~$a$ with the key of this just-deleted node~$b$.

In both cases, let~$r$ be the root of~$\T$, and let~$s$ be the node that must be inserted or deleted.
The weights of the nodes from~$r$ to~$s$ need to be incremented by~$1$ in case of a non-redundant insertion (since~$s$ just replaced an empty tree, we can pretend that its weight was~$1$ before it was inserted), or decremented by~$1$ in case of a non-redundant deletion (since~$s$ was a leaf and is replaced by an empty tree, we can pretend that its new weight is~$1$).
However, if the update is redundant, i.e., if we tried to insert a key that was already present in~$\T$, or to delete an absent key, these weights will not be changed.

In practice, each tree node will contain two fields for storing explicitly its label and its weight.
Moreover, in the description below, we simply say ``\,if~$s$ lies in a sub-tree~$\T'$\,'' as a place-holder for ``\,determine, based on the key~$x$ that you want to insert or delete and on the key~$y$ stored in the root of the current tree, which sub-tree~$\T'$ should contain~$s$\,''.

\newcommand{\drawkey}[4]{
\begin{scope}[shift={(#1-#4*1.55,#2+0.65)},xscale=#4*0.5,yscale=0.5]
\draw[fill=white] (0,0.2) -- (-2,0.2) -- (-2.2,0.4) -- (-1.8,0.8) -- (-1.6,0.6) -- (-1.4,0.8) -- (-1.2,0.6) -- (-1.05,0.75) -- (-0.95,0.75) -- (-0.8,0.6) -- (-0.6,0.8) -- (0,0.8);
\draw[fill=white] (0,0) --++ (0,1) --++ (0.7,0.7) --++ (1,0) --++ (0.7,-0.7) --++ (0,-1) --++ (-0.7,-0.7) --++ (-1,0) -- cycle;
\draw[fill=black] (1.825,0.5) circle (0.25);
\draw[very thick] (1.825,0.5) -- (2.6,0.5) arc (90:0:0.5) --++ (0,-0.2);
\node at (0.8,0.5) {#3};
\end{scope}
}

\begin{figure}[t]
\begin{center}
\begin{tikzpicture}[scale=0.44]
\draw[draw=gray,fill=gray!30] (0.4,7.8) -- (4.5,3.7) -- (4.5,0) -- (0.2,0)  -- (-0.6,-0.8) -- (-1.4,-0.8) -- (-2.2,0) -- (-4.5,0) -- (-4.5,3.7) -- (-0.4,7.8) -- cycle;
\draw[ultra thick] (0,7) -- (1,6) -- (0,5) -- (2,3) -- (1,2) -- (-1,0);
\draw[ultra thick,draw=gray] (-2,-1) -- (-1,0);
\draw[fill=white,draw=white] (-2,-1) circle (0.2);
\node at (-2,-1) {$\llap{$s =\,$}\bot$};
\foreach \x/\y/\l in {0/7/r,-1/0/{}}{
 \draw[fill=white,thick] (\x,\y) circle (0.55);
 \node at (\x,\y) {$\l$};
}
\draw[line width=3pt,->,>=stealth] (0,-2) -- (0,-4);

\begin{scope}[shift={(0,-12)}]
\draw[draw=gray,fill=gray!30] (0.4,7.8) -- (4.5,3.7) -- (4.5,0) -- (0.2,0) -- (-1.6,-1.8) -- (-2.4,-1.8) -- (-2.8,-1.4) -- (-2.8,0) -- (-4.5,0) -- (-4.5,3.7) -- (-0.4,7.8) -- cycle;
\draw[ultra thick] (0,7) -- (1,6) -- (0,5) -- (2,3) -- (1,2) -- (-2,-1);
\foreach \x in {-3,-1}{
 \draw[ultra thick,draw=gray] (\x,-2) -- (-2,-1);
 \draw[fill=white,draw=white] (\x,-2) circle (0.2);
 \node at (\x,-2) {$\bot$};
}
\foreach \x/\y/\l in {0/7/r,-2/-1/s,-1/0/{}}{
 \draw[fill=white,thick] (\x,\y) circle (0.55);
 \node at (\x,\y) {$\l$};
}
\drawkey{-2}{-1}{$x$}{1}
\end{scope}

\draw[ultra thick,densely dashed] (5.5,-14.3) --++ (0,22.6);

\begin{scope}[shift={(11,0)}]
\draw[draw=gray,fill=gray!30] (0.4,7.8) -- (4.5,3.7) -- (4.5,0) -- (0.2,0) -- (-1.6,-1.8) -- (-2.4,-1.8) -- (-2.8,-1.4) -- (-2.8,0) -- (-4.5,0) -- (-4.5,3.7) -- (-0.4,7.8) -- cycle;
\draw[ultra thick] (0,7) -- (1,6) -- (0,5) -- (2,3) -- (-2,-1);
\foreach \x in {-3,-1}{
 \draw[ultra thick,draw=gray] (\x,-2) -- (-2,-1);
 \draw[fill=white,draw=white] (\x,-2) circle (0.2);
 \node at (\x,-2) {$\bot$};
}
\foreach \x/\y/\l in {0/7/r,-1/0/{},1/4/a,2/3/{},-2/-1/s}{
 \draw[fill=white,thick] (\x,\y) circle (0.55);
 \node at (\x,\y) {$\l$};
}
\drawkey{1}{4}{$x$}{-1}
\drawkey{-2}{-1}{$y$}{1}

\draw[line width=3pt,->,>=stealth] (0,-2) -- (0,-4);
\end{scope}

\begin{scope}[shift={(11,-12)}]
\draw[draw=gray,fill=gray!30] (0.4,7.8) -- (4.5,3.7) -- (4.5,0) -- (0.2,0) -- (-0.6,-0.8) -- (-1.4,-0.8) -- (-2.2,0) -- (-4.5,0) -- (-4.5,3.7) -- (-0.4,7.8) -- cycle;
\draw[ultra thick] (0,7) -- (1,6) -- (0,5) -- (2,3) -- (-1,0);
\foreach \x in {-2,0}{
 \draw[ultra thick,draw=gray] (\x,-1) -- (-1,-0);
 \draw[fill=white,draw=white] (\x,-1) circle (0.2);
}
\node at (0,-1) {$\bot$};
\node at (-2,-1) {$\llap{$s =\,$}\bot$};
\foreach \x/\y/\l in {0/7/r,-1/0/{},1/4/a,2/3/{},-1/0/{}}{
 \draw[fill=white,thick] (\x,\y) circle (0.55);
 \node at (\x,\y) {$\l$};
}
\drawkey{1}{4}{$y$}{-1}
\end{scope}

\draw[ultra thick,densely dashed] (16.5,-14.3) --++ (0,22.6);

\begin{scope}[shift={(22,0)}]
\draw[draw=gray,fill=gray!30] (0.4,7.8) -- (4.5,3.7) -- (4.5,0) -- (0.8,0) -- (0.8,-1.4) -- (0.4,-1.8) -- (-0.4,-1.8) -- (-2.2,0) -- (-4.5,0) -- (-4.5,3.7) -- (-0.4,7.8) -- cycle;
\draw[ultra thick] (0,7) -- (1,6) -- (0,5) -- (2,3) -- (-1,0) -- (0,-1);
\draw[ultra thick,draw=gray] (-2,-1) -- (-1,0) (-1,-2) -- (0,-1) -- (1,-2);
\foreach \x/\y in {-2/-1,-1/-2,1/-2}{
 \draw[fill=white,draw=white] (\x,\y) circle (0.2);
 \node at (\x,\y) {$\bot$};
}
\foreach \x/\y/\l in {0/7/r,-1/0/{},0/-1/s,1/4/a,2/3/{},-1/0/b}{
 \draw[fill=white,thick] (\x,\y) circle (0.55);
 \node at (\x,\y) {$\l$};
}
\drawkey{1}{4}{$x$}{-1}
\drawkey{-1}{0}{$y$}{1}
\drawkey{0}{-1}{$z$}{-1}
\draw[line width=3pt,->,>=stealth] (0,-2) -- (0,-4);
\end{scope}

\begin{scope}[shift={(22,-12)}]
\draw[draw=gray,fill=gray!30] (0.4,7.8) -- (4.5,3.7) -- (4.5,0) -- (0.2,0) -- (-0.6,-0.8) -- (-1.4,-0.8) -- (-2.2,0) -- (-4.5,0) -- (-4.5,3.7) -- (-0.4,7.8) -- cycle;
\draw[ultra thick] (0,7) -- (1,6) -- (0,5) -- (2,3) -- (1,2) -- (-1,0);
\foreach \x in {-2,0}{
 \draw[ultra thick,draw=gray] (\x,-1) -- (-1,-0);
 \draw[fill=white,draw=white] (\x,-1) circle (0.2);
}
\node at (-2,-1) {$\bot$};
\node at (0,-1) {$\bot\rlap{$\,= s$}$};
\foreach \x/\y/\l in {0/7/r,-1/0/{},1/4/a,2/3/{},-1/0/b}{
 \draw[fill=white,thick] (\x,\y) circle (0.55);
 \node at (\x,\y) {$\l$};
}
\drawkey{1}{4}{$y$}{-1}
\drawkey{-1}{0}{$z$}{1}
\end{scope}
\end{tikzpicture}
\vspace{-0.6em}
\end{center}
\caption{Inserting or deleting a key~$x$: either an empty sub-tree grows into a leaf or a leaf shrinks to an empty sub-tree.
When we wish to delete an internal node, we just focus on deleting a leaf descending from that node, and then swap node keys.
\label{fig:in-out}}
\end{figure}

Our algorithm is based on two algorithmic building blocks, called~$\sA$-balancing (Algorithm~\ref{routine:A}) and~$\sB$-balancing (Algorithm~\ref{routine:B}).
They consist in locally performing rotations in order to make a given node better balanced when needed, without damaging the balance of its parent and children too much.

In a nutshell, the idea of our bottom-up updating algorithm, which will be made more precise in Algorithm~\ref{alg:BU-rebalancing}, is as follows:
\begin{enumerate}[itemsep=0.5pt,topsep=0.5pt]
\item We recursively update the sub-tree that needs to be updated.
\item Using the~$\sA$-balancing algorithm ensures our root and its children are suitably child-balanced.
\item Using the~$\sB$-balancing algorithm then ensures our root is suitably child- and grandchild-balanced.
\end{enumerate}
This is the same general idea as the original algorithm from Nievergelt and Reingold~\cite{BM80,NR72}, adapted to take grandchild balances into account: one should first make our tree child-balanced, and only then can one make it grandchild-balanced too.

One difficulty is that, when we perform a rotation to make our root child- or grandchild-balanced, this rotation should be worth its cost: we expect that, for the next few updates (a quantity that can be made precise, and will be studied in Section~\ref{sec:complexity}), no update will be required at all.
This is what will make the amortised number of rotations per update a constant.

In what follows,~$\T$ denotes a binary tree that may need to be rebalanced.
Its root is denoted by~$n$ and, for every node~$x$, the left and right children of~$x$ are denoted by~$x_1$ and~$x_2$, respectively.
Furthermore, we consider that nodes \emph{move} when rotations are performed.
For instance, when either rotation represented in Figure~\ref{fig:balancing} is performed, the node~$n$, which used to be the root of the tree, becomes the right child of the root of the tree resulting from the rotation.

Below, we also say that a node whose children list changed during the rotation was \emph{affected} by the rotation.
We will always make sure that, when a simple or double rotation is performed on a tree~$\T$, the resulting tree~$\T'$ satisfies the inequality~$\balc(\T') \leqslant \balc(\T)$.
That way, if~$\T$ was in fact rooted at some child of a node~$x$, neither the~$\cbal{}$ nor the~$\gcbal{}$ of~$x$, or of any unaffected node, will increase as a result of the rotation.

\begin{figure}[t]
\begin{center}
\begin{tikzpicture}[scale=0.85]
\foreach \x/\y/\c in {0/0/11,1/1/1,2/0/12,2/2/,3/1/2,1/-1/121,3/-1/122}{
 \coordinate (n\c) at (0.75*\x,\y);
}
\draw[thick] (n2) -- (n) -- (n1) -- (n11) (n1) -- (n12) -- (n121) (n12) -- (n122);
\foreach \c in {,1,11,12,121,122,2}{
 \draw[fill=white,thick] (n\c) circle (0.45);
}
\foreach \c in {1,11,12,121,122,2}{
 \node at (n\c) {$n_{\c}$};
}
\node at (n) {$n_{\vphantom{1}}$};

\begin{scope}[shift={(5.5,-0.5)}]
\foreach \x/\y/\c in {0/0/11,1/1/1,2/0/121,3.5/0/122,4.5/1/,5.5/0/2,2.75/2/12}{
 \coordinate (n\c) at (0.75*\x,\y);
}
\draw[thick] (n2) -- (n) -- (n122) (n) -- (n12) -- (n1) (n121) -- (n1) -- (n11);
\foreach \fil/\list in {gray!30/{,1,12},white/{11,2,121,122}}{
 \foreach \c in \list{
  \draw[fill=\fil,thick] (n\c) circle (0.45);
}}
\foreach \c in {1,2,11,12,121,122}{
 \node at (n\c) {$n_{\c}$};
}
\node at (n) {$n_{\vphantom{1}}$};
\end{scope}

\begin{scope}[shift={(-6.25,0)}]
\foreach \x/\y/\c in {1/1/11,2/2/1,3/1/,2/0/12,4/0/2,1/-1/121,3/-1/122}{
 \coordinate (n\c) at (0.75*\x,\y);
}
\draw[thick] (n11) -- (n1) -- (n) (n12) -- (n) -- (n2) (n121) -- (n12) -- (n122);
\foreach \fil/\list in {gray!30/{,1},white/{11,12,2,121,122}}{
 \foreach \c in \list{
  \draw[fill=\fil,thick] (n\c) circle (0.45);
}}
\foreach \c in {1,2,11,12,121,122}{
 \node at (n\c) {$n_{\c}$};
}
\node at (n) {$n_{\vphantom{1}}$};
\end{scope}

\draw[line width=3pt,->,>=stealth] (3,0.5) --++ (1.75,0);
\draw[line width=3pt,->,>=stealth] (-0.75,0.5) --++ (-1.75,0);

\node[anchor=south] at (-1.475,0.5) {simple};
\node[anchor=north] at (-1.475,0.45) {rotation};
\node[anchor=south] at (3.725,0.5) {double\vphantom{p}};
\node[anchor=north] at (3.725,0.45) {rotation};
\end{tikzpicture}
\end{center}
\caption{Performing a simple or double rotation.
Affected nodes are coloured in grey.
\label{fig:balancing}}
\end{figure}

\begin{routine}[$\sA$-balancing]\label{routine:A}
Given real numbers~$u$,~$v$,~$w$ and~$x$, the~$\sA(u,v,w,x)$-balancing algorithm operates as follows on the binary tree~$\T$ it receives as input:
\begin{enumerate}[itemsep=0.5pt,topsep=0.5pt]
\item if~$|n_1| > u |n| + w$ and~$|n_{11}| \geqslant (1-v) |n| - x$, perform the simple rotation of Figure~\ref{fig:balancing};
\item if~$|n_2| > u |n| + w$ and~$|n_{22}| \geqslant (1-v) |n| - x$, perform the mirror image of that rotation;
\item if~$|n_1| > u |n| + w$ and~$|n_{11}| < (1-v) |n| - x$, perform the double rotation of Figure~\ref{fig:balancing};
\item if~$|n_2| > u |n| + w$ and~$|n_{22}| < (1-v) |n| - x$, perform the mirror image of that rotation;
\item if~$\max\{|n_1|,|n_2|\} \leqslant u |n| + w$, do not modify~$\T$.
\end{enumerate}
\end{routine}

\begin{routine}[$\sB$-balancing]\label{routine:B}
Given real numbers~$y$ and~$z$, the~$\sB(y,z)$-balancing algorithm operates as follows on the binary tree~$\T$ it receives as input:
\begin{enumerate}[itemsep=0.5pt,topsep=0.5pt]
\item if~$|n_{11}| > y |n| + z$, perform the simple rotation shown in Figure~\ref{fig:balancing};
\item if~$|n_{22}| > y |n| + z$, perform the mirror image of that rotation;
\item if~$|n_{12}| > y |n| + z$, perform the double rotation shown in Figure~\ref{fig:balancing};
\item if~$|n_{21}| > y |n| + z$, perform the mirror image of that rotation;
\item if~$\max\{|n_{11}|,|n_{12}|,|n_{21}|,|n_{22}|\} \leqslant y |n| + z$, do not modify~$\T$.
\end{enumerate}
\end{routine}

Parameters~$w$,~$x$ and~$z$ can be seen as terms governing some error margin, that we will set equal to zero in Section~\ref{sec:BU} and will be non-zero in Section~\ref{sec:TD}.
Indeed, in the former case, we develop bottom-up updating algorithms, and thus have a perfect knowledge of~$\T$.
By contrast, in the latter case, we develop top-down algorithms and thus cannot yet know where the leaf~$s$ should be inserted or deleted: depending on the answer, the tree~$\T$ may have different shapes, and we cannot anticipate which will be chosen.

The idea of the~$\sA$-balancing algorithm is to either check that our tree root is~$\cbal{u}$ or to transform it into a~$\cbal{v}$ node; in the latter case, each affected node will also be made~$\cbal{v}$, which will improve the \cbal{} of our tree root by at least~$u-v$ without damaging the balances of other nodes too much.
If we perform a simple rotation,~$n_{11}$ will be a new root child, and it will definitely not be too large, but its new sibling, of weight~$|n|-|n_{11}|$, might prevent our new root from being~$\cbal{v}$;
this undesirable case should occur when~$|n_{11}| < (1-v) |n|$, which is why, in that case, we perform a double rotation instead of a simple one.
More precisely, here is a result that can be stated about the~$\sA$-balancing algorithm.

\begin{lemma}\label{lem:A-balancing}
Let~$\alpha$ be a real number such that~$1 / \sqrt{2} \leqslant \alpha < 3/4$.
Then, let~$\alphab = 19/24$ and
\vspace{-1mm}
\[\alphaa = \frac{1+\alpha-\sqrt{(1-\alpha)(5-\alpha)}}{2(2\alpha-1)}.\]

\vspace{-2mm}%
We have~$1/\sqrt{2} \leqslant \alphaa \leqslant \alpha$.
Moreover, when~$\alphaa \leqslant u \leqslant \alpha$, the five cases in which the~$\sA(u,\alphaa,0,0)$-balancing algorithm consists are pairwise incompatible, and those rotations they trigger can always be performed;
the algorithm itself transforms each~$\GCBB{\alphab}{\alpha}$-tree~$\T$ whose root is not~$\cbal{u}$ into a~$\GCB{\alpha}$-tree~$\T'$ such that~$\balc(\T') \leqslant \balc(\T)$ and whose affected nodes are~$\cbal{\alphaa}$.
\end{lemma}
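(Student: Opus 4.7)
I would decompose the proof into four sub-claims: (i) the numerical relation $1/\sqrt{2} \leq \alphaa \leq \alpha$; (ii) the pairwise incompatibility of the five cases; (iii) the rotations can always be carried out; (iv) the rebalancing guarantees on $\T'$. Claims (i)--(iii) would be dispatched rapidly. For (i), the key step is to observe that $\alphaa$ is the smaller root of the quadratic $(2\alpha-1)x^2 - (1+\alpha)x + 1$ (a short expansion of the given formula, whose discriminant is $(1-\alpha)(5-\alpha)$), then check by direct substitution that $\alphaa = 1/\sqrt{2}$ at $\alpha = 1/\sqrt{2}$, and argue monotonicity on $[1/\sqrt{2}, 3/4)$ to obtain both bounds. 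For (ii), $u \geq \alphaa > 1/2$ combined with $|n_1|+|n_2|=|n|$ rules out two triggers of type $|n_i|>u|n|$ firing simultaneously, and the sub-conditions on $|n_{11}|$ and $|n_{22}|$ separate case 1 from case 3 and case 2 from case 4, leaving case 5 as the complementary event. For (iii), $|n_1| > \alphaa|n| > 1$ guarantees that $n_1$ is a genuine node, and in the double-rotation case $|n_{12}| = |n_1|-|n_{11}| > (2\alphaa-1)|n| > 0$ guarantees that $n_{12}$ exists.

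For the main claim (iv), I would case-split on the rotation type and write $a = |n_{11}|$, $b = |n_{12}|$, $c = |n_2|$, so $|n_1| = a+b$ and $|n| = a+b+c$; the available hypotheses are $a+b \leq \alphab|n|$, $(1-\alpha)(a+b) \leq a, b \leq \alpha(a+b)$ (from $n_1$ being $\cbal{\alpha}$), the symmetric bounds on $n_2$'s children, and the relevant trigger. For the simple rotation (case 1), the new root $n_1$ has children of weights $a$ and $b+c$: the right-child bound $b+c \leq \alphaa|n|$ is precisely the trigger condition $a \geq (1-\alphaa)|n|$, while the left-child bound $a \leq \alphaa|n|$ follows from $a \leq \alpha(a+b) \leq \alpha\alphab|n|$ together with the numerical inequality $\alpha\alphab \leq \alphaa$, which holds on $[1/\sqrt{2}, 3/4)$. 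The new node $n$ (with children $n_{12}, n_2$) is handled analogously using $b \geq (1-\alpha)(a+b)$ and the trigger bound on $c$; cases 2 and 4 are mirror images.

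The main obstacle lies in the double rotation (case 3), which has three affected nodes: the new root $n_{12}$, the new $n_1$ (with children $n_{11}, n_{121}$), and the new $n$ (with children $n_{122}, n_2$), where $b = |n_{121}|+|n_{122}|$ and $(1-\alpha)b \leq |n_{121}|, |n_{122}| \leq \alpha b$. The tightest resulting constraint is that $n_2$ should not be too heavy compared to its new sibling $n_{122}$: pushing $c$ up to its maximum $(1-\alphaa)|n|$ allowed by the trigger, $|n_{122}|$ down to its minimum $(1-\alpha)b$, and using the lower bound $b > (2\alphaa-1)|n|$ forced by combining $|n_1| > \alphaa|n|$ with $|n_{11}| < (1-\alphaa)|n|$, I expect to reach the inequality $(1-\alphaa)^2 \leq (1-\alpha)\alphaa(2\alphaa-1)$, which after expansion is exactly $(2\alpha-1)\alphaa^2 - (1+\alpha)\alphaa + 1 \leq 0$, i.e., the defining condition of our $\alphaa$ (with equality at $\alpha = 1/\sqrt{2}$). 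This explains the specific formula and closes the tight sub-case; the remaining bounds on new root $n_{12}$ and new $n_1$ yield weaker inequalities whose verification is routine. Finally, $\balc(\T') \leq \balc(\T)$ follows in cases 1--4 from $\balc(\T) = |n_1|/|n| > u \geq \alphaa \geq \balc(\T')$, and case 5 is trivial since $\T' = \T$ and $\max\{|n_1|,|n_2|\} \leq u|n| \leq \alpha|n|$ already makes the root $\cbal{\alpha}$.
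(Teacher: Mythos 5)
Your proposal is correct and takes essentially the same route as the paper's proof: the same case analysis on simple versus double rotations, the same weight bookkeeping using the $\cbal{\alpha}$ bounds on children and the $\alphab$ bound at the root, and a reduction to a short list of polynomial inequalities in $\alpha,\alphaa,\alphab$, with your identification of the binding constraint $(1-\alphaa)^2 \leqslant \alphaa(1-\alpha)(2\alphaa-1)$ as the defining quadratic of $\alphaa$ matching the paper's equality~(2.1\,+\,2.5) (the paper simply verifies all these inequalities by computer algebra). The only small repair needed is in part~(iii): to perform the double rotation you need $|n_{12}| > 1$, not merely $|n_{12}| > 0$, so that $n_{12}$ is a genuine tree node; your bound $|n_{12}| > (2\alphaa-1)|n| \geqslant (\sqrt{2}-1)|n|$ delivers this once you note, as the paper does, that the trigger forces $|n| \geqslant 4$.
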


\begin{proof}
Below, we will use the following (in)equalities, which are all easy to check with any computer algebra system (whereas some are quite tedious to check by hand) whenever~$1/\sqrt{2} \leqslant \alpha < 3/4$.
Except the first two inequalities, each of them is labelled and later reused to prove another inequality with the same label; equality~(2.1\,+\,2.5) is used to prove both subsequent inequalities~(2.1) and~(2.5).
\newcommand{\vjline}{\vphantom{(\alpha^2\alphaa u)}}
\vspace{-6mm}

\begin{multicols}{2}
\begin{align}
& \vjline \text{\rlap{$1/\sqrt{2} \leqslant \alphaa;$}}\hphantom{(1-\alphaa)(\alphab-1+\alphaa) < \alphaa(1-\alphab);} \notag \\
& \vjline \alpha \alphab < \alphaa; \tag{1.1} \\
& \vjline (1-\alphaa)^2 < \alphaa^2 (1-\alpha); \tag{1.3} \\
& \vjline (1-\alphaa)\alpha^2 < \alphaa(1-\alpha); \tag{2.2} \\
& \vjline (1-\alphaa)\alpha^2\alphab < \alphaa(1-\alphab); \tag{2.4}
\end{align}%

\begin{align}
& \vjline \alphaa \leqslant \alpha; \notag \\
& \vjline (1-\alphaa)(\alphab-1+\alphaa) < \alphaa(1-\alphab); \tag{1.2} \\
& \vjline (1-\alphaa)^2 = \alphaa (1-\alpha)(2\alphaa - 1); \tag{2.1\,+\,2.5} \\
& \vjline (1-\alphaa)(1-\alpha) + \alpha \alphab < \alphaa; \tag{2.3} \\
& \vjline 1 + (\alpha^2-1)\alphaa < \alphaa. \tag{2.6} \\[-30mm] \notag
\end{align}
\end{multicols}

This already proves the inequality~$1/\sqrt{2} \leqslant \alphaa \leqslant \alpha$ of Lemma~\ref{lem:A-balancing}.

Then, let~$|m|$ denote the weight of a node~$m$ in the tree~$\T$, and let~$|m|'$ denote its weight in~$\T'$.
When these weights are equal, we will prefer using the notation~$|m|$ even when considering the weight of~$m$ in~$\T'$.
The root of~$\T$ is denoted by~$n$, its left and right children are denoted by~$n_1$ and~$n_2$, and so on.
Hence, when~$u \geqslant \alphaa$, we have~$2 u |n| \geqslant \sqrt{2} |n| \geqslant |n_1| + |n_2|$, which makes the inequalities~$|n_1| > u |n|$ and~$|n_2| > u |n|$ incompatible.

Proving that the desired rotations can indeed be performed amounts to showing that~$n$ and~$n_1$ are actual tree nodes (instead of spurious empty nodes of weight~$1/2$ or less) in case~1, and that~$n_{12}$ is also a tree node in case~3; cases~2 and~4 will be treated symmetrically.
In cases~1 and~3, since~$|n_1| > u |n| > |n| / 2$, the node~$n$ is an actual tree node, and cannot be a leaf, which means, as desired, that~$n$ and~$n_1$ are tree nodes;
furthermore,~$|n_1| > u (|n_1|+1) > (|n_1|+1) / \sqrt{2}$, i.e.,~$|n_1| > 1 + \sqrt{2} > 2$, and therefore~$|n_1| \geqslant 3$ and~$|n| \geqslant 4$.
Then, in case~3, we also have~$|n_{12}| = |n_1| - |n_{11}| > u |n| - (1-\alphaa) |n| \geqslant (\sqrt{2}-1) |n| > 1$, which means that~$n_{12}$ is also a tree node.

It remains to prove that each affected node is~$\cbal{\alphaa}$: if the root of~$\T$ was not~$\cbal{u}$, we will have~$\balc(\T') \leqslant \alphaa \leqslant u \leqslant \balc(\T)$.

When~$\alphab |n| \geqslant |n_1| > u |n| \geqslant \alphaa |n|$ and~$|n_{11}| \geqslant (1-\alphaa) |n|$, a simple rotation is performed, and
{\allowdisplaybreaks
\begin{align}
|n_{11}| & \leqslant \alpha |n_1| \leqslant \alpha \alphab |n| \leqslant \alphaa |n| = \alphaa |n_1|'; \tag{1.1} \\
(1-\alphaa) |n_{12}| & = (1-\alphaa)(|n_1| - |n_{11}|) \leqslant (1-\alphaa)(\alphab |n| - (1-\alphaa)|n|) \notag \\
& \leqslant \alphaa(1-\alphab)|n| \leqslant \alphaa(|n|-|n_1|) = \alphaa |n_2|; \tag{1.2} \\
(1-\alphaa)|n_2| & = (1-\alphaa)(|n|-|n_1|) \leqslant (1-\alphaa)(1-\alphaa) |n| \notag \\
& \leqslant \alphaa(1-\alpha)\alphaa |n| \leqslant \alphaa(1-\alpha) |n_1| \leqslant \alphaa(|n_1|-|n_{11}|) = \alphaa |n_{12}|; \tag{1.3} \\
|n|' & = |n| - |n_{11}| \leqslant |n| - (1-\alphaa)|n| = \alphaa |n| =  \alphaa |n_1|', \tag{1.4}
\end{align}
}
which means precisely that~$n$ and~$n_1$ are~$\cbal{\alphaa}$ in~$\T'$.

Similarly, when~$\alphab |n| \geqslant |n_1| > u |n| \geqslant \alphaa |n|$ and~$|n_{11}| < (1-\alphaa) |n|$, a double rotation is performed, and
\begin{align}
(1-\alphaa)|n_{11}| & \leqslant (1-\alphaa)^2 |n| = \alphaa (1-\alpha) (\alphaa |n| - (1-\alphaa)|n|) \notag \\
& \leqslant \alphaa (1-\alpha)(|n_1| - |n_{11}|) = \alphaa (1-\alpha) |n_{12}| \notag \\
& \leqslant \alphaa(|n_{12}|-|n_{122}|) = \alphaa |n_{121}|; \tag{2.1} \\
(1-\alphaa) |n_{121}| & \leqslant (1-\alphaa)\alpha |n_{12}| \leqslant (1-\alphaa) \alpha^2 |n_1| \notag \\
& \leqslant \alphaa(1-\alpha)|n_1| \leqslant \alphaa(|n_1|-|n_{12}|) = \alphaa |n_{11}|; \tag{2.2} \\
|n_1|' & = |n_{11}| + |n_{121}| \leqslant |n_{11}| + \alpha |n_{12}| = (1-\alpha)|n_{11}| + \alpha |n_1| \notag \\
& \leqslant (1-\alpha)(1-\alphaa)|n| + \alpha \alphab|n| \leqslant \alphaa |n| = \alphaa |n_{12}|'; \tag{2.3} \\
(1-\alphaa)|n_{122}| & \leqslant (1-\alphaa)\alpha |n_{12}| \leqslant (1-\alphaa)\alpha^2 |n_1| \leqslant (1-\alphaa)\alpha^2 \alphab |n| \notag \\
& \leqslant \alphaa (1-\alphab)|n| = \alphaa |n_2|; \tag{2.4} \\
(1-\alphaa) |n_2| & = (1-\alphaa)(|n|-|n_1|) \leqslant (1-\alphaa)^2|n| = \alphaa(1-\alpha)(\alphaa|n| - (1-\alphaa)|n|) \notag \\
& \leqslant \alphaa(1-\alpha)(|n_1|-|n_{11}|) = \alphaa(1-\alpha)|n_{12}| \notag \\
& \leqslant \alphaa(|n_{12}|-|n_{121}|) = \alphaa |n_{122}|; \tag{2.5} \\
|n|' & = |n| - |n_1| + |n_{122}| \leqslant |n| - |n_1| + \alpha |n_{12}| \notag \\
& \leqslant |n| + (\alpha^2 - 1) |n_1| \leqslant |n| + (\alpha^2-1) \alphaa |n| \leqslant \alphaa |n| = \alphaa |n_{12}|', \tag{2.6}
\end{align}
which means precisely that~$n$,~$n_1$ and~$n_{12}$ are~$\cbal{\alphaa}$ in~$\T'$.

Finally, the cases where~$|n_2| > u |n|$ are symmetrical, and therefore the conclusion of Lemma~\ref{lem:A-balancing} is valid in these cases too.
\end{proof}

Similarly, the~$\sB$-balancing algorithm aims at making our tree root~$\gcbal{y}$.
If our tree is too unbalanced, we will either promote the responsible grandchild to being a child (if this grandchild was~$n_{11}$ or~$n_{22}$) with a simple rotation, or split it in two (if this grandchild was~$n_{12}$ or~$n_{21}$) with a double rotation.
For adequate values of the parameters~$u$,~$v$ and~$y$, each node affected by the~$\sB$-balancing algorithm will be~$(v,\hat{y})$-balanced for some real number~$\hat{y} \leqslant y$ (that does not need to be given as parameter), thus avoiding any damage to the \gcbal{} of affected nodes, and improving that of our tree root by at least~$y-\hat{y}$.
Hence, here is a result that can be stated about the~$\sB$-balancing algorithm;
its proof is similar to that of Lemma~\ref{lem:A-balancing}, and can be found in appendix.

\begin{restatable}{lemma}{bbalancing}\label{lem:B-balancing}
Let~$\alpha$ and~$\beta$ be real numbers such that~$1/\sqrt{2} \leqslant \alpha < 3/4$ and~$\scrB(\alpha) \leqslant \beta \leqslant \alpha^2$.
Then, let
\[\alphaa = \frac{1-2\alpha+6\alpha^2-\sqrt{(1-\alpha)(5-\alpha)}}{5(2\alpha-1)} \text{, } \betaa = \frac{1+\alpha - \sqrt{(1-\alpha)^2+4\alpha(\alpha^2-\beta)}}{2(1-\alpha^2+\beta)}\alpha,\]
as well as~$\delta = \min\{\alphaa,\betaa/\alpha\}$.

We have~$1/4 < \betaa \leqslant \beta$.
Moreover, when~$\alphaa \leqslant u \leqslant \alpha$ and~$\betaa \leqslant y \leqslant \beta$, if we apply the~$\sB(y,0)$-balancing algorithm on a~$\GCBB{\alphaa,1}{\alpha,\beta}$-tree~$\T$, cases~1 to~5 are pairwise incompatible, and those rotations they trigger can always be performed; if the root of~$\T$ is not~$\gcbal{y}$, the algorithm transforms~$\T$ into a~$\GCBB{u,y}{\alpha,\beta}$-tree~$\T'$ such that~$\balc(\T') \leqslant \balc(\T)$ and whose affected nodes are~$(\delta,\betaa)$-balanced.
\end{restatable}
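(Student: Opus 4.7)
The plan is to follow closely the template set by the proof of Lemma~\ref{lem:A-balancing}. I would first assemble a short catalogue of algebraic (in)equalities in~$(\alpha,\beta)$ that hold throughout~$\D'$, each checkable with a computer algebra system; three of them play a structural role and should be singled out upfront: (i)~$1/4 < \betaa \leq \beta$; (ii)~$\betaa \geq \alpha/2$; and (iii)~$\betaa \geq \alphaa/2$. Since $\alphaa \leq \alpha$ holds throughout~$\D'$, inequality~(iii) is in fact a consequence of~(ii), but both are convenient to name separately. The remaining inequalities will be dictated by the bookkeeping of the case analysis and can be recorded as they come.

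Next, I would prove that cases~1--4 of the $\sB(y,0)$-balancing algorithm are pairwise incompatible. Cases~1 and~3 both concern grandchildren of~$n_1$: from $|n_{11}| + |n_{12}| = |n_1| \leq \alphaa |n|$, simultaneity would force $2y < \alphaa$, contradicting~(iii); cases~2 and~4 are symmetric. Cases~1 and~2 (and, likewise, 1 and~4, 2 and~3, 3 and~4) concern grandchildren lying in different children: using $(\alpha,\beta)$-balance of $n_1$ and $n_2$ (which are not the root of~$\T$), both conditions yield $\alpha|n_1| > y|n|$ and $\alpha|n_2| > y|n|$, so that $\alpha|n| > 2y|n|$ and $y < \alpha/2$, contradicting~(ii). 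Feasibility of the rotations then follows in the same spirit as in Lemma~\ref{lem:A-balancing}: from $|n_{ij}| > y|n| > |n|/4$ one obtains a lower bound on~$|n|$ that forces the involved ancestors to be actual tree nodes; in case~3, the inequality $|n_{12}| > y|n|$ combined with $|n_{12}| \leq \alpha|n_1|$ also forces $n_{12}$ to have two real children.

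The remainder of the proof is the bookkeeping showing that, after the triggered rotation, the new root is $(u,y)$-balanced and each affected node is $(\delta,\betaa)$-balanced. In case~1 (simple rotation) the affected nodes are $n_1$ (the new root) and $n$ (its new right child); in case~3 (double rotation) they are additionally $n_{12}$ (the new root). For each, I would express the new weights $|m|'$ in terms of the original ones, and chain inequalities from the catalogue to obtain the required child-balance ($\leq u$ or $\leq \delta$) at the top of the rotated subtree and grandchild-balance ($\leq y$ or $\leq \betaa$) one level below. By design, $\betaa$ is a root of a quadratic in~$(\alpha,\beta)$ whose very form captures the worst-case inherited grandchild-balance after a rotation, which is why the tightest of these inequalities hold exactly at the boundary of~$\D'$.

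The main obstacle is organisational rather than conceptual: the catalogue of inequalities is noticeably larger than in Lemma~\ref{lem:A-balancing} because child- and grandchild-balance must be tracked simultaneously across two parameters, and several inequalities become tight near the boundary $\beta = \scrB(\alpha)$ where~$\betaa$ saturates. To mitigate the risk of missing a subcase, I would organise the verification as a small table indexed by rotation type and by affected node, and check the sharpness of each inequality with a CAS before assembling them into a proof.
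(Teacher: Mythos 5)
Your outline follows the paper's own proof of Lemma~\ref{lem:B-balancing} almost step for step: a CAS-checked catalogue of inequalities (the paper's list indeed contains $1/4 < \betaa \leqslant \beta$ and $\alpha < 2\betaa$, which is your item~(ii)), the same incompatibility argument exploiting that $n$, $n_1$ and $n_2$ are $\cbal{\alpha}$, and the same per-rotation, per-node bookkeeping. The one piece of the statement your plan never addresses is the inequality $\balc(\T') \leqslant \balc(\T)$, which is not decorative: Proposition~\ref{pro:AB-balancing} relies on it. It does follow quickly from what you do plan to prove, but it has to be said: if a rotation is triggered, then some grandchild satisfies $|n_{ij}| > y|n|$, and since its parent is $\cbal{\alpha}$ this forces $\balc(\T) > y/\alpha \geqslant \betaa/\alpha \geqslant \delta$, whereas the new root of $\T'$ is shown to be $\cbal{\delta}$, whence $\balc(\T') \leqslant \delta < \balc(\T)$.

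Two further remarks on execution. Your ``table indexed by rotation type and affected node'' can be halved: because $\delta = \min\{\alphaa,\betaa/\alpha\} \leqslant \betaa/\alpha$ and, in $\T'$, every child of an affected node is $\cbal{\alpha}$ (it is either another affected node or an untouched non-root node of $\T$), proving that each affected node is $\cbal{\delta}$ already makes it $\gcbal{\betaa}$; this is exactly how the paper reduces the verification to child-balance chains only, with no separate grandchild-balance chains. Finally, for feasibility of the double rotation you only need $n_{12}$ to be an actual tree node --- its subtrees $n_{121}$ and $n_{122}$ may perfectly well be empty --- so the claim that it must ``have two real children'' is both unnecessary and not something your hypotheses give you; the clean argument is the paper's: if the heaviest grandchild $n_{ij}$ were not a node, one would have $|n| \geqslant 4|n_{ij}|$, contradicting $|n_{ij}| > y|n| \geqslant \betaa|n| > |n|/4$.
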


Finally, these building blocks can be combined to ensure that a $\GCB{\alpha,\beta}$-tree whose root is suddenly slightly out of line will be rebalanced.

\begin{routine}[$\sA\sB$-balancing]\label{alg:AB}
Given real numbers~$u$,~$v$,~$y$ and~$\hat{y}$, the~$\sA\sB(u,v,y,\hat{y})$-balancing algorithm executes the following operations on the tree~$\T$ it receives as input:
\begin{enumerate}[itemsep=0.5pt,topsep=0.5pt]
\item the~$\sA(u,v,0,0)$-balancing algorithm transforms~$\T$ into a tree~$\T^{(1)}$;
\item if~$\T \neq \T^{(1)}$, 
\begin{itemize}[itemsep=0.5pt,topsep=0.5pt]
\item the~$\sB(\hat{y},0)$-balancing algorithm is applied (in parallel) to the left and right sub-trees of~$\T^{(1)}$, which transforms~$\T^{(1)}$ into a tree~$\T^{(2)}$;
\item then, the~$\sB(\hat{y},0)$-balancing algorithm is applied to~$\T^{(2)}$ itself;
\end{itemize}
\item otherwise, the~$\sB(y,0)$-balancing algorithm is applied to~$\T$.
\end{enumerate}
\end{routine}

\begin{proposition}\label{pro:AB-balancing}
Let~$\alpha$ and~$\beta$ be real numbers such that~$1/\sqrt{2} \leqslant \alpha < 3/4$ and~$\scrB(\alpha) \leqslant \beta \leqslant \alpha^2$; we recall that~$\scrB(\alpha) = (\sqrt{1+4\alpha}-1)/2$.
Then, let~$\alphab = 19/24$,
\[\alphaa = \frac{1-2\alpha+6\alpha^2-\sqrt{(1-\alpha)(5-\alpha)}}{5(2\alpha-1)} \text{ and } \betaa = \frac{1+\alpha - \sqrt{(1-\alpha)^2+4\alpha(\alpha^2-\beta)}}{2(1-\alpha^2+\beta)}\alpha.\]

When~$\alphaa \leqslant u \leqslant \alpha$ and~$\betaa \leqslant y \leqslant \beta$, the~$\sA\sB(u,\alphaa,y,\betaa)$-balancing algorithm, when applied to a~$\GCBB{\alphab,1}{\alpha,\beta}$-tree~$\T$ whose root is not~$(u,y)$-balanced, transforms~$\T$ into a~$\GCBB{\alphaa,\betaa}{\alpha,\beta}$-tree~$\T'$ such that~$\balc(\T') \leqslant \balc(\T)$ and whose affected nodes are all~$(\alphaa,\betaa)$-balanced.
\end{proposition}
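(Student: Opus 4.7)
The plan is to split the argument along which constraint the root of $\T$ fails, which coincides exactly with the branching of Algorithm~\ref{alg:AB}: either $\balc(\T) > u$, so step~1 triggers an $\sA$-rotation and we follow the ``$\T \neq \T^{(1)}$'' branch, or $\balc(\T) \leqslant u$ and $\balgc(\T) > y$, so we fall through to step~3.

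In the first branch, I first apply Lemma~\ref{lem:A-balancing} to conclude that $\T^{(1)}$ is a $\GCB{\alpha}$-tree whose affected nodes, and in particular its root, are $\cbal{\alphaa}$, and whose $\balc$ has not increased. The grandchild-balance at the root of $\T^{(1)}$ may, however, have risen to roughly $\alpha$, which is why step~2 repairs it in two stages: $\sB(\betaa,0)$ is first applied to the two subtrees of $\T^{(1)}$ in parallel, and then to $\T^{(1)}$ itself. The bulk of the work is to verify that each of these three $\sB$-calls receives a $\GCBB{\alphaa,1}{\alpha,\beta}$-tree, i.e.\ the hypothesis of Lemma~\ref{lem:B-balancing}. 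For the double-rotation branch of step~1 both children of $\T^{(1)}$'s root are affected, hence $\cbal{\alphaa}$, so the preconditions follow immediately. For the simple-rotation branch the unchanged subtree at $n_{11}$ is only $\cbal{\alpha}$; here I would argue that $|n_{11}| \leqslant \alpha \alphab |n|$ combined with inequality~(1.1) from the proof of Lemma~\ref{lem:A-balancing} forces this subtree to be already $\gcbal{\betaa}$, so the corresponding inner $\sB$-call is a no-op and Lemma~\ref{lem:B-balancing} need only be invoked on the other subtree and on $\T^{(2)}$. Composing the three $\balc$-non-increase statements and the $\gcbal{\betaa}$-enforcement then yields that $\T'$ is $\GCBB{\alphaa,\betaa}{\alpha,\beta}$ and that its affected nodes are $(\alphaa,\betaa)$-balanced.

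In the second branch, step~3 applies $\sB(y,0)$ directly to $\T$. I would invoke Lemma~\ref{lem:B-balancing} noting that its stated precondition (root $\cbal{\alphaa}$) is slightly stronger than what we have (root $\cbal{u}$); re-examining its proof shows that the underlying inequalities rely only on the upper bound $u \leqslant \alpha$ at the root, so they transfer unchanged. The lemma then produces a tree with root $\gcbal{y}$, affected nodes $(\alphaa,\betaa)$-balanced, and $\balc(\T') \leqslant \balc(\T)$. To sharpen the output root's child-balance from $u$ down to $\alphaa$, which the conclusion $\T' \in \GCBB{\alphaa,\betaa}{\alpha,\beta}$ demands, I would bound the two child-weights of the post-rotation root by $\alpha u$ and $1-y$ and verify that both are at most $\alphaa$ on the domain~$\D'$.

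The main obstacle I anticipate is the bookkeeping for the simple-rotation subcase in the first branch: identifying precisely which subtrees of $\T^{(1)}$ are affected by the $\sA$-rotation of step~1, and carefully justifying that the unchanged $n_{11}$-subtree is already $\gcbal{\betaa}$ so that no $\sB$-correction is needed there. The numerical inequalities involved should all reduce to variants of those already catalogued in the proofs of Lemmas~\ref{lem:A-balancing} and~\ref{lem:B-balancing}, without introducing any new constraint on $(\alpha,\beta)$ beyond those built into the hypothesis $(\alpha,\beta) \in \D'$.
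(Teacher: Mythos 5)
Your overall architecture is the paper's: split on which constraint the root violates, use Lemma~\ref{lem:A-balancing} for step~1, then feed the resulting subtrees and the whole tree to Lemma~\ref{lem:B-balancing}, and compose the $\balc$-monotonicity statements. However, there is one step that fails as written: in the simple-rotation subcase of the first branch you claim that the unaffected subtree rooted at~$n_{11}$ is \emph{already}~$\gcbal{\betaa}$, so that the inner $\sB(\betaa,0)$-call is a no-op, and you justify this by~$|n_{11}| \leqslant \alpha\alphab|n|$ together with inequality~(1.1). That is a non sequitur: those inequalities bound the weight of~$n_{11}$ relative to the \emph{new root}, which is exactly what makes the new root~$\cbal{\alphaa}$, but they say nothing about the grandchildren of~$n_{11}$ relative to~$|n_{11}|$ itself. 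Since~$n_{11}$ was a non-root node of the original~$\GCBB{\alphab,1}{\alpha,\beta}$-tree, all you know is that it is~$\gcbal{\beta}$, and~$\betaa < \beta$ whenever~$\beta < \alpha^2$, so~$\balgc(n_{11})$ may well lie in~$(\betaa,\beta]$; in that case the algorithm genuinely performs a rotation inside that subtree, a case your argument explicitly declines to cover.

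The repair is exactly the observation you already make in your second branch: the proof of Lemma~\ref{lem:B-balancing} only uses that the root of the input tree is~$\cbal{\alpha}$ (this is also how the paper invokes it, applying it to the~$\GCBB{\alpha,1}{\alpha,\beta}$-trees rooted at both children of the new root, affected or not). With that relaxation you simply apply the lemma to the~$n_{11}$-subtree as well: if it rotates, the affected nodes, including the new subtree root, become~$(\delta,\betaa)$-balanced with~$\delta \leqslant \alphaa$, and $\balc$ does not increase, so nothing upstream is damaged and the composition goes through unchanged. A minor further remark on your second branch: the extra verification that the post-rotation root is~$\cbal{\alphaa}$ is superfluous, because a rotation is necessarily triggered there, the new root is an affected node, and Lemma~\ref{lem:B-balancing} already states that affected nodes are~$(\delta,\betaa)$-balanced; also note that your claimed child-weight bounds~$\alpha u$ and~$1-y$ are only correct for the simple rotation, not the double one, so it is better to rely on the lemma's conclusion directly.
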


\begin{proof}
If the root of~$\T$ is~$(u,y)$-balanced, the~$\sA\sB$-balancing algorithm does nothing.
If this root is~$\cbal{u}$ but not~$\gcbal{y}$, the~$\sB(y,0)$-balancing algorithm is called; Lemma~\ref{lem:B-balancing} ensures that every affected node will be~$(\alphaa,\betaa)$-balanced and that~$\balc(\T') \leqslant \balc(\T)$.

Finally, if the root of~$\T$ it is not~$\cbal{u}$, we will call the~$\sA(u,\alphaa,0,0)$-algorithm, obtaining a~$\GCB{\alpha,\beta}$-tree~$\T^{(1)}$, whose root will be denoted by~$r$.
Nodes affected by this call are~$r$ and one or two of its children~$r_1$ and~$r_2$.
The~$\sB(\betaa,0)$-balancing algorithm is then applied to the~$\GCBB{\alpha,1}{\alpha,\beta}$-trees rooted at~$r_1$ and~$r_2$;
as a result,~$\T^{(2)}$ is a~$\GCBB{\alpha,1}{\alpha,\beta}$-tree rooted at~$r$, to which the~$\sB(\betaa,0)$-balancing algorithm is applied once more.
No tree to which the~$\sA$- and~$\sB$-balancing algorithms were applied saw its~$\cbal{}$ increase, and thus~$\balc(\T') \leqslant \balc(\T)$. 

Moreover, each node~$m$ affected by the~$\sA\sB$-balancing algorithm is either affected by some call to the~$\sB$-balancing algorithm or the root of some tree that the~$\sB$-balancing algorithm did not modify; the latter case may only concern nodes~$r$,~$r_1$ and~$r_2$.
In both cases,~$m$ ends up being~$(\alphaa,\betaa)$-balanced, which completes the proof.
\end{proof}

Now is the time when we actually describe our update algorithm, which we will apply to a~$\GCB{\alpha,\beta}$-tree in which a leaf~$s$ is to be inserted, or from which~$s$ is to be deleted.
We have two regimes: the first one concerns trees with~$11$ nodes or less, which we can just restructure in order to make them as balanced as possible, and the second regime concerns trees with~$12$ nodes or more.

\begin{routine}[Bottom-up update]\label{alg:BU-rebalancing}
Let~$\alpha$,~$\alphaa$ and~$\beta$ be real numbers given in Proposition~\ref{pro:AB-balancing}.
Our updating algorithm executes the following operations on the tree~$\T$ it receives as input:
\begin{itemize}[itemsep=0.5pt,topsep=0.5pt]
\item if~$\T$ contains~$11$ nodes or less, insert or delete~$s$ (if possible) and make the resulting tree a perfectly balanced tree, i.e., a tree in which the weights of two siblings differ by at most~$1$;
\item if~$\T$ contains~$12$ nodes or more,
\begin{enumerate}[itemsep=0.5pt,topsep=0.5pt]
\item recursively rebalance the (left or right) sub-tree in which~$s$ lies, and then
\item if the update turned out to be non-redundant, update the weight of the root of~$\T$, and then apply the~$\sA\sB(\alpha,\alphaa,\beta,\betaa)$-balancing algorithm to~$\T$.
\end{enumerate}
\end{itemize}
\end{routine}

Due to the recursive flavour of this algorithm, the rebalancing operations are performed bottom-up.
We will see in Section~\ref{sec:TD} how to perform these operations top-down.

\begin{theorem}\label{thm:BU-balancing}
Let~$\alpha$ and~$\beta$ be real numbers such that~$1/\sqrt{2} \leqslant \alpha < 3/4$ and~$\scrB(\alpha) \leqslant \beta \leqslant \alpha^2$.
The tree obtained by using Algorithm~\ref{alg:BU-rebalancing} to insert a leaf in a~$\GCB{\alpha,\beta}$-tree~$\T$ or delete a node from~$\T$ is also a~$\GCB{\alpha,\beta}$-tree.
\end{theorem}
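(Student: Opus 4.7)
I would argue by strong induction on the number~$\bN$ of nodes of~$\T$. Throughout, I denote the original tree by~$\T$, the tree obtained by performing the weight updates on the search path (but before any rotation) by~$\widetilde{\T}$, and the tree returned by Algorithm~\ref{alg:BU-rebalancing} by~$\T'$. I want to show that~$\T'$ is a~$\GCB{\alpha,\beta}$-tree.

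For the base case, when~$\bN \leqslant 11$, the algorithm constructs by hand a perfectly balanced binary search tree with~$\bN - 1$,~$\bN$ or~$\bN + 1$ nodes. I would check, either by a short finite case analysis or by a direct size-based computation, that every perfectly balanced tree with at most~$12$ nodes is~$(\alpha,\beta)$-balanced as soon as~$1/\sqrt{2} \leqslant \alpha$ and~$\scrB(\alpha) \leqslant \beta \leqslant \alpha^2$; since the weight ratio between any two siblings in such a tree is at most~$(\lceil k/2 \rceil + 1)/(\lfloor k/2 \rfloor + 1)$ for~$k \leqslant 11$, this reduces to numerical verifications.

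For the inductive step, assume~$\bN \geqslant 12$. The search path leads into the left or right sub-tree of the root, and by the induction hypothesis the recursive call returns a~$\GCB{\alpha,\beta}$-tree in that slot. Thus the tree~$\widetilde{\T}$ obtained just before step~2 of the algorithm already has all nodes~$(\alpha,\beta)$-balanced \emph{except possibly its root}. The key step is then to show that the root of~$\widetilde{\T}$ is still~$\cbal{\alphab}$ with~$\alphab = 19/24$, so that~$\widetilde{\T}$ qualifies as a~$\GCBB{\alphab,1}{\alpha,\beta}$-tree and Proposition~\ref{pro:AB-balancing} applies. Since~$\T$ was~$\GCB{\alpha,\beta}$, we have~$|n_i| \leqslant \alpha |n|$ for~$i = 1,2$; the update replaces some~$|n_i|$ and~$|n|$ by~$|n_i| \pm 1$ and~$|n| \pm 1$, so the new child-balance is bounded by~$(\alpha |n| + 1)/(|n|+1)$ in the insertion case and by~$\alpha |n|/(|n|-1)$ in the deletion case. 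Both quantities are monotone in~$|n|$ and tend to~$\alpha < 3/4$, so the inequality~$\leqslant 19/24$ reduces to a lower bound on~$|n|$, which I would verify holds as soon as~$\bN \geqslant 12$. Once this is established, Proposition~\ref{pro:AB-balancing} ensures that step~2 turns~$\widetilde{\T}$ into a~$\GCBB{\alphaa,\betaa}{\alpha,\beta}$-tree~$\T'$, hence a fortiori a~$\GCB{\alpha,\beta}$-tree, closing the induction.

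The step I expect to require the most care is not the call to Proposition~\ref{pro:AB-balancing}, which is almost plug-and-play, but rather the transition from~$\T$ to~$\widetilde{\T}$: showing that a single~$\pm 1$ change of weights along the search path cannot push the root's \cbal{} above~$\alphab = 19/24$ when~$\bN \geqslant 12$ and~$\alpha$ ranges over the whole interval~$[1/\sqrt{2},3/4)$. The deletion case is more delicate, because the denominator shrinks rather than grows, and it is what ultimately forces the size threshold chosen for the base case. A complementary verification is the base case itself, where I must check that the particular perfectly balanced trees produced by the algorithm are indeed~$\GCB{\alpha,\beta}$ for every admissible pair~$(\alpha,\beta)$; this is elementary but tedious, and I would likely present it as a single table of worst-case ratios for each~$\bN \leqslant 12$.
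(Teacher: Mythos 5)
Your plan is structurally the same as the paper's proof: induction on the size of~$\T$, base case handled by the perfect rebalancing of small trees, inductive step reduced to showing that after the recursive update the root is still~$\cbal{\alphab}$ with~$\alphab = 19/24$, and then a direct appeal to Proposition~\ref{pro:AB-balancing}. The base case and the insertion bound are fine: for an insertion, $(\alpha t + 1)/(t+1) \leqslant 19/24$ indeed holds uniformly in~$\alpha < 3/4$ as soon as the weight~$t$ is at least~$5$, since~$19 - 24\alpha > 1$.

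The genuine gap is in the deletion case. Your bound~$\alpha t/(t-1) \leqslant 19/24$ is equivalent to~$(19-24\alpha)\,t \geqslant 19$, and since~$19-24\alpha$ can be arbitrarily close to~$1$ as~$\alpha \to 3/4$, this forces~$t \geqslant 19$ in the worst case. But Algorithm~\ref{alg:BU-rebalancing} already recurses (and hence needs the~$\cbal{\alphab}$ property at the root) for trees with~$12$ nodes, i.e.\ weight~$t = 13$: for instance with~$\alpha = 0.749$ and~$t = 13$ your bound gives~$0.749 \cdot 13/12 \approx 0.811 > 19/24 \approx 0.792$. You cannot repair this by ``choosing the size threshold'', because the threshold ($11$ nodes or fewer) is fixed by the algorithm whose correctness you are proving. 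The missing idea, which the paper uses, is integrality of the weights: since~$4 t_1 \leqslant 4\alpha t < 3t$ and both sides are integers, one gets~$4 t_1 \leqslant 3t - 1$, hence after a deletion in the other sub-tree the surviving child satisfies~$t_1/(t-1) \leqslant (3t'+2)/(4t') = \alphab t' - (t'-12)/24$ with~$t' = t-1$, which is at most~$\alphab t'$ exactly when~$t' \geqslant 12$, i.e.\ when the original tree had at least~$12$ nodes. With this refinement (and the easy companion bound~$t_2 - 1 \leqslant \alpha t - 1 \leqslant \alpha t' \leqslant \alphab t'$ for the child that lost a node) your induction closes as intended; without it, the deletion step fails for~$\alpha$ near~$3/4$ on trees of weight~$13$ to~$18$.
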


\begin{proof}
Our proof is a variant of the proof of~\cite{BM80}.

Let~$\T'$ and~$\T''$ be the trees obtained just after step~1 and step~2 of Algorithm~\ref{alg:BU-rebalancing}, respectively.
We will prove by induction on~$|\T|$ that~$\T''$ is a~$\GCB{\alpha,\beta}$-tree.
The result being correct by construction when~$|\T| \leqslant 12$, we assume that~$|\T| \geqslant 13$.

Let~$\T_1$ and~$\T_2$ be the left and right sub-trees of~$\T$, and let~$t = |\T|$,~$t_1 = |\T_1|$ and~$t_2 = |\T_2|$.
Similarly, let~$\T'_1$ and~$\T'_2$ be the left and right sub-trees of~$\T'$, and let~$t' = |\T'|$~$t'_1 = |\T'_1|$ and~$t'_2 = |\T'_2|$.
Without loss of generality, we assume that the update was non-redundant, and that the tree~$\T$ was altered by adding~$s$ to~$\T_1$ or by deleting~$s$ from~$\T_2$.
This means that either~$(t',t'_1,t'_2) = (t+1,t_1+1,t_2)$ or~$(t'_1,t'_2) = (t-1,t_1,t_2-1)$.
In both cases, the induction hypothesis ensures that~$\T'_1$ and~$\T'_2$ are~$\GCB{\alpha,\beta}$-trees.

We prove now that the root of~$\T'$ is~$\cbal{\alphab}$, thereby allowing us to use Proposition~\ref{pro:AB-balancing} and completing the induction.
Indeed, let~$t = t_1 + t_2$ and~$t' = t'_1 + t'_2$.
Since~$4 t_1 \leqslant 4 \alpha t < 3 t$, we know that~$4 t_1 \leqslant 3 t-1$.

Thus, in case of an insertion,~$t'_1 = t_1+1 \leqslant (3 t+3)/4 = 3 t' / 4 \leqslant \alphab t'$ and~$t'_2 = t_2 \leqslant \alpha t \leqslant \alphab t'$.
Similarly, in case of a deletion,~$t'_1 = t_1 \leqslant (3t-1)/4 = (3t'+2)/4 = \alphab t' - (t'-12)/24 \leqslant \alphab t'$, whereas~$t'_2 = t_2 - 1 \leqslant \alpha t - 1 \leqslant \alpha t' \leqslant \alphab t'$.
\end{proof}

\section{Top-down algorithm}\label{sec:TD}

In this section, we propose a top-down algorithm for inserting an element into (or deleting an element from) a weight-balanced tree.
This algorithm is valid whenever~$1/\sqrt{2} \leqslant \alpha < 3/4$ and~$\scrB(\alpha) \leqslant \beta \leqslant \alpha^2$, thereby completing the algorithm of Lai and Wood~\cite{LW93}, which works only when~$3/4 \leqslant \alpha \leqslant 9/11$ and~$\beta = \alpha^2$.
This new algorithm is inspired by theirs: we wish to perform a top-down restructuring pass while adjusting weight information.
If the update is redundant, a second top-down pass will be needed to update this information, but no further restructuring will be needed.

More precisely, we aim at having a top-down algorithm that requires considering only a constant number of tree nodes at each step; this number may not depend on the parameters such as~$\alpha$ and~$\beta$.
Moreover, we still wish, by using this algorithm, to perform only a constant number of rotations per update; this number \emph{may} depend on~$\alpha$,~$\beta$ and other parameters.
Consequently, the idea of the algorithm, which will be made more precise in Algorithm~\ref{alg:TD-updating}, is as follows:
\begin{enumerate}[itemsep=0.5pt,topsep=0.5pt]
\item If the tree is large enough, we may rebalance it to make sure it will remain balanced even if we recursively update one of its children.
Like in Section~\ref{sec:BU}, rotations performed in this phase should be so efficient that only an amortised constant number of rotations per update will be useful.

\item If the tree is small enough, we make it as balanced as possible, the notion of being top-down being void in this case.
\end{enumerate}

A key object towards defining and proving the correctness of our algorithm is the notion of \emph{robust} grandchildren-balanced trees.

\begin{definition}\label{def:GBR-trees}
Let~$n$ be a node of a binary tree~$\T$, let~$n_1$ and~$n_2$ be its children, and let~$n_{11}$,~$n_{12}$,~$n_{21}$ and~$n_{22}$ be its grandchildren.
Given real numbers~$\alpha$ and~$\beta$, we say that~$n$ is \emph{robustly}~$\cbal{\alpha}$ when
\[\max\{|n_1|,|n_2|\}+1 \leqslant \alpha(|n|+1) \text{ and } \max\{|n_1|,|n_2|\} \leqslant \alpha(|n|-1);\]
that~$n$ is \emph{robustly}~$\gcbal{\beta}$ when
\[\max\{|n_{11}|,|n_{12}|,|n_{21}|,|n_{22}|\}+1 \leqslant \beta(|n|+1) \text{ and } \max\{|n_{11}|,|n_{12}|,|n_{21}|,|n_{22}|\} \leqslant \beta(|n|-1);\]
and that~$n$ is \emph{robustly}~$(\alpha,\beta)$-balanced when~$n$ is robustly~$\cbal{\alpha}$ and robustly~$\gcbal{\beta}$.

Finally, we say that a tree~$\T$ is a~$\RGCBB{x,y}{\alpha,\beta}$-tree when its root is robustly~$(x,y)$-balanced and its other nodes are~$(\alpha,\beta)$-balanced.
For the sake of concision, we simply say that~$\T$ is a~$\RGCB{\alpha,\beta}$-tree or a~$\RGCB{\alpha}$-tree when~$(x,y) = (\alpha,\beta)$ or~$(x,y,\beta) = (\alpha,1,1)$, respectively.
\end{definition}

Inequalities~$|n_i| + 1 \leqslant \alpha(|n|+1)$ and~$|n_i| \leqslant \alpha(|n|-1)$ ensure that, even if a leaf is inserted into or deleted from~$\T$, the node~$n$ will remain~$\cbal{\alpha}$.
Inequalities~$|n_{ij}| + 1 \leqslant \beta(|n|+1)$ and~$|n_{ij}| \leqslant \beta(|n|-1)$ serve the same purpose, but for ensuring that~$n$ remains~$\gcbal{\beta}$.
Finally, each tree node is robustly~$(1,1)$-balanced.

Setting~$\sR(t) = \min\{-t,t-1\}$ and observing that~$\min\{t(|n|+1)-1,t(|n|-1)\} = t |n| + \sR(t)$ for all real numbers~$t$ provides us with a more succinct characterisation of~$(\alpha,\beta)$-balanced nodes, which will require giving non-zero values to the parameters~$w$,~$x$ and~$z$:
the node~$n$ is robustly~$(\alpha,\beta)$-balanced if and only if~$\max\{|n_1|,|n_2|\} \leqslant \alpha |n| + R(\alpha)$ and~$\max\{|n_{11}|,|n_{12}|,|n_{21}|,|n_{22}|\} \leqslant \beta|n| + \sR(\beta)$.

We can now group several calls to the~$\sA$- and~$\sB$-balancing algorithms into a so-called~$\sR\sA\sB$-balancing algorithm, whose aim is to make a tree root robustly balanced.

\begin{routine}[$\sR\sA\sB$-balancing]\label{alg:RAB}
Given real numbers~$u$,~$v$,~$y$ and~$\hat{y}$, the~$\sR\sA\sB(u,v,y,\hat{y})$-balancing algorithm executes the following operations, in this order, on the tree~$\T$ it receives as input, thus transforming it into a tree~$\T'$:
\begin{enumerate}[itemsep=0.5pt,topsep=0.5pt]
\item the~$\sA(u,v,\sR(u),\sR(v))$-balancing algorithm transforms~$\T$ into a tree~$\T^{(1)}$;
\item if~$\T \neq \T^{(1)}$,
\begin{itemize}[itemsep=0.5pt,topsep=0.5pt]
\item the~$\sA\sB(v,v,\hat{y},\hat{y})$-balancing algorithm is applied (in parallel) to the left and right sub-trees of~$\T^{(1)}$, which transforms~$\T^{(1)}$ into a tree~$\T^{(2)}$;
\item the~$\sB(\hat{y},\sR(\hat{y}))$-balancing algorithm transforms~$\T^{(2)}$ into a tree~$\T^{(3)}$;
\item the~$\sA\sB(v,v,\hat{y},\hat{y})$-balancing algorithm is applied (in parallel) to the left and right sub-trees of~$\T^{(3)}$, which yields the tree~$\T'$;
\end{itemize}
\item otherwise, the~$\sB(y,\sR(y))$-balancing algorithm transforms~$\T$ into a tree~$\T^{(3)}$, and then,
\begin{itemize}[itemsep=0.5pt,topsep=0.5pt]
\item if~$\T \neq \T^{(3)}$, the~$\sA\sB(v,v,\hat{y},\hat{y})$-balancing algorithm is applied (in parallel) to the left and right sub-trees of~$\T^{(3)}$, which yields the tree~$\T'$;
\item otherwise,~$\T' = \T$.
\end{itemize}
\end{enumerate}
\end{routine}

The idea is similar to that of the plain~$\sA$- and~$\sB$-balancing algorithm, but we wish to transform a tree whose root is not robustly~$(u,y)$-balanced into a tree~$\T'$ whose root will be robustly~$(v,\hat{y})$-balanced.
Calling the~$\sA(u,v,\sR(u),\sR(v))$-,~$\sB(\hat{y},\sR(\hat{y}))$- and~$\sB(y,\sR(y))$-balancing algorithms achieves this goal, while possibly damaging the balance of the children of the root of~$\T'$;
we solve this issue by the~$\sA\sB$-balancing algorithm to the sub-trees rooted at these children.

\begin{restatable}{proposition}{rabbalancing}\label{pro:RAB-balancing}
Let~$\alpha$,~$\alphaa$,~$\beta$ and~$\betaa$ be real numbers given in Proposition~\ref{pro:AB-balancing}.
The~$\sR\sA\sB(\alpha,\alphaa,\beta,\betaa)$-balancing algorithm transforms each~$\GCBB{\alpha,1}{\alpha,\beta}$-tree~$\T$ with~$30$ nodes or more and whose root is not robustly~$(\alpha,\beta)$-balanced into a~$\RGCB{\alpha,\beta}$-tree~$\T'$ such that~$\balc(\T') \leqslant \balc(\T)$ and whose affected nodes are all~$(\alphaa,\betaa)$-balanced.
\end{restatable}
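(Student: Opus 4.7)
The plan is to trace through the three possible execution paths of Algorithm~\ref{alg:RAB} and verify the desired invariants along the way, using extended versions of Lemmas~\ref{lem:A-balancing} and~\ref{lem:B-balancing} which allow for the nonzero error margins $\sR(\alpha)$, $\sR(\alphaa)$, $\sR(\beta)$ and $\sR(\betaa)$. These extended versions would be proved just as in the excerpt, by shifting each relevant weight $|n|$ by $\pm 1$ to account for the margin terms, and all the inequalities labelled (1.1)--(2.6) in the proof of Lemma~\ref{lem:A-balancing}, together with their analogues for the $\sB$-algorithm, still go through because they were originally strict.

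Before entering the case analysis, I would dispose of the trivial branch in which $\T' = \T$. If neither the initial $\sA(\alpha,\alphaa,\sR(\alpha),\sR(\alphaa))$-call nor the $\sB(\beta,\sR(\beta))$-call of step~3 fires, then by case~5 of each routine and the succinct characterisation of robust balance recalled just before Algorithm~\ref{alg:RAB}, the root of $\T$ is already robustly $(\alpha,\beta)$-balanced, contradicting the hypothesis. Hence either the initial $\sA$-call fires or the $\sB$-call of step~3 fires. In the first case, the extended Lemma~\ref{lem:A-balancing} guarantees that $\T^{(1)}$ is a $\GCB{\alpha,\beta}$-tree whose root is robustly $\cbal{\alphaa}$, whose affected nodes are $\cbal{\alphaa}$, and which satisfies $\balc(\T^{(1)}) \leqslant \balc(\T)$; the affected children may, however, have lost their grandchild-balance, which I would repair by applying Proposition~\ref{pro:AB-balancing} in parallel to the two sub-trees of $\T^{(1)}$, yielding $\T^{(2)}$. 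The subsequent call to $\sB(\betaa,\sR(\betaa))$ then enforces robust $\gcbal{\betaa}$ at the root while preserving robust $\cbal{\alphaa}$, thanks to the $\balc(\T') \leqslant \balc(\T)$ monotonicity of Lemma~\ref{lem:B-balancing}, which carries over to its extended form; a final $\sA\sB$-call restores $(\alphaa,\betaa)$-balance at the roots of the sub-trees of $\T^{(3)}$. The second case is shorter: the root of $\T$ is already robustly $\cbal{\alpha}$, the $\sB$-call makes it robustly $\gcbal{\betaa}$ without spoiling $\cbal{\alpha}$, and the single $\sA\sB$-call on the sub-trees concludes.

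The main obstacle is purely technical. Beyond re-verifying the perturbed inequalities mentioned above, I have to check that each of the successive sub-calls operates on a tree large enough for all of its affected nodes to be genuine tree nodes rather than virtual empty sub-trees of weight $1/2$ or less. Tracing this requirement through the four layers of sub-calls (the outer $\sA$-, $\sB$- or $\sA\sB$-call, together with the $\sA\sB$-calls, which themselves unfold into an $\sA$- and $\sB$-balancing phase applied to each sub-tree) is precisely what pins down the threshold of $30$ nodes in the proposition: at that size, every tree on which a rotation would be performed has weight at least $4$, matching the base case already identified in the proof of Lemma~\ref{lem:A-balancing}, even after the weight perturbations induced by the nonzero margins.
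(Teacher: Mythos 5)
Your high-level architecture does coincide with the paper's: split according to which top-level call fires, use robust analogues of Lemmas~\ref{lem:A-balancing} and~\ref{lem:B-balancing} to handle the root, and repair the sub-trees via Proposition~\ref{pro:AB-balancing}. The genuine gap is the claim that these robust analogues follow ``just as in the excerpt'' because the original inequalities ``were originally strict''. That is exactly where the argument breaks: the key relation (2.1\,+\,2.5) in the proof of Lemma~\ref{lem:A-balancing} is an \emph{equality} (it is how~$\alphaa$ is chosen there), so it leaves no slack whatsoever to absorb the margins~$\sR(\alpha)$ and~$\sR(\alphaa)$. Consequently one cannot conclude, as you do, that the nodes affected by the~$\sA(\alpha,\alphaa,\sR(\alpha),\sR(\alphaa))$-call are~$\cbal{\alphaa}$; the paper's Lemma~\ref{lem:RA-balancing} proves only that the new root is robustly~$\cbal{\alphaa}$ while the sub-trees are merely~$\GCBB{\alphab}{\alpha}$-trees with~$\alphab = 19/24$, and this deliberately weakened conclusion is precisely why Algorithm~\ref{alg:RAB} contains the extra~$\sA\sB$-calls on the sub-trees. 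Your plan happens to survive with the weaker bound (Proposition~\ref{pro:AB-balancing} accepts~$\GCBB{\alphab,1}{\alpha,\beta}$ inputs), but you neither establish that weaker bound nor could you establish the stronger one by the shift argument you describe.

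The robust~$\sB$-step is likewise not a perturbation of Lemma~\ref{lem:B-balancing}: in the unperturbed proof, the grandchild-balance of affected nodes comes from the shortcut ``$\cbal{\delta}$ with~$\cbal{\alpha}$ children'', which no longer yields the required \emph{robust} grandchild-balance of the new root, nor the monotonicity~$\balc(\T') \leqslant \balc(\T)$, once the margins are present; this is why the paper's Lemma~\ref{lem:RB-balancing} proves three separate families of inequalities (tags a, b, c) comparing each affected node with its parent, its grandparent, and with~$|n_1|$. Finally, the role of the 30-node threshold is mischaracterised: it is not mainly about affected nodes being genuine tree nodes, but about the perturbed inequalities themselves, which only hold above explicit weight thresholds (for instance the robust analogue of~(2.6) requires weight at least~$31$, and that of~(4.6) at least~$23$); verifying these quantitative thresholds is the actual content behind the constant~$30$. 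In short, the proposal reproduces the correct skeleton of the paper's proof but leaves unproved the two lemmas that constitute essentially all of its substance, and at one point rests on a claim whose intended justification (strictness of the original inequalities) fails.
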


We finally describe our top-down updating algorithm, which we will apply to a~$\GCB{\alpha,\beta}$-tree in which a leaf~$s$ is to be inserted, or from which~$s$ is to be deleted.
Like the algorithm of Lai and Wood~\cite{LW93}, this algorithm is not purely top-down, because (i) if we wanted to delete an internal node~$a$, as illustrated in Figure~\ref{fig:in-out},
it requires maintaining a pointer to~$a$, whose key will later be replaced by another node key, and~(ii) if the update turned out to be redundant, a second top-down pass will be required to cancel every weight update we performed.
Alternative representations of weight-balanced trees or~$\GCB{\alpha,\beta}$-trees allow omitting this second pass:
instead of storing the weight~$|n|$ at each node~$n$, we should store one of the weights~$|n_1|$ or~$|n_2|$ as well as a bit indicating which weight we stored.

\begin{routine}[Top-down update]\label{alg:TD-updating}
Let~$\alpha$,~$\alphaa$,~$\beta$ and~$\betaa$ be real numbers given in Proposition~\ref{pro:AB-balancing}.
Our updating algorithm executes the following operations on the tree~$\T$ it receives as input:
\begin{itemize}[itemsep=0.5pt,topsep=0.5pt]
\item if~$\T$ contains~$29$ nodes or less, insert or delete~$s$ (if possible) and make the resulting tree a perfectly balanced tree, i.e., a tree in which the weights of two siblings differ by at most~$1$;
\item if~$\T$ contains~$30$ nodes or more,
\begin{enumerate}[itemsep=0.5pt,topsep=0.5pt]
\item apply the~$\sR\sA\sB(\alpha,\alphaa,\beta,\betaa)$-balancing algorithm to~$\T$, then
\item update the weight of the root of~$\T$, and finally
\item recursively update the (left or right) sub-tree of~$\T$ that contains~$s$.
\end{enumerate}
\end{itemize}
\end{routine}

\begin{restatable}{theorem}{tdbalancing}\label{thm:TD-balancing}
Let~$\alpha$ and~$\beta$ be real numbers such that~$1/\sqrt{2} \leqslant \alpha < 3/4$ and~$\scrB(\alpha) \leqslant \beta \leqslant \alpha^2$.
The tree obtained by using Algorithm~\ref{alg:TD-updating} to insert a leaf in a~$\GCB{\alpha,\beta}$-tree~$\T$ or delete a node from~$\T$ is also a~$\GCB{\alpha,\beta}$-tree.
\end{restatable}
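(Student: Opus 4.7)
The plan is to proceed by strong induction on the number of nodes $|\T|$. The base case $|\T| \leqslant 29$ follows from a finite check: for $\alpha \geqslant 1/\sqrt{2}$ and $\scrB(\alpha) \leqslant \beta \leqslant \alpha^2$, every perfectly balanced binary tree with at most~$30$ nodes is $\GCB{\alpha,\beta}$-balanced, which one can verify by computing balance ratios for the few shapes involved (the worst ratios occur at the smallest such trees and stay well below $1/\sqrt{2}$).

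For the inductive step $|\T| \geqslant 30$, the argument follows the algorithm step by step. Since~$\T$ is a $\GCB{\alpha,\beta}$-tree, its root is a fortiori $(\alpha,1)$-balanced, so Proposition~\ref{pro:RAB-balancing} applies. Either the root is already robustly $(\alpha,\beta)$-balanced and $\sR\sA\sB$ leaves~$\T$ unchanged, or Proposition~\ref{pro:RAB-balancing} turns~$\T$ into a $\RGCB{\alpha,\beta}$-tree; in both cases, after step~1 we have a tree $\widetilde{\T}$ whose root~$n$ is robustly $(\alpha,\beta)$-balanced and whose sub-trees $\widetilde{\T}_1, \widetilde{\T}_2$ rooted at $n_1, n_2$ are $\GCB{\alpha,\beta}$-trees. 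Step~2 modifies the stored weight at~$n$ optimistically, and step~3 recurses on the sub-tree $\widetilde{\T}_i$ in which~$s$ lies; by the induction hypothesis (applicable because $|\widetilde{\T}_i| < |\T|$ and $\widetilde{\T}_i$ is $\GCB{\alpha,\beta}$-balanced), the returned tree $\widetilde{\T}_i'$ is again $\GCB{\alpha,\beta}$-balanced.

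The crux is then to verify that~$n$ itself is still $(\alpha,\beta)$-balanced at the end. A leaf insertion or deletion on the path below~$n$ modifies only the weights $|n|, |n_i|$, and one of $|n_{i1}|, |n_{i2}|$, all by the same $\pm 1$. Writing $|m|'$ for the updated weights and $|n|' = |n| \pm 1$, the four inequalities
\[|n_i|+1 \leqslant \alpha(|n|+1), \quad |n_i| \leqslant \alpha(|n|-1), \quad |n_{ij}|+1 \leqslant \beta(|n|+1), \quad |n_{ij}| \leqslant \beta(|n|-1)\]
from Definition~\ref{def:GBR-trees} yield $|n_i|' \leqslant \alpha |n|'$ and $|n_{ij}|' \leqslant \beta |n|'$ (picking the appropriate inequality according to whether the update is an insertion or a deletion). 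The sibling $n_{3-i}$ and the two unaffected grandchildren cause no trouble: their weights do not change, while $|n|$ either grows (insertion, trivially in their favour) or shrinks by~$1$ (deletion, in which case the same robust inequalities applied to them close the gap). Combined with $\widetilde{\T}_i'$ and $\widetilde{\T}_{3-i}$ being $\GCB{\alpha,\beta}$-trees, this proves that the whole tree is $\GCB{\alpha,\beta}$-balanced.

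The main obstacle is bookkeeping the timing of weight modifications. The optimistic update of step~2 is always performed \emph{after} the call to $\sR\sA\sB$ at the same level, so every $\sR\sA\sB$ call reads the genuine weights that hold upon entry to its recursion level; its correctness conclusion is therefore justified. On a redundant update, a second top-down pass cancels the optimistic weight increments/decrements, restoring consistent weights, while the restructurings performed by $\sR\sA\sB$ during the first pass remain valid since each one turned a $\GCB{\alpha,\beta}$-tree into a $\GCB{\alpha,\beta}$-tree. Beyond this bookkeeping and the routine case analysis (insertion vs.\ deletion, left vs.\ right child) for the four robustness inequalities, the proof is essentially mechanical.
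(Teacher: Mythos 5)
There is a genuine gap in your inductive step, at exactly the point you call the crux. You argue that after the recursive call on the sub-tree containing~$s$, ``only the weights $|n|$, $|n_i|$ and one of $|n_{i1}|,|n_{i2}|$ change, all by the same $\pm 1$'', and you then close the argument with the four robustness inequalities. But the recursive call does not merely shift weights by~$\pm 1$: it applies the $\sR\sA\sB$-balancing algorithm at the root of that sub-tree (and deeper), and, at the bottom, it \emph{perfectly rebalances} a sub-tree of up to~$29$ nodes. These rotations and the rebuild change which nodes are the child and the grandchildren of~$n$, and the new grandchildren need not have weights within~$1$ of the old ones; so the robustness inequalities for the \emph{old} grandchildren say nothing about them, and your argument for $n$ remaining $\gcbal{\beta}$ collapses. (The same objection applies to your redundant-update remark: knowing that each restructuring turned a $\GCB{\alpha,\beta}$-tree into a $\GCB{\alpha,\beta}$-tree does not control the grandchild-balance of the node \emph{above} the restructured sub-tree.)

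Two further ingredients are needed, and they are exactly what the paper's proof supplies. First, the recursion must be shown not to increase the child-balance of the updated sub-tree, i.e.\ $\balc(\T'_2) \leqslant \balc(\T_2)$; this is why Lemmas~\ref{lem:RA-balancing}--\ref{lem:RB-balancing} and Proposition~\ref{pro:RAB-balancing} all carry the clause $\balc(\T') \leqslant \balc(\T)$, and why the paper's inductive claim (stated for the tree obtained after the whole restructuring pass, with the weight updates conceptually postponed to a second pass) tracks this invariant along the path $n_0,\ldots,n_{k-1}$ of robustly $(\alpha,\beta)$-balanced nodes: combined with robustness, it is what bounds the weights of the \emph{new} grandchildren of each path node. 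Second, this invariant fails for the node $n_{k-1}$ sitting just above the perfectly rebuilt base-case sub-tree, whose new grandchild there can be an entirely new node of weight up to $(|n_k|_\star+1)/2$; the paper handles this with the dedicated inequality $\alpha(k-1)+2 < 2\beta(k-1)$ (valid because $2\beta \geqslant 2\scrB(\alpha) > \alpha$ and $|n_{k-1}| \geqslant 31$), yielding $|m|_\star \leqslant (\alpha(|n_{k-1}|-1)+2)/2 \leqslant \beta(|n_{k-1}|-1) \leqslant \beta|n_{k-1}|_\star$. Your proposal contains neither of these two arguments, so as written it does not establish Theorem~\ref{thm:TD-balancing}.
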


\section{Complexity analysis}\label{sec:complexity}

In this final section, we prove that Algorithms~\ref{alg:BU-rebalancing} and~\ref{alg:TD-updating} proposed above perform an amortised constant number of rotations per attempted insertion or deletion.

The idea consists in evaluating the sums of child- and grandchild-balances of all tree nodes.
Indeed, inserting or deleting a leaf~$s$ will slightly damage the child- and grandchild balances of the ancestors of~$s$, which may increase the sum of these balances by no more than a constant.
In the opposite direction, whenever the~$\sA\sB$- or~$\sR\sA\sB$-balancing algorithm changes a sub-tree~$\T$, some node will see its child-balance decrease by approximately~$\alpha - \alphaa$, or its grandchild-balance decrease by approximately~$\beta - \betaa$;
other node balances might be damaged in the process, but not to the point of exceeding~$(\alphaa,\betaa)$.
Thus, not too many changes may be performed.

These ideas lead to the following result.

\begin{restatable}{theorem}{bucomplexity}\label{thm:BU-complexity}
Let~$\alpha$ and~$\beta$ be real numbers such that~$1 / \sqrt{2} < \alpha < 3/4$ and~$\scrB(\alpha) < \beta \leqslant \alpha^2$.
Let~$\eta$ and~$\varepsilon$ be given by~$\eta = \alpha - 1/\sqrt{2}$ and~$\varepsilon = \min\{\beta - \scrB(\alpha),\alpha^2-\beta\}$ if~$\scrB(\alpha) < \beta < \alpha^2$, or~$\varepsilon = 1$ if~$\beta = \alpha^2$.

Then, let~$\T_0, \T_1, \ldots, \T_k$ be~$\GCB{\alpha,\beta}$-trees defined as follows:~$\T_0$ is the empty tree and, for all~$\ell \geqslant 0$, the tree~$\T_{\ell+1}$ is obtained by inserting a leaf in~$\T_\ell$ or deleting a leaf from~$\T_\ell$, using either Algorithm~\ref{alg:BU-rebalancing} or~\ref{alg:TD-updating} to do so.
Gradually transforming~$\T_0$ into~$\T_k$ via such steps requires only~$\O(k/\eta +k/\varepsilon)$ rotations.
\end{restatable}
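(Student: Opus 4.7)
The plan is an amortised analysis based on a potential function that measures the ``excess'' child- and grandchild-balances of the tree nodes above carefully chosen thresholds. Define
\[\Phi(\T) = \frac{1}{\alpha - \alphaa} \sum_{n \in \T} \max\{0, \balc(\T,n) - \alphaa\} + \frac{1}{\beta - \betaa} \sum_{n \in \T} \max\{0, \balgc(\T,n) - \betaa\},\]
where the sums range over all nodes of $\T$. Since $\Phi \geq 0$ and $\Phi(\T_0) = 0$, it suffices to establish two facts: \emph{(a)}~every leaf insertion or deletion in a $\GCB{\alpha,\beta}$-tree increases $\Phi$ by at most $O(1/\eta + 1/\varepsilon)$ before rebalancing is performed, and \emph{(b)}~every rotation triggered during rebalancing decreases $\Phi$ by at least $1$. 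Telescoping over $k$ updates then yields the bound $O(k/\eta + k/\varepsilon)$.

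For~\emph{(a)}, observe that inserting or deleting a leaf only modifies the weights of the ancestors of the leaf, each by $\pm 1$. A routine computation shows that such a $\pm 1$ change in $|m|$ alters both $\balc(m)$ and $\balgc(m)$ by $O(1/|m|)$. By Theorem~\ref{thm:height}, the weights along any root-to-leaf path grow at rate at least $\beta^{-1/2}$ per level, so summing $1/|m|$ over all ancestors telescopes into a geometric series bounded by a constant. The raw sums $\sum \balc$ and $\sum \balgc$ therefore increase by $O(1)$, and multiplying by the prefactors yields the claim, provided we check that $\alpha - \alphaa = \Theta(\eta)$ and $\beta - \betaa = \Theta(\varepsilon)$. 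This last step is a direct computation: the formulas of Proposition~\ref{pro:AB-balancing} satisfy $\alphaa = \alpha$ at $\alpha = 1/\sqrt{2}$ and $\betaa = \beta$ at $\beta \in \{\scrB(\alpha), \alpha^2\}$, with differentiable dependence, so each difference scales linearly in the distance to its critical value.

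For~\emph{(b)}, a rotation alters only the constantly many nodes ``inside'' it; the balances of all external nodes are preserved because rotations leave the set of descendants of every external ancestor unchanged. A rotation is triggered exactly when some node's $\cbal{}$ exceeds $\alpha$ (resp. its $\gcbal{}$ exceeds $\beta$), so before the rotation this node contributes strictly more than $1$ to the corresponding rescaled sum in $\Phi$. By Propositions~\ref{pro:AB-balancing} and~\ref{pro:RAB-balancing}, all affected nodes end up $(\alphaa, \betaa)$-balanced after the rotation and thus contribute $0$; moreover, the guarantee $\balc(\T') \leq \balc(\T)$ prevents the balances of external ancestors from increasing. Each rotation therefore reduces $\Phi$ by at least $1$. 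The $O(1)$ side-rotations that $\sA\sB$ or $\sR\sA\sB$ triggers at the children of the main affected node are amortised against the main rotation in the same way.

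The chief obstacle lies in Algorithm~\ref{alg:TD-updating}, where the ``robust'' balance condition of Definition~\ref{def:GBR-trees} involves additive slacks $\sR(\alpha), \sR(\beta)$ of order $O(1/|n|)$. Handling this requires two things: first, restricting the recursive invocation of $\sR\sA\sB$-balancing to trees with $|\T| \geq 30$, so that these $O(1/|n|)$ slacks are dominated by the constant gaps $\alpha - \alphaa$ and $\beta - \betaa$; second, verifying that the triggering inequalities for rotations inside $\sR\sA\sB$ still cause at least a unit drop in $\Phi$. For the small-tree regime ($|\T| \leq 29$), each update costs only $O(1)$ rotations by brute force, contributing $O(k)$ in total, which is absorbed into the final bound.
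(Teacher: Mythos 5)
Your overall strategy is the same as the paper's: an amortised/potential argument over the child- and grandchild-balances of all nodes, with a per-update increase of $\O(1)$ (scaled by the gaps $\alpha-\alphaa$ and $\beta-\betaa$) and a guaranteed decrease whenever an $\sA\sB$- or $\sR\sA\sB$-call triggers rotations (the paper implements this with per-node counters $\scc(n)$ reset on rotation rather than the true balances, but that difference is cosmetic). However, two of your steps fail as stated. First, the boundary case $\beta=\alpha^2$, which the theorem explicitly allows with $\varepsilon=1$: there one computes $\betaa=\alpha^2=\beta$, so your prefactor $1/(\beta-\betaa)$ is infinite and the claim ``$\beta-\betaa=\Theta(\varepsilon)$'' is false ($\beta-\betaa=0$ while $\varepsilon=1$). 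The paper handles this case separately by noting that when $\beta=\alpha^2$ every (robustly) \cbal{\alpha} node is automatically (robustly) \gcbal{\beta}, so the grandchild-rotation steps never fire and the second term of the potential can simply be dropped; your proposal needs this (or an equivalent) observation, otherwise the statement is unproved on part of its parameter range.

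Second, your treatment of the top-down algorithm is wrong where it matters most. You claim that restricting $\sR\sA\sB$ to trees with at least $30$ nodes makes the robustness slack $\O(1/|n|)$ ``dominated by the constant gaps $\alpha-\alphaa$ and $\beta-\betaa$''. But these gaps are not bounded away from $0$: near the critical point they are $\Theta(\eta)$ and $\Theta(\varepsilon)$, which is exactly the interesting regime and can be far smaller than $1/30$. A node of weight, say, $|n|\approx 50$ can fail robust $\cbal{\alpha}$-balance with $\balc(n)\approx\alpha-\alpha/|n|<\alphaa$, so the triggered rotation need not decrease $\Phi$ at all, and your claim (b) breaks for all trees of weight between $30$ and roughly $1/\eta+1/\varepsilon$. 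The paper closes this hole with a separate counting argument: calls on subtrees of weight below $\Theta(1/\delta)$ (with $\delta=\min\{\alpha-\alphaa,\beta-\betaa,1/62\}$) are bounded by $\O(1/\delta)$ per update because the subtree weights strictly decrease along the recursion path, contributing $\O(k/\delta)=\O(k/\eta+k/\varepsilon)$ rotations in total, while only calls on larger subtrees are charged to the potential (and those decrease it by a constant fraction of $1$, not by $1$, since the slack always consumes part of the gap). Without this intermediate-size analysis, your bound does not follow for Algorithm~\ref{alg:TD-updating}; the bottom-up algorithm, which uses exact thresholds, is not affected by this issue.
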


\begin{proof}[Proof outline]
Each node~$n$ is given a real-valued counter~$\scc(n)$ that receives the value $0$ when~$n$ is created (i.e., inserted in a tree) or affected by a rotation, and is increased by $2/|n|$ when a descendant of $n$ is about to be created or deleted.

Finally, let~$\sC$ be the sum of all these counters $\scc(n)$, and let~$\delta = \min\{\alpha - \alphaa,\beta - \betaa,1/62\}/2$.
One can prove that the sum~$\sC$ increases by no more than $16$ when a leaf is inserted or deleted, and decreases by at least $\delta$ when the $\sA\sB$- or $\sR\sA\sB$-balancing algorithms trigger a rotation and the tree contains at least $32$ nodes;
if the tree contains fewer than $32$ nodes, rotations do not make $\sC$ increase anyway.

Since the sum $\sC$ is initially zero, and terminates with a non-negative value, no more than $\O(k / \delta)$ rotations are performed on trees with $32$ nodes or more, and no more than $\O(k)$ rotations can be performed on trees with fewer than $32$ nodes.
\end{proof}

In particular, let us focus on some approach to the critical point~$(\alpha_\scc,\beta_\scc) = (1/\sqrt{2},\scrB(1/\sqrt{2}))$.
When setting~$\alpha = \alpha_\scc + x$ and~$\beta = \beta_\scc + x$, and provided that~$0 < x < 1/10$, the inequalities~$1/\sqrt{2} < \alpha < 3/4$ and~$\scrB(\alpha) < \beta < \alpha^2$ are valid.
Moreover,
\begin{itemize}[itemsep=0.5pt,topsep=0.5pt]
\item Theorem~\ref{thm:height} states that~$\GCB{\alpha,\beta}$-trees with~$\bN$ nodes are of height
\[h \leqslant - 2 \log_2(\bN+1)/\log_2(\beta_\scc) + 7 \log_2(\bN+1) x;\]

\item Theorem~\ref{thm:IPL} states that~$\GCB{\alpha,\beta}$-trees with~$\bN$ nodes are of external path length
\[\lambda \leqslant (\bN+1) \log_2(\bN+1) / \Delta_\scc + 2 (\bN+1) \log_2(\bN+1) x,\]
where~$\Delta_\scc = (\sH_2(\alpha_\scc) + \alpha_\scc \sH_2(\beta_\scc / \alpha_\scc)) / (1 + \alpha_\scc)$;

\item Theorem~\ref{thm:BU-complexity} states that inserting or deleting~$k$ leaves in~$\GCB{\alpha,\beta}$-trees requires~$\O(k/x)$ rotations.
More precisely, digging in the constants hidden in the~$\O$ notation yields a (crude) upper bound of less than~$30 (1 + 28/x) k$ rotations.
\end{itemize}

\bibliography{weight}

\clearpage

\appendix

\section{Missing proofs of Sections~\ref{sec:grandchildren} and~\ref{sec:BU}}

\ipl*

\begin{proof}
Let~$\lambda(\T)$ be the sum of the weights of the nodes of a tree~$\T$: we prove by induction on~$|\T|$ that~$\lambda(\T) \leqslant |\T| \log_2(|\T|) / \Delta$.

When~$|\T| = 1$, this inequality rewrites as~$0 \leqslant 0$.
When~$|\T| = 2$, it rewrites as~$\Delta \leqslant 1$, which follows from the fact that~$\sH_2(x) \leqslant 1$ whenever~$x \in [0,1]$.
Therefore, we assume that~$|\T| \geqslant 3$.

Let~$n$ be the root of~$\T$, and let~$\T_1$ and~$\T_2$ be its left and right sub-trees.
Without loss of generality, we assume that~$|\T_1| \geqslant |\T_2|$.
Thus,~$|\T_1| \geqslant |\T|/2 > 1$, and~$\T_1$ is non-empty.
Let~$n_1$ be its root, and let~$\T_{11}$ and~$\T_{12}$ be the left and right sub-trees of~$n_1$.
Once again, without loss of generality, we assume that~$|\T_{11}| \geqslant |T_{12}|$.

Let~$t = |\T|$,~$x = |\T_1| / |\T|$ and~$y = |\T_{11}| / |\T_1|$.
The induction hypothesis states now that
\begin{align*}
\Delta \lambda(\T) / t
& = \Delta (x+1) + \Delta \lambda(\T_{11})/t + \Delta \lambda(\T_{12})/t + \Delta \lambda(\T_2)/t \\
& \leqslant \Delta (x+1) + x y \log_2(t x y) + x (1-y) \log_2(t x (1-y)) + (1-x) \log_2(t (1-x)) \\
& \leqslant \Delta (x+1) + \log_2(t) - x \sH_2(y) - \sH_2(x),
\end{align*}
and it remains to prove that the quantity~$\sF_{\alpha,\beta}(x,y) = x \sH_2(y) + \sH_2(x) - \Delta (x+1)$ is non-negative.

First, let~$\gamma = \beta / \alpha$.
If~$\gamma \leqslant x \leqslant \alpha$, since~$\sH_2$ is decreasing on the interval~$[1/2,1]$ and~$1/2 \leqslant y \leqslant \beta / x$, we know that~$\sF_{\alpha,\beta}(x,y) \geqslant \sF_{\alpha,\beta}(x,\beta / x)$.
Moreover, the function~$\sG \colon x \mapsto \sF_{\alpha,\beta}(x,\beta / x)$ is concave on~$[\gamma,\alpha]$, because its second derivative is
\[\sG''(x) = - \frac{1-\beta}{(x - \beta)(1-x) \ln(2)} < 0\]
whenever~$\beta \leqslant \gamma < x < \alpha \leqslant 1$.
Observing that~$\sG(\alpha) = \sF_{\alpha,\beta}(\alpha,\gamma) = 0$ and that
\begin{align*}
\sG(\gamma) & = \sF_{\alpha,\beta}(\gamma,\alpha) = (\beta \sH_2(\alpha) + \sH_2(\gamma)) - (\gamma+1)(\sH_2(\alpha) + \alpha \sH_2(\gamma))/(\alpha+1) \\
& = (1- \beta)(\sH_2(\gamma)-\sH_2(\alpha))/(\alpha+1) \geqslant 0,
\end{align*}
we conclude that~$\sF_{\alpha,\beta}(x,y) \geqslant \sG(x) \geqslant \min\{\sG(\alpha),\sG(\gamma)\} \geqslant 0$ whenever~$\gamma \leqslant x \leqslant \alpha$.
Finally, if~$1/2 \leqslant x \leqslant \gamma$, and since~$1/2 \leqslant y \leqslant \alpha$, we have~$\sF_{\alpha,\beta}(x,y) \geqslant \sF_{\alpha,x}(x,y) \geqslant \sF_{\alpha,x}(x,\alpha) \geqslant 0$.
\end{proof}

\tightipl*

\begin{proof}
Let~$\gamma = \beta / \alpha$.
For each integer~$s \geqslant 1$, we define inductively a tree~$\T(s)$ of weight~$s$ as follows:
\begin{itemize}[itemsep=0.5pt,topsep=0.5pt]
\item if~$s \leqslant 2$,~$\T(s)$ is the only tree of weight~$s$;
\item if~$s \geqslant 3$,~$\T(s)$ is the tree whose left child is~$\T(s_1)$ and whose right child's children are~$\T(s_{21})$ and~$\T(s_{22})$, where~$s_1 = s - \lfloor \alpha s\rfloor$,~$s_{21} = \lfloor \gamma \lfloor \alpha s \rfloor \rfloor$ and~$s_{22} = s - s_1 - s_{21}$.
\end{itemize}

We present in Figure~\ref{fig:T(s)} the trees~$\T(s)$ obtained when~$1 \leqslant s \leqslant 6$.
Since~$2/3 < \gamma \leqslant \alpha < 3/4$, these trees do not depend on the values of~$\alpha$ and~$\gamma$ (or~$\beta$).

\begin{figure}[t]
\begin{center}
\begin{tikzpicture}[scale=0.71]
\draw[thick] (5,1) -- (6,0)
(8,0) -- (9,1) -- (10,0)
(12,1) -- (13,2) -- (14,1) -- (13,0)
(16,1) -- (17,2) -- (19,0) (17,0) -- (18,1);
\foreach \x/\y in {3/0,6/0,8/0,10/0,12/1,13/0,16/1,17/0,19/0}{
 \foreach \t in {-0.5,0.5}{
 \draw[thick,draw=black!50] (\x,\y) --++ (\t,-0.5);
  \draw[fill=white,draw=white] (\x+\t,\y-0.5) circle (0.15);
  \node at (\x+\t,\y-0.5) {$\bot$};
}}
\draw[thick,draw=black!50] (5,1) --++ (-0.5,-0.5) (14,1) --++ (0.5,-0.5);
\foreach \x/\y in {1/0,4.5/1,14.5/1}{
  \draw[fill=white,draw=white] (\x,\y-0.5) circle (0.15);
  \node at (\x,\y-0.5) {$\bot$};
}
\foreach \x/\y/\l in {3/0/2,%
5/1/3,6/0/2,%
8/0/2,9/1/4,10/0/2,%
12/1/2,13/2/5,14/1/3,13/0/2,%
16/1/2,17/2/6,18/1/4,17/0/2,19/0/2}{
 \draw[fill=white,thick] (\x,\y) circle (0.3);
 \node at (\x,\y) {\l};
}
\foreach [count=\c] \x in {1,3,5.5,9,13,17.5}{
 \node[anchor=north] at (\x,-1) {$\T(\c)$};
}
\end{tikzpicture}
\vspace{-1.2em}
\end{center}
\caption{Trees~$\T(1)$ to~$\T(6)$.
Nodes are labelled by their weight; empty trees are denoted by~$\bot$.
\label{fig:T(s)}}
\end{figure}

We first prove by induction on~$s$ that~$\T(s)$ is a~$\GCB{\alpha,\beta}$-tree.
This is visibly true when~$s \leqslant 4$, hence we assume that~$s \geqslant 5$.
Let~$s_2 = \lfloor \alpha s \rfloor$ be the weight of the right child of~$\T(s)$: it suffices to prove that~$s_1 \leqslant \gamma s$,~$s_2 \leqslant \alpha s$,~$s_{21} \leqslant \gamma s_2$ and~$s_{22} \leqslant \gamma s_2$.
The inequalities~$s_2 \leqslant \alpha s$ and~$s_{21} \leqslant \gamma s_2$ are immediate, and~$s_2 \geqslant \lfloor 2s/3 \rfloor \geqslant 3$, thereby proving that:
\begin{itemize}
\item $s_1 \leqslant (1-\alpha+1/s) s \leqslant (1-2/3+1/5)s < 2s/3 < \gamma s$;
\item $s_{22} \leqslant (1-\gamma+1/s_2) s_2 \leqslant (1-2/3+1/3) s_2 = 2s_2/3 < \gamma s_2$.
\end{itemize}

Now, let~$h(s)$ be the height of~$\T(s)$ and let~$\lambda(s)$ be the sum of the weights of the nodes of~$\T(s)$.
We will prove inductively that~$h(s) \geqslant h^+(s+\kappa)$ and~$\lambda(s) \geqslant \lambda^+(s+\kappa) + v$ for all~$s \geqslant 1$, where we set~$h^+(x) = - 2 \log_\beta(x) - 7$,~$\lambda^+(x) = x \log_2(x)/\Delta - 4 x$,~$\kappa = (\gamma+1)/(1-\beta)$ and~$v = (1 + (\alpha+1)\kappa)/2$.

First, we observe, when~$(\alpha,\beta) \in \D'$, that~$4/5 \leqslant \sH_2(3/4) \leqslant \Delta \leqslant \sH_2(2/3) \leqslant \log_2(5 e)/4$, {}$3 \leqslant \kappa \leqslant 4$ and~$3 \leqslant v \leqslant 4$.
We will use these inequalities several times below.

For example,~$\kappa+1 \leqslant 5 \leqslant (4/3)^6 \leqslant \beta^3$ and~$\kappa+2 \leqslant 6 \leqslant (4/3)^7 \leqslant \beta^{7/2}$, so that~$h(s) = s-2 \geqslant h^+(s+\kappa)$ when~$s = 1$ or~$s = 2$.

Then, if~$s \geqslant 3$, observing that~$s_{21}+\kappa \geqslant \beta s - \gamma - 1 = \beta(s+\kappa)$ and using the induction hypothesis proves that~$h(s) \geqslant h(s_{21}) + 2 \geqslant h^+(s_{21}+\kappa) + 2 \geqslant 2-2 \log_\beta(\beta(s+\kappa))-7 = h^+(s+\kappa)$.

Second, let~$\lambda^-(s)$ be the smallest possible sum of the weights of the nodes of a binary tree with weight~$s$.
By construction,~$\lambda(s) \geqslant \lambda^-(s)$, and~$\lambda^-$ shines as the non-decreasing convex function defined by~$\lambda^-(1) = 0$ and~$\lambda^-(s) = s + \lambda^-(\lfloor s/2 \rfloor) + \lambda^-(\lceil s/2 \rceil)$ for all~$s \geqslant 2$.
Consequently, observing that
\[\lambda^+(s+\kappa) + v \leqslant 5 (s+\kappa) \log_2(s+\kappa)/4 - 4 (s+\kappa) + v \leqslant 5(s+4) \log_2(s+4)/4 - 4(s+3) + 4\]
whenever~$s \geqslant 1$ suffices to check (by computer) that~$\lambda(s) \geqslant \lambda^-(s) \geqslant s \lfloor \log_2(s) \rfloor \geqslant \lambda^+(s+\kappa)$ for all integers~$s \leqslant 26$.

Then, if~$s \geqslant 27$, observe that~$\alpha(1-\alpha)(s+\kappa) \geqslant 2/3 \times 1/4 \times 30 \geqslant 5$.
It follows that:
\begin{itemize}
\item $s_1 + \kappa \geqslant (1-\alpha)s + \kappa \geqslant (1-\alpha)(s+\kappa) \geqslant 5$;
\item $s_{21} + \kappa \geqslant \alpha \gamma s + \kappa - \gamma - 1 = \alpha \gamma (s+\kappa) \geqslant 5$;
\item $s_{22} + \kappa \geqslant \alpha(1-\gamma)s + \kappa +\gamma - 1 \geqslant \alpha(1-\gamma)(s+\kappa) \geqslant 5$.
\end{itemize}

Moreover, the function~$\lambda^+$ is increasing on the interval~$(16^\Delta/e,+\infty)$, and~$16^\Delta/e \leqslant 5$.
In addition, the equality~$\lambda^+(x) = (\alpha+1)x + \lambda^+((1-\alpha)x) + \lambda^+(\alpha \gamma x) + \lambda^+(\alpha(1-\gamma)x$ is valid for all~$x \geqslant 1$.
Thus, the induction hypothesis proves that
{\allowdisplaybreaks
\begin{align*}
\lambda(s) & \geqslant s + s_2 + \lambda(s_1) + \lambda(s_{21}) + \lambda(s_{22}) \\
& \geqslant s + s_2 + \lambda^+(s_1+\kappa) + \lambda^+(s_{21}+\kappa) + \lambda^+(s_{22}+\kappa) + 3 v \\
& \geqslant (\alpha+1)s + \lambda^+((1-\alpha)(s+\kappa)) + \lambda^+(\alpha \gamma(s+\kappa)) + \lambda^+(\alpha(1-\gamma)(s+\kappa)) + (3v-1)\\
& \geqslant \lambda^+(s+\kappa) - (\alpha+1)\kappa + (3v-1) = \lambda^+(s+\kappa) + v.
\end{align*}}

Finally, we check by hand that~$\lambda(s) \geqslant s \lfloor \log_2(s) \rfloor \geqslant 5 s \log_2(s) / 4 - 4 s + 4 \geqslant  s \log_2(s) / \Delta - 4 s + 4$ when~$1 \leqslant s \leqslant 8$, whereas, since~$\lambda^+$ is increasing on~$[5,+\infty)$,
\begin{align*}
\lambda(s) & \geqslant \lambda^+(s+\kappa)+v \geqslant \lambda^+(s+3)+3
= (s+3) \log_2(s+3) / \Delta - 4s - 9 \\
& \geqslant (s \log_2(s) + 3 \log_2(s+3)) / \Delta - 4s - 9 \\
& \geqslant s \log_2(s) / \Delta + 15 \log_2(12) / 4 - 4s - 9
\geqslant s \log_2(s) / \Delta - 4 s + 4
\end{align*}
when~$s \geqslant 9$.
\end{proof}

\bbalancing*

\begin{proof}
Below, we will use the following inequalities, which are all easy to check with any computer algebra system (whereas some look too frightening to check by hand) whenever~$1/\sqrt{2} \leqslant \alpha < 3/4$ and~$\scrB(\alpha) \leqslant \beta \leqslant \alpha^2$.
Except the first three inequalities, each of them is labelled and later reused to prove another inequality with the same label;
the first two inequalities already prove that~$1/4 < \betaa \leqslant \beta$, as stated in Lemma~\ref{lem:B-balancing}.
\newcommand{\vjline}{\vphantom{(\alpha^3\hat{\gamma})}}
\vspace{-6mm}

\begin{multicols}{2}
\begin{align}
& \vjline 1/4 < \smash{\betaa}; \notag \\
& \vjline \alpha < 2 \smash{\betaa}; \notag \\
& \vjline (1-\delta)(\alpha-\smash{\betaa}) < \delta(1-\alpha); \tag{3.2} \\
& \vjline 1-\smash{\betaa} < \delta; \tag{3.4} \\
& \vjline (1-\delta)\beta < \delta(1-\alpha); \tag{4.2} \\
& \vjline (1-\delta)\alpha\beta < \delta(1-\alpha); \tag{4.4} \\
& \vjline 1-\delta+(\delta+\beta\delta-1)\alpha \leqslant \delta\smash{\betaa}; \tag{4.5\smash{\textsuperscript{b}}}
\end{align}

\begin{align}
& \vjline \text{\rlap{$\smash{\betaa} \leqslant \beta;$}}\hphantom{1-\hat{\gamma}-\alpha\hat{\gamma}^2+(\hat{\gamma}+\hat{\gamma}\alpha\gamma-1)\alpha \leqslant 0;} \notag \\
& \vjline \alpha^2 < \delta; \tag{3.1} \\
& \vjline (1-\delta)(1-\hat{\gamma}) \leqslant \delta\hat{\gamma}(1-\alpha); \tag{3.3} \\
& \vjline (1-\delta)(\alpha-\smash{\betaa}) < \delta(\smash{\betaa}-\alpha\beta); \tag{4.1} \\
& \vjline \alpha-\smash{\betaa}+\alpha\beta < \delta; \tag{4.3} \\
& \vjline 1 \leqslant \delta(1+\beta); \tag{4.5\smash{\textsuperscript{a}}} \\
& \vjline 1+(\beta-1)\hat{\gamma} < \delta. \tag{4.6}
\end{align}
\end{multicols}

Then, let~$|m|$ denote the weight of a node~$u$ in the tree~$\T$, and let~$|m|'$ denote its weight in~$\T'$.
When these weights are equal, we will prefer using the notation~$|m|$ even when considering the weight of~$m$ in~$\T'$.
The root of~$\T$ is denoted by~$n$, its left and right children are denoted by~$n_1$ and~$n_2$, and so on.
Finally, we set~$\hat{\gamma} = \betaa/\alpha$, so that~$\delta = \min\{\alphaa,\hat{\gamma}\}$.

Since~$2 \beta \geqslant \alpha$ and~$n$ is~$\cbal{\alpha}$, observing that~$2 \beta |n| \geqslant \alpha |n| \geqslant |n_1| = |n_{11}| + |n_{12}|$ proves that the inequalities~$|n_{11}| > \beta|n|$ and~$|n_{12}| > \beta|n|$ are incompatible.
Similarly, the nodes~$n_1$ and~$n_2$ are~$\cbal{\alpha}$, and thus observing that~$2\beta|n| \geqslant \alpha |n| = \alpha|n_1|+\alpha|n_2| \geqslant |n_{11}|+|n_{21}|$ proves that the inequalities~$|n_{11}| > \beta|n|$ and~$|n_{21}| > \beta|n|$ are incompatible.
Finally, for symmetry reasons, all four inequalities~$|n_{ij}| > \beta|n|$ are incompatible:
this makes our description of the algorithm unambiguous, as announced.

Then, proving that the desired rotations can indeed be performed simply requires showing that, if some case~1 to~4 happens, the grandchild~$n_{ij}$ be a grandchild of~$n$ whose weight~$|n_{ij}|$ is maximal is an actual tree node instead of an empty node of weight~$1/2$ or less.
If this were the case, we would have~$|n_i| = 2 |n_{ij}| \leqslant 1$, and~$n_i$ would be either an empty node or a leaf; in both cases, we would have~$|n| \geqslant 2 |n_i|$, i.e.,~$|n| \geqslant 4|n_{ij}|$, contradicting the inequality~$|n_{ij}| > y |n| \geqslant \betaa|n| > |n|/4$.

It remains to prove that each affected node~$m$ is~$\cbal{\delta}$; its children being~$\cbal{\alpha}$, it will then be~$\gcbal{\betaa}$.
Furthermore, if a rotation was triggered,~$n$ was not~$\gcbal{\beta}$ but its children were~$\cbal{\alpha}$, which proves that~$\balc(\T) > \beta / \alpha \geqslant \hat{\delta} \geqslant \balc(\T')$.

When~$|n_{11}| > w |n| \geqslant \betaa |n|$, a simple rotation is performed, and
\begin{align}
|n_{11}| & \leqslant \alpha |n_1| \leqslant \alpha^2 |n| \leqslant \delta |n| = \delta |n_1|'; \tag{3.1} \\
(1-\delta) |n_{12}| & = (1-\delta)(|n_1|-|n_{11}|) \leqslant (1-\delta)(\alpha |n|-\smash{\betaa} |n|) \notag \\
& \leqslant \delta(1-\alpha)|n| \leqslant \delta(|n|-|n_1|) = \delta |n_2|; \tag{3.2} \\
(1-\delta) |n_2| & = (1-\delta)(|n| - |n_1|) \leqslant (1-\delta)(|n| - |n_{11}|/\alpha) \leqslant (1-\delta)(1-\hat{\gamma})|n| \notag \\
& \leqslant \delta \hat{\gamma} (1-\alpha) |n| \leqslant \delta(1/\alpha-1)|n_{11}| < \delta(|n_1| - |n_{11}|) = \delta |n_{12}|; \tag{3.3} \\
|n| & = |n| - |n_{11}| \leqslant (1-\smash{\betaa}) |n| \leqslant \delta |n| = \delta |n_1|', \tag{3.4}
\end{align}
which means precisely that~$n$ and~$n_1$ are~$\cbal{\delta}$ in~$\T'$.

Similarly, when~$|n_{12}| > w |n| \geqslant \betaa |n|$, a double rotation is performed, and
\begin{align}
(1-\delta)|n_{11}| & = (1-\delta)(|n_1| - |n_{12}|) \leqslant (1-\delta)(\alpha |n| - \smash{\betaa} |n|) \notag \\
& \leqslant \delta(\smash{\betaa} |n|-\alpha \beta |n|) \leqslant \delta(|n_{12}| - \beta |n_1|) \leqslant \delta(|n_{12}| - |n_{122}|) = \delta |n_{121}|; \tag{4.1} \\
(1-\delta) |n_{121}| & \leqslant (1-\delta)\beta |n_1| \leqslant \delta(|n_1|-\alpha |n_1|) \leqslant \delta(|n_1|-|n_{12}|) = \delta |n_{11}|; \tag{4.2} \\
|n_1|' & = |n_1| - |n_{12}| + |n_{121}| \leqslant |n_1| - |n_{12}| + \beta |n_1| \notag \\
& \leqslant (\alpha-\smash{\betaa}+\alpha\beta)|n| \leqslant \delta |n| = \delta |n_{12}|'; \tag{4.3} \\
(1-\delta) |n_{122}| & \leqslant (1-\delta) \beta |n_1| \leqslant (1-\delta) \alpha\beta |n| \leqslant \delta (1-\alpha)|n| \leqslant \delta (|n| - |n_1|) = \delta |n_2|; \tag{4.4} \\
(1-\delta) |n_2| & \leqslant (1-\delta)|n_2| + \beta \delta |n_1| - \delta |n_{121}| = (1-\delta)|n| + (\delta+\beta\delta-1)|n_1| - \delta |n_{121}| \notag \\
& \leqslant (1-\delta)|n| + (\delta+\beta\delta-1)\alpha|n| - \delta |n_{121}| \tag{4.5\smash{\textsuperscript{a}}} \\
& \leqslant \delta\smash{\betaa}|n| - \delta |n_{121}| \leqslant \delta|n_{12}| - \delta|n_{121}| = \delta|n_{122}|; \tag{4.5\smash{\textsuperscript{b}}}\\
|n|' & = |n| - |n_1| + |n_{122}| \leqslant |n| + (\beta - 1) |n_1| \notag \\
& \leqslant |n| + (\beta - 1) |n_{12}| / \alpha \leqslant |n| + (\beta - 1) \hat{\gamma} |n| \leqslant \delta|n| = \delta |n_{12}|', \tag{4.6}
\end{align}
which means precisely that~$n$,~$n_1$ and~$n_{12}$ are~$\cbal{\delta}$ in~$\T'$.

Finally, the cases where~$|n_{21}| > w |n|$ or~$|n_{22}| > w |n|$ are symmetrical, and therefore the conclusion of Lemma~\ref{lem:B-balancing} is valid in these cases too.
\end{proof}

\section{Proofs of Section~\ref{sec:TD}}\label{app:td}

Like in Section~\ref{sec:BU}, the proof of Proposition~\ref{pro:RAB-balancing} is split into three parts, the first two of which concern Algorithms~\ref{routine:A} and~\ref{routine:B} and heavily rely on algebraic computations better checked on a computer algebra system.

\begin{lemma}\label{lem:RA-balancing}
Let~$\alpha$ be a real number such that~$1 / \sqrt{2} \leqslant \alpha < 3/4$.
When~$\alphaa \leqslant u \leqslant \alpha$, the five cases in which the~$\sR\sA(u,\alphaa,\sR(u),\sR(\alphaa))$-balancing algorithm consists are pairwise incompatible when operating on trees with~$30$ nodes or more, and those rotations they trigger can always be performed;
the algorithm itself transforms each~$\GCB{\alpha}$-tree~$\T$ with at least~$30$ nodes and whose root is not robustly~$\cbal{u}$ into a tree~$\T'$ such that~$\balc(\T') \leqslant \balc(\T)$, whose root is robustly~$\cbal{\alphaa}$ and whose sub-trees are~$\GCBB{\alphab}{\alpha}$-trees.
\end{lemma}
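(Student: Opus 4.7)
The plan is to mirror the proof of Lemma~\ref{lem:A-balancing}, strengthening each of its algebraic inequalities to account for the additive error terms $\sR(u) = -u$ and $\sR(\alphaa) = -\alphaa$ built into the triggers. Since $u \geqslant \alphaa \geqslant 1/\sqrt{2}$, these triggers rewrite as $|n_i| > u(|n|-1)$ (which is exactly the failure of being robustly $\cbal{u}$) and $|n_{ii}| \geqslant (1-\alphaa)|n| + \alphaa$ (which governs the choice between a simple and a double rotation).

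I would first dispense with case incompatibility and rotation feasibility. If both $|n_1|$ and $|n_2|$ exceeded $u(|n|-1)$, we would get $|n| > 2u(|n|-1)$, hence $|n| < 2u/(2u-1) \leqslant 2 + \sqrt{2}$, contradicting $|n| \geqslant 30$. Similarly, $|n_1| > u(|n|-1) \geqslant 29/\sqrt{2}$ far exceeds~$1$, so $n_1$ is a genuine node; in the double-rotation case, $|n_{12}| = |n_1| - |n_{11}| > u(|n|-1) - (1-\alphaa)|n| - \alphaa > (2\alphaa - 1)|n| - 2\alphaa$ also exceeds~$1$ for $|n| \geqslant 30$.

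The central work would then be to show that each node affected by the triggered rotation is robustly $\cbal{\alphaa}$ in $\T'$. This calls for strengthening each inequality $A \leqslant \alphaa B$ from the proof of Lemma~\ref{lem:A-balancing}---the four inequalities (1.1)--(1.4) in the simple-rotation case, and the six inequalities (2.1)--(2.6) in the double-rotation case---to its robust counterpart $A + 1 \leqslant \alphaa(B+1)$ and $A \leqslant \alphaa(B-1)$, equivalently $A \leqslant \alphaa B - \alphaa$. Because each of those ten inequalities is strict, with both sides being linear combinations of $|n|$, $|n_1|$, \ldots\ with coefficients depending only on $\alpha$, $\alphab$ and $\alphaa$, the constant gap translates into an additive slack of size $\eta|n|$ for some $\eta > 0$. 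For $|n|$ large enough (at least $\alphaa / \eta_{\min}$, where $\eta_{\min}$ is the smallest of these ten gaps), this slack absorbs the additive loss of at most~$\alphaa < 1$. The main obstacle is therefore purely computational: verifying that the threshold of~$30$ nodes is indeed large enough for every one of those inequalities, which is a routine task best handled with a computer algebra system.

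Finally, the auxiliary conclusions follow easily. For $\balc(\T') \leqslant \balc(\T)$, a triggered rotation forces $\balc(\T) > u(|n|-1)/|n| \geqslant \alphaa(|n|-1)/|n| \geqslant \balc(\T')$, the last inequality holding since the root of $\T'$ is now robustly $\cbal{\alphaa}$. For the sub-trees of $\T'$ being $\GCBB{\alphab}{\alpha}$-trees, observe that $\alphab = 19/24 > 3/4 > \alpha \geqslant \alphaa$, so each child of the root of $\T'$ is $\cbal{\alphab}$, whether it was affected (and hence $\cbal{\alphaa}$) or unaffected (and hence $\cbal{\alpha}$); deeper descendants are untouched and remain $\cbal{\alpha}$.
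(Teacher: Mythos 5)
Your reduction of the lemma to ``strengthen the ten inequalities of Lemma~\ref{lem:A-balancing} by an additive slack linear in $|n|$'' has a genuine gap: those inequalities are \emph{not} all strict with a uniform linear gap, and the strengthened statements you would need are false on part of the parameter range. The paper's relation labelled (2.1\,+\,2.5) is an \emph{equality}, $(1-\alphaa)^2=\alphaa(1-\alpha)(2\alphaa-1)$ (this is exactly how $\alphaa$ is calibrated), and at $\alpha=1/\sqrt{2}$, which the present lemma covers, any admissible $\alphaa$ with $1/\sqrt2\leqslant\alphaa\leqslant\alpha$ equals $\alpha$, so the relation is tight no matter which $\alphaa$ is used. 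Concretely, after a double rotation the balance of the affected non-root nodes $n_1$ (new children $n_{11},n_{121}$) and $n$ (new children $n_{122},n_2$) is governed in the worst case by the ratio $(1-\alphaa)/\bigl((1-\alpha)(2\alphaa-1)\bigr)$, which at $\alpha=1/\sqrt2$ equals exactly $\alphaa/(1-\alphaa)$; the robust triggers $|n_1|>u|n|+\sR(u)$ and $|n_{11}|<(1-\alphaa)|n|-\sR(\alphaa)$ then shift the extremal configuration by a \emph{constant} in the unfavourable direction, so these nodes may fail to be robustly, or even plainly, \cbal{\alphaa}, no matter how large $|n|$ is. For $\alpha>1/\sqrt2$ the gap is positive but tends to $0$ as $\alpha\to1/\sqrt2$, so no fixed threshold such as $30$ nodes can work uniformly either. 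Hence the step ``strict inequality $\Rightarrow$ slack $\eta|n|$ $\Rightarrow$ absorb the additive constants'' collapses precisely where it is needed.

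This is also why the lemma claims less than what you set out to prove: only the \emph{root} of~$\T'$ is asserted to be robustly \cbal{\alphaa}, while the other affected nodes (the roots of the sub-trees of~$\T'$) are only asserted to be \cbal{\alphab} with $\alphab=19/24$ --- that is the content of ``whose sub-trees are $\GCBB{\alphab}{\alpha}$-trees'' --- and the repair of those sub-trees is deliberately delegated to the subsequent $\sA\sB$-balancing calls inside the $\sR\sA\sB$-balancing algorithm. The paper's proof accordingly establishes the robust bound only at the new root (its inequalities (5.1), (5.4), (6.3), (6.6), the last of which is what forces the $30$-node threshold) and proves the remaining affected nodes merely \cbal{\alphab} via a separate list of inequalities involving $\alphab$. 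Your treatment of case incompatibility, of the feasibility of the rotations, and of $\balc(\T')\leqslant\balc(\T)$ is essentially the paper's; but the core balance estimates must be re-derived along the lines above rather than transferred wholesale from Lemma~\ref{lem:A-balancing}.
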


\begin{proof}
Below, we will often use the fact that~$(1-\alpha) |x| \leqslant |y| \leqslant \alpha |x|$ whenever~$y$ is a child of an~$\cbal{\alpha}$ node~$x$.
Like in the proofs of Propositions~\ref{lem:A-balancing} and~\ref{lem:B-balancing}, we will also use the following inequalities, which can be checked with any computer algebra system.
Most of these inequalities involve an integer parameter~$k$ and are valid whenever~$1/\sqrt{2} \leqslant \alpha < 3/4$ and~$k$ is large enough, the lower bound on~$k$ being indicated each time.
Each of them is labelled and later reused to prove another inequality with the same label; inequality~(5.2\,+\,6.2\,+\,6.4) is used to prove the three subsequent inequalities~(5.2),~(6.2) and~(6.4); inequality~(6.1\,+\,6.5) is used to prove both subsequent inequalities~(6.1) and~(6.5).
\vspace{-4pt}
\newcommand{\vjline}{\vphantom{(k\alpha^2\alphaa u)}}
\noindent%
\begin{align}
& \vjline k < 2 \alphaa k + 2 \sR(\alphaa) & \text{when } k \geqslant 4; \notag \\
& \vjline \alpha^2 k < \alphaa k + \sR(\alphaa) & \text{when } k \geqslant 5; \tag{5.1} \\
& \vjline (1-\alphab)\alpha^2 < \alphab(1-\alpha); \tag{5.2\,+\,6.2\,+\,6.4} \\
& \vjline (1-\alphab)((1-\alphaa)k-\sR(\alphaa)) < \alphab(1-\alpha)(\alphaa k + \sR(\alphaa)) & \text{when } k \geqslant 4; &  \tag{5.3} \\
& \vjline (1-\alphab)((1-\alphaa)k - \sR(\alphaa)) < \alphab(1-\alpha)((2\alphaa-1)k + 2 \sR(\alphaa)) & \text{when } k \geqslant 16; \tag{6.1\,+\,6.5} \\
& \vjline (1-\alpha)((1-\alphaa)k-\sR(\alphaa)) + \alpha^2 k < \alphaa k + \sR(\alphaa) & \text{when } k \geqslant 11; \tag{6.3} \\
& \vjline k+(\alpha^2-1)(\alphaa k + \sR(\alphaa)) < \alphaa k + \sR(\alphaa) & \text{when } k \geqslant 31. \tag{6.6}
\end{align}

In particular, when~$u \geqslant \alphaa$, we have~$2 u|n| + 2 \sR(u) \geqslant 2 \alphaa |n| + 2 \sR(\alphaa) > |n| = |n_1| + |n_2|$, which makes the inequalities~$|n_1| > u|n| + \sR(u)$ and~$|n_2| > u |n| + \sR(u)$ incompatible.
Second, in cases~1 and~3, we have~$|n_1| > u |n| + \sR(u) \geqslant 31/\sqrt{2} - 1 > 20$, which proves that~$n_1$ is an actual tree node; in case~3, we also have
\[|n_{12}| = |n_1| - |n_{11}| > u|n| + \sR(u) - (1-\alphaa)|n| + \sR(\alphaa) \geqslant (\sqrt{2}-1)\times 31-2 > 10,\]
and~$n_{12}$ is also a tree node.
Hence, those rotations that the algorithm triggers can always be performed.

When~$\alpha|n| \geqslant |n_1| > u |n| + \sR(u) \geqslant \alphaa |n| + \sR(\alphaa)$ and~$|n_{11}| \geqslant (1-\alphaa) |n| - \sR(\alphaa)$, a simple rotation is performed, and
\begin{align}
|n_{11}| & \leqslant \alpha |n_1| \leqslant \alpha^2 |n| \leqslant \alphaa |n| + \sR(\alphaa) = \alphaa |n_1|' + \sR(\alphaa); \tag{5.1} \\
(1-\alphab) |n_{12}| & \leqslant (1-\alphab)\alpha |n_1| \leqslant (1-\alphab)\alpha^2 |n| \leqslant \alphab(1-\alpha) |n| \leqslant \alphab |n_2|; \tag{5.2} \\
(1-\alphab)|n_2| & = (1-\alphab)(|n|-|n_1|) \leqslant (1-\alphab)((1-\alphaa)|n| - \sR(\alphaa)) \notag \\
& \leqslant \alphab(1-\alpha)(\alphaa|n| + \sR(\alphaa)) \leqslant \alphab(1-\alpha) |n_1| \leqslant \alphab|n_{12}|; \tag{5.3} \\
|n|' & = |n| - |n_{11}| \leqslant \alphaa |n| + \sR(\alphaa), \tag{5.4}
\end{align}
which means that~$n_1$ is robustly~$\cbal{\alphaa}$ in~$\T'$ and~$n$ is~$\cbal{\alphab}$ in~$\T'$.

Similarly, when~$\alpha |n| \geqslant |n_1| > u |n| + \sR(u) \geqslant \alphaa |n| + \sR(\alphaa)$ and~$|n_{11}| < (1-\alphaa) |n| - \sR(\alphaa)$, a double rotation is performed, and
\begin{align}
(1-\alphab) |n_{11}| & \leqslant (1-\alphab)((1-\alphaa)|n| - \sR(\alphaa)) \leqslant \alphab (1-\alpha)((2\alphaa-1)|n| + 2 \sR(\alphaa)) \notag \\
& \leqslant \alphab(1-\alpha)(|n_1|-|n_{11}|) = \alphab (1-\alpha) |n_{12}| \leqslant \alphab|n_{121}|; \tag{6.1} \\
(1-\alphab) |n_{121}| & \leqslant (1-\alphab) \alpha |n_{12}| \leqslant (1-\alphab) \alpha^2 |n_1| \leqslant \alphab (1-\alpha) |n_1| \leqslant \alphab |n_{11}|; \tag{6.2} \\
|n_1|' & = |n_{11}| + |n_{121}| \leqslant |n_{11}| + \alpha |n_{12}| = (1-\alpha)|n_{11}| + \alpha |n_1| \notag \\
& \leqslant (1-\alpha)((1-\alphaa)|n| - \sR(\alphaa)) + \alpha^2 |n| \notag \\
& \leqslant \alphaa |n| + \sR(\alphaa) = \alphaa |n_{12}|' + \sR(\alphaa); \tag{6.3} \\
(1-\alphab)|n_{122}| & \leqslant (1-\alphab)|n_{12}| \leqslant (1-\alphab)\alpha|n_1| \leqslant (1-\alphab)\alpha^2 |n| \notag \\
& \leqslant \alphab(1-\alpha) |n| \leqslant \alphab |n_2|; \tag{6.4} \\
(1-\alphab)|n_2| & = (1-\alphab)(|n|-|n_1|) \leqslant (1-\alphab)((1-\alphaa)|n|-\sR(\alphaa)) \notag \\
& \leqslant \alphab (1-\alpha)((2\alphaa-1)|n|+2\sR(\alphaa)) \notag \\
& \leqslant \alphab(1-\alpha)(|n_1|-|n_{11}|) = \alphab (1-\alpha) |n_{12}| \leqslant \alphab|n_{122}|; \tag{6.5} \\
|n|' & = |n|-|n_1|+|n_{122}| \leqslant |n|-|n_1|+\alpha|n_{12}| \leqslant |n| + (\alpha^2-1)|n_1| \notag \\
& \leqslant |n|+(\alpha^2-1)(\alphaa|n|+\sR(\alphaa)) \leqslant \alphaa |n| + \sR(\alphaa) =  \alphaa |n_{12}|' + \sR(\alphaa), \tag{6.6}
\end{align}
which means that~$n_{12}$ is robustly~$\cbal{\alphaa}$ and~$n$ and~$n_1$ are~$\cbal{\alphab}$ in~$\T'$.

Finally, the cases where~$|n_2| > \alpha |n| + \sR(\alpha)$ are symmetrical, and thus the conclusion of Lemma~\ref{lem:RA-balancing} is valid in these cases too.
\end{proof}

\begin{lemma}\label{lem:RB-balancing}
Let~$\alpha$ and~$\beta$ be real numbers such that~$1/\sqrt{2} \leqslant \alpha < 3/4$ and~$\scrB(\alpha) \leqslant \beta \leqslant \alpha^2$.
When~$\betaa \leqslant y \leqslant \beta$, the five cases in which the~$\sR\sB(y,\sR(y))$-balancing algorithm consists are pairwise incompatible when operating on~$\GCB{\alpha}$-trees with at least~$23$ nodes, and those rotations they trigger can always be performed;
the algorithm itself transforms each~$\GCBB{\alpha,1}{\alpha,\beta}$-tree~$\T$ with~$22$ nodes or more and whose root is not robustly~$\gcbal{y}$ into a tree~$\T'$ such that~$\balc(\T') \leqslant \balc(\T)$, whose root is robustly~$(\alphaa,\betaa)$-balanced and whose sub-trees are~$\GCBB{\alphab,1}{\alpha,\beta}$-trees.
\end{lemma}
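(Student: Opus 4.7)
The plan is to mirror the proof of Lemma~\ref{lem:RA-balancing}, which carried out the robust version of Lemma~\ref{lem:A-balancing} for the $\sA$-balancing algorithm; here I want to do the robust version of Lemma~\ref{lem:B-balancing} for the $\sB$-balancing algorithm. First, I would collect a finite family of algebraic inequalities in $\alpha$ and $\beta$, parameterised by an integer $k$ thought of as a lower bound on $|n|$, in the same spirit as the labelled families~(3.1)--(3.4) and~(4.1)--(4.6) used in the proof of Lemma~\ref{lem:B-balancing}, but with every term $t|n|$ replaced by the robust version $t|n|+\sR(t)$ of Section~\ref{sec:TD}. Each inequality will hold as soon as $k$ is larger than some explicit threshold (the worst of which should dictate the bound~$22$ in the statement), and will be checkable with any computer algebra system using that $1/\sqrt{2}\leqslant\alpha<3/4$ and $\scrB(\alpha)\leqslant\beta\leqslant\alpha^2$.

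Second, I would argue that the four rotation cases are pairwise incompatible. As in the proof of Lemma~\ref{lem:B-balancing}, since $n$ and its children are $\cbal{\alpha}$ we have $|n_{11}|+|n_{12}|\leqslant\alpha|n|$ and $|n_{11}|+|n_{21}|\leqslant\alpha|n|$; combined with $2\beta\geqslant\alpha$ and $2(y|n|+\sR(y))\geqslant 2\betaa|n|+2\sR(\betaa)>\alpha|n|$ (which holds once $|n|$ is large enough), this forces at most one $|n_{ij}|$ to exceed $y|n|+\sR(y)$. I would then check that whenever a case triggers, the relevant grandchild $n_{ij}$ is an actual tree node: the assumption $|n|\geqslant 23$ together with $y\geqslant\betaa>1/4$ yields $|n_{ij}|>y|n|+\sR(y)>|n|/4-1>5$, which excludes $n_{ij}$ from being a spurious empty node of weight~$1/2$ or a phantom descending from a leaf. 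Hence every rotation the algorithm prescribes can be performed.

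Third, for each of the four rotation cases I would carry out, line by line, the same bookkeeping as in Lemma~\ref{lem:B-balancing}, but using the robust inequalities collected at the start. In the simple-rotation case (symmetric to case~3 of Lemma~\ref{lem:RA-balancing}) I need to check that in $\T'$ the new root $n_1$ is robustly $\cbal{\alphaa}$ and robustly $\gcbal{\betaa}$, and that $n$ becomes $\cbal{\alphab}$; in the double-rotation case, the same for the new root $n_{12}$, together with $\cbal{\alphab}$ for both $n$ and $n_1$. The robustness of $\gcbal{\betaa}$ at the new root will follow, as in Lemma~\ref{lem:B-balancing}, from the fact that its children are $\cbal{\alpha}$ together with the robust $\cbal{\alphaa}$ bound, since composing a robust $\alphaa$ bound with an ordinary $\alpha$ bound on the next level gives a robust $\betaa$ bound (using $\betaa\geqslant\alpha\alphaa$). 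The $\cbal{\alphab}$ bounds at the children of the new root are exactly what guarantees that the sub-trees rooted there are $\GCBB{\alphab,1}{\alpha,\beta}$-trees, because unaffected nodes keep their original $(\alpha,\beta)$-balance.

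Finally, to conclude $\balc(\T')\leqslant\balc(\T)$, I would note as in Lemma~\ref{lem:B-balancing} that a rotation fires only when some $|n_{ij}|>y|n|+\sR(y)$, whereas $|n_{ij}|\leqslant\alpha|n_i|$; thus $|n_i|/|n|>\betaa/\alpha$, so $\balc(\T)>\betaa/\alpha$, which is larger than the $\cbal{}$ of every affected node in $\T'$ (bounded by $\max\{\alphaa,\alphab\}$, itself no larger than $\betaa/\alpha$ on the domain~$\D'$). The main obstacle is purely technical: the robust error terms $\sR(\cdot)$ tighten every one of the labelled inequalities of Lemma~\ref{lem:B-balancing}, and pushing them through simultaneously forces the quantitative hypothesis $|n|\geqslant 22$ (or $23$ for the incompatibility of cases). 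Identifying the minimal admissible threshold, and certifying each inequality uniformly over $(\alpha,\beta)\in\D'$, is the bulk of the work and is exactly where a computer algebra system is indispensable.
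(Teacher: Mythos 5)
Your overall skeleton --- robustify the labelled inequalities of Lemma~\ref{lem:B-balancing}, certify them with a computer algebra system for all $|n|$ above an explicit threshold, and rerun the case analysis of the simple and double rotations --- is indeed the paper's strategy, and your arguments for pairwise incompatibility and for the rotations being performable are essentially the ones used. The gap is in how you propose to obtain the robust $\gcbal{\betaa}$ property of the new root and the inequality $\balc(\T') \leqslant \balc(\T)$. You claim that robust $\gcbal{\betaa}$ follows by composing the robust $\cbal{\alphaa}$ bound at the new root with the ordinary $\cbal{\alpha}$ bound at its children, ``using $\betaa \geqslant \alpha\alphaa$''. That inequality is false on the relevant domain: at $\alpha = 1/\sqrt{2}$ and $\beta = \scrB(\alpha)$ one has $\alphaa \approx 0.707$ and $\betaa \approx 0.478$, so $\alpha\alphaa \approx 0.5 > \betaa$. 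This is precisely why Lemma~\ref{lem:B-balancing} only concludes that affected nodes are $\cbal{\delta}$ with $\delta = \min\{\alphaa,\betaa/\alpha\}$ rather than $\cbal{\alphaa}$, and why the paper states explicitly, at the start of its proof of this lemma, that ``showing that a node is $\cbal{(\beta/\alpha)}$ is not enough'' here (moreover, even where $\alpha\alphaa \leqslant \betaa$ holds, the additive error terms need not compose, since $\alpha\sR(\alphaa)$ need not be bounded by $\sR(\betaa)$). The paper's proof instead establishes the robust grandchild bounds directly, through a separate family of inequalities (those with tag b), which compare each grandchild of the new root of $\T'$ with $\betaa|n| + \sR(\betaa)$ by exploiting the $\gcbal{\beta}$ hypothesis on the children of $n$ in $\T$ (e.g.\ $|n_{111}| \leqslant \beta|n_1| \leqslant \alpha\beta|n|$) and incompatibility-type estimates such as $|n_{12}| \leqslant 2\betaa|n| + 2\sR(\betaa) - |n_{11}|$.

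The same issue undermines your argument for $\balc(\T') \leqslant \balc(\T)$: you bound the child-balance of every affected node by $\max\{\alphaa,\alphab\}$ and assert this is at most $\betaa/\alpha$, but $\alphab = 19/24 \approx 0.79$ exceeds $\betaa/\alpha \approx 0.68$, and near the corner even $\alphaa$ exceeds $\betaa/\alpha$. The paper handles this point with yet another family of inequalities (tag c), showing that each child of the new root of $\T'$ has weight at most $|n_1|$, whence $\balc(\T') \leqslant |n_1|/|n| \leqslant \balc(\T)$. So the inequality list you would feed the computer algebra system cannot be a mere robustified copy of (3.1)--(4.6): you need additional estimates whose very purpose is to replace the composition and ratio shortcuts that work in the non-robust setting but fail here, and identifying them is the substantive content of the paper's proof.
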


\begin{proof}
Like in the proof of Lemma~\ref{lem:RA-balancing}, we will often use the fact that~$(1-\alpha) |x| \leqslant |y| \leqslant \alpha |x|$ whenever~$y$ is a child of an~$\cbal{\alpha}$ node~$x$;
we will also the following inequalities, which can be checked with any computer algebra system, and most of which involve a large enough integer parameter~$k$.
Except the first one, each of them is labelled and later reused to prove another inequality with the same label.

Each label bears a superscripted tag a, b or c: this is because, unlike for Lemma~\ref{lem:B-balancing}, showing that a node is~$\cbal{(\beta/\alpha)}$ is not enough to complete our proofs.
Instead, we will always compare separately the weight of a node~$m$ with the weights of its parent (in which case we use the tag~a) or of its grand-parent (in which case we use the tag~b) in~$\T'$, or with the weight of the node~$n_1$ in~$\T$ (in which case we use a tag~c).
This will be used to prove that (a)~the parent of~$m$ is \cbal{}, or (b)~the grand-parent of~$m$ is well \gcbal{}, or (c)~$\balc(\T') \leqslant \balc(\T)$;
the inequality~($\star$) is simply reused to prove the subsequent inequality~(7.3\textsuperscript{b}\,+\,8.1\textsuperscript{b}).
\newcommand{\vjline}{\vphantom{(k\alpha^2\alphaa u)}}
{\allowdisplaybreaks
\begin{align}
& \vjline \alpha k < 2 \betaa k + 2 \sR(\betaa) & \text{when } k \geqslant 5; \tag{$\star$} \\
& \vjline \alpha \beta k < \betaa k+\sR(\betaa) & \text{when } k \geqslant 5; \tag{7.1\textsuperscript{b}\,+\,8.2\textsuperscript{b}\,+\,8.4\textsuperscript{b}} \\
& \vjline \alpha^2 k < \alphaa k + \sR(\alphaa) & \text{when } k \geqslant 5; \tag{7.2\textsuperscript{a}} \\
& \vjline (1-\alphab) \alpha^2 < \alphab(1-\alpha); & \tag{7.3\textsuperscript{a}\,+\,8.2\textsuperscript{a}\,+\,8.4\textsuperscript{a}} \\
& \vjline (1-\alphab)(\alpha k - \sR(\betaa) - \betaa k) < \alphab \betaa (1-\alpha) k & \text{when } k \geqslant 3; \tag{7.4\textsuperscript{a}} \\
& \vjline \alpha k < (\alpha+1)(\betaa k + \sR(\betaa)) & \text{when } k \geqslant 9; \tag{7.4\textsuperscript{b}\,+\,7.5\textsuperscript{c}\,+\,8.5\textsuperscript{b}} \\
& \vjline (1-\betaa) k-\sR(\betaa) < \alphaa k + \sR(\alphaa) & \text{when } k \geqslant 7; \tag{7.5\textsuperscript{a}} \\
& \vjline (1-\alphab)(\alpha k - \betaa k -\sR(\betaa)) < \alphab(1-\alpha)(\betaa k+\sR(\betaa)) & \text{when } k \geqslant 5; \tag{8.1\textsuperscript{a}} \\
& \vjline \alpha k - (1-\alpha)(\betaa k + \sR(\betaa)) < \alphaa k + \sR(\alphaa) & \text{when } k \geqslant 9; \tag{8.3\textsuperscript{a}} \\
& \vjline (1-\alphab)(k - (\beta k + \sR(\beta))/\alpha) < \alphab (1-\alpha)(\beta k + \sR(\beta)) & \text{when } k \geqslant 9;  \tag{8.5\textsuperscript{a}} \\
& \vjline \alpha k - (1-\beta)(\betaa k+ R(\betaa)) < \alpha (\alphaa k + \sR(\alphaa)) & \text{when } k \geqslant 2\rlap{3;}\phantom{;} \tag{8.6\textsuperscript{a}} \\
& \vjline k < u k + v & \text{when } k \geqslant 5, \tag{8.6\textsuperscript{c}}
\end{align}}
where~$u = (2-\beta)\betaa/\alpha$ and~$v = (2-\beta)\sR(\betaa)/\alpha + 1$. 

In particular, since~$n$ is~$\cbal{\alpha}$, inequality~($\star$) indicates that
\begin{align}
2 y |n| + 2 \sR(y) \geqslant 2 \betaa |n| + 2 \sR(\betaa) \geqslant \alpha |n| \geqslant |n_1| = |n_{11}| + |n_{12}|, \tag{7.3\textsuperscript{b}\,+\,8.1\textsuperscript{b}}
\end{align}
which makes the inequalities~$|n_{11}| > y|n| + \sR(y)$ and~$|n_{12}| > y|n| + \sR(y)$ incompatible.
Similarly,~$n_1$ and~$n_2$ are~$\cbal{\alpha}$, and thus
\[
2y |n| + 2 \sR(y) \geqslant \alpha |n| = \alpha|n_1|+\alpha|n_2| \geqslant |n_{11}|+|n_{21}|,\]
which proves that the inequalities~$|n_{11}| > y|n| + \sR(y)$ and~$|n_{21}| > y|n| + \sR(y)$ are incompatible.
Finally, for symmetry reasons, all four inequalities~$|n_{uv}|  > y|n|+\sR(y)$ are incompatible:
this makes our description of the algorithm unambiguous, as announced.

Then, since each node in~$\T$ is~$\cbal{\alpha}$, all children~$n_i$ and grandchildren~$n_{ij}$ of~$n$ satisfy the inequalities~$|n_i| \geqslant (1-\alpha) |n| > |n|/4 > 4$, i.e.,~$|n_i| \geqslant 5$, and~$|n_{ij}| > |n_i|/4 > 1$, i.e.,~$|n_{ij}| \geqslant 2$.
Consequently, these children and grandchildren are all actual tree nodes, and the rotations triggered can always be performed.

When~$|n_{11}| > y |n| + \sR(y) \geqslant \betaa|n|+\sR(\betaa)$, a simple rotation is performed, and
\begin{align}
|n_{11i}| & \leqslant \beta |n_1| \leqslant \alpha \beta |n| \leqslant \betaa|n|+\sR(\betaa) = \betaa|n_1|'+\sR(\betaa); \tag{7.1\textsuperscript{b}} \\
|n_{11}| & \leqslant \alpha |n_1| \leqslant \alpha^2 |n| \leqslant \alphaa |n| + \sR(\alphaa) = \alphaa |n_1|' + \sR(\alphaa); \tag{7.2\textsuperscript{a}} \\
|n_{11}| & \leqslant |n_1|; \tag{7.2\textsuperscript{c}} \\
(1-\alphab)|n_{12}| & \leqslant (1-\alphab)\alpha|n_1| \leqslant (1-\alphab) \alpha^2 |n| \leqslant \alphab(1-\alpha)|n| \leqslant \alphab |n_2|; \tag{7.3\textsuperscript{a}} \\
|n_{12}| & \leqslant 2 \betaa |n|+2\sR(\betaa) - |n_{11}| \leqslant \betaa |n| + \sR(\betaa) = \betaa |n_1|'+\sR(\betaa); \tag{7.3\textsuperscript{b}} \\
(1-\alphab)|n_2| & = (1-\alphab)(|n|-|n_1|) \leqslant  (1-\alphab)((|n_{11}| - \sR(\betaa))/\betaa-|n_1|) \notag \\
& \leqslant (1-\alphab)((\alpha |n_1| - \sR(\betaa))/\betaa-|n_1|) \leqslant \alphab(1-\alpha) |n_1| \leqslant \alphab |n_{12}|; \tag{7.4\textsuperscript{a}} \\
|n_2| & = |n|-|n_1| \leqslant |n|-|n_{11}|/\alpha \notag \\
& \leqslant |n| - (\betaa |n|+\sR(\betaa))/\alpha \leqslant \betaa |n|+\sR(\betaa) = \betaa |n_{12}|'+\sR(\betaa);  \tag{7.4\textsuperscript{b}} \\
|n|' & |n| - |n_{11}| \leqslant (1-\betaa)|n|-\sR(\betaa) \leqslant \alphaa |n| + \sR(\alphaa) = \alphaa |n_1|' + \sR(\alphaa); \tag{7.5\textsuperscript{a}} \\
|n|' & = |n| - |n_{11}| \leqslant (1-\betaa)|n|-\sR(\betaa) \notag \\
& \leqslant (\betaa|n|+\sR(\betaa))/\alpha \leqslant |n_{11}|/\alpha \leqslant |n_1|, \tag{7.5\textsuperscript{c}}
\end{align}
where the inequality~(7.1\textsuperscript{b}) is valid for both children~$n_{11i} = n_{111}$ and~$n_{11i} = n_{112}$ of~$n_{11}$.

Similarly, when~$|n_{12}| > y |n| + \sR(y) \geqslant \betaa |n| + \sR(\betaa)$, a double rotation is performed, and
{\allowdisplaybreaks
\begin{align}
(1-\alphab)|n_{11}| & = (1-\alphab)(|n_1|-|n_{12}|) \leqslant (1-\alphab)(\alpha |n| - \betaa|n| - \sR(\betaa)) \notag \\
& \leqslant \alphab(1-\alpha)(\beta|n|+\sR(\beta)) \leqslant \alphab(1-\alpha)|n_{12}| \leqslant \alphab |n_{121}|; \tag{8.1\textsuperscript{a}} \\
|n_{11}| & \leqslant 2 \betaa |n|+2\sR(\betaa) - |n_{12}| \leqslant \betaa |n| + \sR(\betaa) = \betaa |n_{12}|'+\sR(\betaa); \tag{8.1\textsuperscript{b}} \\
(1-\alphab)|n_{121}| & \leqslant (1-\alphab)\alpha|n_{12}| \leqslant (1-\alphab)\alpha^2|n_1| \leqslant \alphab(1-\alpha)|n_1| \leqslant \alphab|n_{11}|; \tag{8.2\textsuperscript{a}} \\
|n_{121}| & \leqslant \beta |n_1| \leqslant \alpha \beta |n| \leqslant \betaa|n|+\sR(\betaa) = \betaa|n_{12}|'+\sR(\betaa); \tag{8.2\textsuperscript{b}} \\
|n_1|' & = |n_1| - |n_{122}| \leqslant |n_1| - (1-\alpha) |n_{12}| \notag \\
& \leqslant \alpha |n| - (1-\alpha)(\betaa |n| + \sR(\betaa)) \leqslant \alphaa |n| + \sR(\alphaa) = \alphaa |n_{12}|' + \sR(\alphaa); \tag{8.3\textsuperscript{a}} \\
|n_1|' & = |n_1| - |n_{122}| \leqslant |n_1|; \tag{8.3\textsuperscript{c}} \\
(1-\alphab)|n_{122}| & \leqslant (1-\alphab)|n_{12}| \leqslant (1-\alphab) \alpha |n_1| \leqslant (1-\alphab) \alpha^2 |n| \notag \\
& \leqslant \alphab (1-\alpha) |n| \leqslant \alphab |n_2|; \tag{8.4\textsuperscript{a}} \\
|n_{122}| & \leqslant \beta |n_1| \leqslant \alpha \beta |n| \leqslant \betaa|n|+\sR(\betaa) = \betaa|n_{12}|'+\sR(\betaa); \tag{8.4\textsuperscript{b}} \\
|n_2| & = |n|-|n_1| \leqslant |n|-|n_{12}|/\alpha \notag \\
& \leqslant |n|-(\betaa|n|+\sR(\betaa))/\alpha \leqslant \betaa |n|+\sR(\betaa) = \betaa |n_{12}|'+\sR(\betaa); \tag{8.5\textsuperscript{b}} \\
(1-\alphab)|n_2| & \leqslant (1-\alphab)(|n|-(\betaa|n|+\sR(\betaa))/\alpha) \notag \\
& \leqslant \alphab(1-\alpha)(\betaa|n|+\sR(\betaa)) \leqslant \alphab(1-\alpha) |n_{12}| \leqslant \alphab |n_{122}|; \tag{8.5\textsuperscript{a}} \\
|n|' & = |n|-|n_1|+|n_{122}| \leqslant |n| - (1-\beta)|n_1| \leqslant |n| - (1-\beta) |n_{12}| / \alpha \notag \\
& \leqslant |n| - (1-\beta) (\betaa |n| + \sR(\betaa)) / \alpha \leqslant \alphaa |n| + \sR(\alphaa) = \alphaa |n_{12}|' + \sR(\alphaa); \tag{8.6\textsuperscript{a}} \\
|n|' & = |n|-|n_1|+|n_{122}| \leqslant |n| - (1-\beta)|n_1| < u |n| - (1-\beta)|n_1| + v \notag \\
& < u (|n_{12}|-\sR(\betaa))/\betaa - (1-\beta)|n_1| + v \notag \\
& < u (\alpha|n_1|-\sR(\betaa))/\betaa - (1-\beta)|n_1| + v = |n_1| + 1. \tag{8.6\textsuperscript{c}}
\end{align}}

This means, in both cases, that the root of~$\T'$ is~$\cbal{\alphaa}$ and robustly~$\gcbal{\beta}$, that the weights of its children cannot exceed~$|n_1|$, and that both left and right sub-trees of~$\T'$ are~$\GCBB{\alphab,1}{\alpha,\beta}$-trees.

Finally, the cases where~$|n_{21}| > y |n| + \sR(y)$ or~$|n_{22}| > y |n| + \sR(y)$ are symmetrical, and thus the conclusion of Lemma~\ref{lem:RB-balancing} is valid in these cases too.
\end{proof}

With the help of Lemmas~\ref{lem:RA-balancing} and~\ref{lem:RB-balancing}, we can now prove Proposition~\ref{pro:RAB-balancing} itself, which we first recall.

\rabbalancing*

\begin{proof}
If the root of~$\T$ is robustly~$(\alpha,\beta)$-balanced, the~$\sR\sA\sB$-balancing algorithm does nothing.

If this root is robustly~$\cbal{\alpha}$ but not~$\gcbal{\beta}$, the~$\sB(\beta,\sR(\beta))$-balancing algorithm is called; it results in a tree~$\T^{(3)}$, whose root will be denoted by~$r$.
Nodes affected by this call are~$r$ and one or two of its children~$r_1$ and~$r_2$.
Lemma~\ref{lem:RB-balancing} states that~$\balc(\T') \leqslant \balc(\T)$ and that~$r$ is robustly~$(\alphaa,\betaa)$-balanced, whereas the sub-trees rooted at~$r_1$ and~$r_2$ are only guaranteed to be~$\GCBB{\alphab,1}{\alpha,\beta}$-trees.
The~$\sA\sB(\alphaa,\alphaa,\betaa,\betaa)$-balancing algorithm is then applied to both sub-trees, which transforms them into~$\GCB{\alpha,\beta}$-trees whose roots and affected nodes are~$(\alphaa,\betaa)$-balanced, without damaging the \gcbal{} of~$r$.
Hence, the conclusion of Proposition~\ref{pro:RAB-balancing} is valid in this case.

The situation is similar when the root of~$\T$ is not robustly~$\cbal{\alpha}$:
the~$\sA(\alpha,\alphaa,\sR(\alpha),\sR(\alphaa))$-balancing algorithm yields a tree~$\T^{(1)}$ whose root~$r$ is robustly~$\cbal{\alphaa}$ and whose children~$r_1$ and~$r_2$ are~$\cbal{\alphab}$;
subsequently calling the~$\sA\sB(\alphaa,\alphaa,\betaa,\betaa)$-balancing algorithm on sub-trees rooted at~$r_1$ and~$r_2$ results in a~$\GCB{\alpha,\beta}$-tree~$\T^{(2)}$ whose root is robustly~$\cbal{\alphaa}$ and whose affected nodes are~$(\alphaa,\betaa)$-balanced.
Then, additional calls to the~$\sB(\betaa,\sR(\betaa))$- and~$\sA\sB(\alphaa,\alphaa,\betaa,\betaa)$-balancing algorithms yield, like in the previous paragraph, a~$\GCB{\alpha,\beta}$-tree~$\T'$ whose root is robustly~$(\alphaa,\betaa)$-balanced and whose affected nodes are~$(\alphaa,\betaa)$-balanced.

Finally, if the root of~$\T$ it is not~$\cbal{u}$, we will call the~$\sA(u,\alphaa,0,0)$-algorithm, obtaining a~$\GCB{\alpha,\beta}$-tree~$\T^{(1)}$, whose root will be denoted by~$r$.
Nodes affected by this call are~$r$ and one or two of its children~$r_1$ and~$r_2$.
The~$\sB(\betaa,0)$-balancing algorithm is then applied to the~$\GCBB{\alpha,1}{\alpha,\beta}$-trees rooted at~$r_1$ and~$r_2$;
as a result,~$\T^{(2)}$ is a~$\GCBB{\alpha,1}{\alpha,\beta}$-tree rooted at~$r$, to which the~$\sB(\betaa,0)$-balancing algorithm is applied once more.
No tree to which the~$\sA$- and~$\sB$-balancing algorithms were applied saw its~$\cbal{}$ increase, and thus~$\balc(\T') \leqslant \balc(\T)$. 

Moreover, each node~$m$ affected by the~$\sA\sB$-balancing algorithm is either affected by some call to the~$\sB$-balancing algorithm or the root of some tree that the~$\sB$-balancing algorithm did not modify; the latter case may only concern nodes~$r$,~$r_1$ and~$r_2$.
In both cases,~$m$ ends up being~$(\alphaa,\betaa)$-balanced, which completes the proof.
\end{proof}

This procedure being valid, it is then time to prove the correctness of our bottom-up updating algorithm.

\tdbalancing*

\begin{proof}
First note that step~2 of Algorithm~\ref{alg:TD-updating}, which consists in updating node weights, might as well be performed in a new top-down pass occurring after the tree has been completely rebalanced; indeed, once such a weight is updated, it will never be read again.
Hence, we safely omit this step, and look at the tree~$\T'$ obtained from~$\T$ just before applying Algorithm~\ref{alg:TD-updating} to a sub-tree of weight~$30$ or less.

In this context, we will prove the following claim by induction on~$|\T|$.
\begin{quote}
Let~$n_0$ be the root of~$\T'$, and let~$n_k$ be the highest node of~$\T'$ of weight~$|n_k| \leqslant 30$ such that the leaf~$s$ to be inserted or deleted descends from~$n_k$.
Let~$n_0, n_1, \ldots, n_k$ be the path from~$n_0$ to~$n_k$ in~$\T'$.
The tree~$\T'$ is a~$\GCB{\alpha,\beta}$-tree such that~$\balc(\T') \leqslant \balc(\T)$, and each of the nodes~$n_0, n_1, \ldots, n_{k-1}$ is robustly~$(\alpha,\beta)$-balanced.
\end{quote}

This claim is vacuously true when~$|\T| = |\T'| \leqslant 30$.
If~$|\T| \geqslant 31$, we start by applying step~1 of Algorithm~\ref{alg:TD-updating} to~$\T$.
According to Proposition~\ref{pro:RAB-balancing}, we thereby transform~$\T$ it into a~$\RGCB{\alpha,\beta}$-tree~$\T_1$ such that~$\balc(\T_1) \leqslant \balc(\T)$.
Our next move consists in recursively updating a sub-tree of~$\T_1$, say~$\T_2$: by induction hypothesis, this transforms~$\T_2$ into a~$\GCB{\alpha,\beta}$-tree~$\T'_2$ such that~$\balc(\T'_2) \leqslant \balc(\T_2)$ and in which there exists a path from the root to a node~$n_k$ such that~$|n_k| \leqslant 30$ from which~$s$ descends, and formed of robustly~$(\alpha,\beta)$-balanced nodes (except the node~$n_k$ itself).

Transforming~$\T_2$ into~$\T'_2$ also transforms~$\T_1$ into~$\T'$.
Since~$\balc(\T') = \balc(\T_1) \leqslant \balc(\T)$ and~$\balc(\T'_2) \leqslant \balc(\T_2)$, we also have~$\balgc(\T') \leqslant \balgc(\T_1)$, which proves that the root of~$\T_1$ remained robustly~$(\alpha,\beta)$-balanced during this transformation.
Consequently, the claim is valid for~$\T$, which completes the induction.

We are now ready to prove the correctness of Algorithm~\ref{alg:TD-updating}.
First, the algorithm transforms~$\T$ into the tree~$\T'$ described above.
Then, Algorithm~\ref{alg:TD-updating} is applied to a sub-tree~$\T''$ of~$\T'$, of weight~$30$ or less, thereby correctly inserting a leaf~$s$ into~$\T''$ deleting a leaf from~$\T''$, or just rebalancing~$\T''$ (if the update was redundant), and updating the weights of the nodes~$n_0, n_1, \ldots, n_{k-1}$ (i.e., applying step~2 of Algorithm~\ref{alg:TD-updating}, which we had postponed).
Doing so transforms the trees~$\T'$ and~$\T''$ into new trees~$\T'_\star$ and~$\T''_\star$, respectively.

Regardless of whether the update consisted in inserting~$s$, deleting~$s$ or being redundant, each of the nodes~$n_0, n_1, \ldots, n_{k-2}$ remained~$(\alpha,\beta)$-balanced, and~$n_{k-1}$, if it exists, remained~$\cbal{\alpha}$.
Furthermore, the tree~$\T''_\star$ itself is perfectly balanced.
Hence, it just remains to prove that~$n_{k-1}$, if it exists, remained~$\gcbal{\beta}$.
To do so, we rely on the inequality
\begin{align}
\alpha(k-1)+2 < 2 \beta(k-1), \tag{9}
\end{align}
which is valid whenever~$k \geqslant 10$.
Indeed, let~$|m|$ be the weight of a node~$m$ in~$\T'$, and let~$|m|_\star$ be its weight in~$\T'_\star$:
each grandchild~$m$ of~$n_{k-1}$ in~$\T'_\star$ is of weight
\[|m|_\star \leqslant (|n_k|_\star+1)/2 \leqslant (\alpha(|n_{k-1}|-1) + 2)/2 \leqslant \beta(|n_{k-1}|-1) \leqslant \beta(|n_{k-1}|_\star)\]
if~$m$ belongs to~$\T''_\star$, and of weight
\[|m|_\star = |m| \leqslant \beta |n_{k-1}|+\sR(\beta) \leqslant \beta(|n_{k-1}|-1) \leqslant \beta |n_{k-1}|_\star\]
otherwise.
This means that~$n_{k-1}$ remained~$\gcbal{\beta}$, which completes the proof.
\end{proof}

\section{Proofs of Section~\ref{sec:complexity}}

Our complexity proof starts with the following (crude) estimation of the magnitude at which balances can increase, which borrowed from~\cite{BM80}.

\begin{lemma}\label{lem:balance-increase}
Let~$r$ be the root of a binary search tree~$\T$, which is transformed into a new tree~$\T'$ by inserting a leaf in~$\T$ or deleting a leaf~$s \neq r$ from~$\T$, and let~$|r|$ be the weight of~$r$ in~$\T$.
We have~$\balc(\T',r) \leqslant \balc(\T,r)+2/|r|$ and~$\balgc(\T',r) \leqslant \balgc(\T,r)+2/|r|$.
\end{lemma}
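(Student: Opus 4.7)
The plan is to split into two cases depending on whether the update is an insertion or a deletion, and in both cases to rely on two simple observations: (i)~the weight of the root changes by exactly one, so $|r|' = |r|+1$ for an insertion and $|r|' = |r|-1$ for a deletion; and (ii)~since~$s$ lies in a single branch stemming from~$r$, at most one child and at most one grandchild of~$r$ sees its weight updated. Writing $M = \max\{|r_1|,|r_2|\}$ and $G = \max_{i,j}|r_{ij}|$, this yields $M' \leqslant M+1$ and $G' \leqslant G+1$ in an insertion, and $M' \leqslant M$ and $G' \leqslant G$ in a deletion, where the primed quantities refer to~$\T'$.

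The insertion case will then follow from the routine inequality
\[\frac{M+1}{|r|+1} - \frac{M}{|r|} = \frac{|r|-M}{|r|(|r|+1)} \leqslant \frac{1}{|r|+1} \leqslant \frac{2}{|r|},\]
together with the identical estimate in which $G$ replaces~$M$; both the \cbal{} and the \gcbal{} bound of the lemma drop out at once. For the deletion case, I would first observe that both children of~$r$ are non-empty (each has weight at least~$1$), so $G \leqslant M \leqslant |r|-1$, and then exploit the symmetric elementary bound
\[\frac{M}{|r|-1} - \frac{M}{|r|} = \frac{M}{|r|(|r|-1)} \leqslant \frac{1}{|r|} \leqslant \frac{2}{|r|},\]
together with its grandchild analogue. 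The denominator~$|r|-1$ causes no issue, as~$\T$ must contain the two distinct nodes~$r$ and~$s$ before the deletion, giving $|r| \geqslant 3$.

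The proof poses no genuine obstacle; the only things to watch are the bookkeeping of which weights may change and in which direction, and the elementary bound $M, G \leqslant |r|-1$ used in the deletion case, which in turn relies on neither child of~$r$ being empty.
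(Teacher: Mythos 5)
Your proof is correct and follows essentially the same route as the paper's: compare the relevant weights before and after the raw update and bound the change of the ratio by an elementary computation (the paper argues per descendant with $z = |m|/|r|$ and uses $z \leqslant 1$ together with $|r| \geqslant 3$ in the deletion case, whereas you argue directly on the maxima $M$ and $G$ and use $M, G \leqslant |r|-1$, which is the same calculation in a slightly different dress). One small bookkeeping caveat: with the paper's convention that an empty child of~$r$ has two virtual grandchildren of weight $1/2$, inserting or deleting $s$ directly below~$r$ updates \emph{two} grandchild weights (each by $1/2$), so ``at most one grandchild sees its weight updated'' is not literally true, but your key inequalities $G' \leqslant G+1$ and $G' \leqslant G$ hold regardless, so nothing breaks.
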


\begin{proof}
Since no \textsf{C}-balance nor \textsf{GC}-balance exceeds~$1$, the desired result is immediate when~$|r| \leqslant 2$.
Hence, we assume that~$|r| \geqslant 3$, and we consider a descendant~$m$ of~$r$ in~$\T'$.
If~$m$ is the node that was just inserted in~$\T$, it occupies the place of a previously empty tree of weight~$1$, and thus we abusively see it as a former node of weight~$1$ in~$\T$.

Let~$|m|$,~$|r|'$ and~$|m|'$ denote the weight of~$m$ in~$\T$ and the weights of~$r$ and~$m$ in~$\T'$.
Let also~$z = |m|/|r|$: it suffices to prove that~$|m|'/|r|' \leqslant z + 2/|r|$.
And indeed, if~$n$ was inserted in~$\T$,
\[\frac{|m|'}{|r|'} = \frac{|m|'}{|r|+1} \leqslant \frac{|m|+1}{|r|+1} = z + \frac{1-z}{|r|+1} \leqslant z + \frac{2}{|r|},\]
whereas, if~$n$ was deleted from~$\T$,
\[\frac{|m|'}{|r|'} = \frac{|m|'}{|r|-1} \leqslant \frac{|m|}{|r|-1} = z + \frac{z}{|r|-1} \leqslant z + \frac{3z/2}{|r|} \leqslant z + \frac{2}{|r|}.\qedhere\]
\end{proof}

\bucomplexity*

\begin{proof}
Let~$\U_0,\U_1,\ldots,\U_\ell$ be the binary search trees obtained during our gradual transformation of~$\T_0 = \U_0$ into~$\T_k = \U_\ell$: each tree~$\U_{i+1}$ is obtained by inserting a leaf in~$\U_i$ or deleting a leaf from~$\U_i$, which happens~$k$ times, by applying the~$\sA\sB$,~$\sR\sA\sB$-balancing algorithm to a sub-tree of~$\U_i$, or by perfectly rebalancing a sub-tree with up to~$29$ nodes (which we call \emph{base-case balancing}).
Calls to the~$\sA\sB$-balancing algorithm triggered inside a call to the~$\sR\sA\sB$-balancing algorithm are not counted.

Below, the weight of a node~$n$ of~$\U_i$ is denoted by~$|n|_i$, and the sub-tree of~$\U_i$ rooted at~$n$ is denoted by~$\U_i(n)$.
We also set~$\delta = \min\{\alpha - \alphaa,\beta - \betaa,1/62\}$.
Each node~$n$ of~$\U_i$ is given a real-valued counter~$\scc_i(n)$ that we initialise and update as follows:
\begin{enumerate}[itemsep=0.5pt,topsep=0.5pt]
\item when a node~$n$ is inserted in a tree~$\U_i$, we set~$\scc_{i+1}(n) = 0$;

\item when a node is inserted in~$\U_i(n)$ or deleted from~$\U_i(n)$, we set~$\scc_{i+1}(n) = \scc_i(n) + 2/|n|_i$;

\item when the~$\sA\sB$- or~$\sR\sA\sB$-balancing algorithm affects a node~$n$, we set~$\scc_{i+1}(n) = 0$;

\item in all other cases, we simply set~$\scc_{i+1}(n) = \scc_i(n)$.
\end{enumerate}

An immediate induction on~$i$ proves that~$\scc_i(n) \geqslant 0$ for all nodes~$n$ of~$\U_i$.
We also prove by induction on~$i$ that each node~$n$ of~$\U_i$ is~$\cbal{(\alphaa+\scc_i(n))}$, and that~$n$ is~$\gcbal{(\betaa+\scc_i(n))}$ if~$\scc_i(n) < \delta$:
\begin{enumerate}[itemsep=0.5pt,topsep=0.5pt]
\item When~$n$ is inserted in~$\U_i$, it is a leaf; it follows that~$\balc(\U_{i+1},n) = 1/2 \leqslant \alphaa$ and that~$\balgc(\U_{i+1},n) = 1/4 \leqslant \betaa$.

\item When a node is inserted in~$\U_i(n)$ or deleted from~$\U_i(n)$, Lemma~\ref{lem:balance-increase} and the induction hypothesis prove that
\[\balc(\U_{i+1},n) \leqslant \balc(\U_i,n) + 2/|n|_i \leqslant \alphaa+\scc_i(n) + 2/|n|_i = \alphaa+\scc_{i+1}(n).\]
Furthermore, if~$\scc_{i+1}(n) < \delta$,
\[\balgc(\U_{i+1},n) \leqslant \balgc(\U_i,n) + 2/|n|_i \leqslant \betaa+\scc_i(n) + 2/|n|_i = \betaa+\scc_{i+1}(n).\]

\item When the~$\sA\sB$- or~$\sR\sA\sB$-balancing algorithm affects~$n$, Propositions~\ref{pro:AB-balancing} and~\ref{pro:RAB-balancing} ensure that~$\balc(\U_{i+1},n) \leqslant \alphaa$ and~$\balgc(\U_{i+1},n) \leqslant \betaa$.

\item When a base-case balancing affects~$n$, since~$|n|_{i+1} \geqslant 2$, we have~$\balc(\U_{i+1},n) = 1/2 \leqslant \alphaa$ if~$|n|_{i+1}$ is even, and~$\balc(\U_{i+1},n) = (|n|_i+1)/(2|n|_i) \leqslant 2/3 \leqslant \alphaa$ if~$|n|_{i+1}$ is odd.
Thus, we also have~$\balgc(\U_{i+1},n) \leqslant 4/9 \leqslant \betaa$.

\item In other cases,~$\balc(\U_{i+1},n) = \balc(\U_i,n) \leqslant \scc_i(n) = \scc_{i+1}(n)$ and, if~$\scc_{i+1}(n) < \delta$, we have~$\scc_i(n) < \delta$ and~$\balgc(\U_{i+1},n) = \balgc(\U_i,n) \leqslant \scc_i(n) = \scc_{i+1}(n)$.
\end{enumerate}

We will then evaluate the sum~$\sC_i$ of all real numbers values~$\scc_i(n)$ where~$n$ ranges of the nodes of~$\U_i$:
\begin{enumerate}[itemsep=0.5pt,topsep=0.5pt]
\item When a leaf~$s$ is inserted or deleted,~$\sC_{i+1}$ is obtained by either adding a zero counter value (in case of insertion), deleting a non-negative counter value (in case of deletion) or doing nothing yet (in case the update is redundant), and in increasing the counter values of all ancestors of~$n$.
Since the~$h$\textsuperscript{th} ancestor of~$s$ in~$\U_i$ has a weight~$|s^{(h)}|_i \geqslant \alpha^{-h}$, we know that~$\scc_{i+1}(s^{(h)}) \leqslant \scc_i(s^{(h)}) + 2\alpha^h$.
It follows that
\[\sC_{i+1} \leqslant \sC_i + \sum_{h \geqslant 0} 4 \alpha^h = \sC_i + 4/(1-\alpha) \leqslant \sC_i + 16.\]

\item When the~$\sA\sB$- or~$\sR\sA\sB$-balancing algorithm, or a base-case balancing, affects node in a sub-tree~$\U_i(n)$, we have~$\scc_{i+1}(m) = 0 \leqslant \scc_i(m)$ for all affected nodes~$m$.
This means that~$\sC_{i+1} \leqslant \sC_i$.

If, furthermore,~$|n| \geqslant 2/\delta \geqslant 31$, the node~$n$ failed to be robustly~$(\alpha,\beta)$-balanced in~$\U_i$.
This means that~$\balc(\U_i,n) \geqslant \alpha - \sR(\alpha) / |n| \geqslant \alphaa + \delta/2$ or~$\balgc(\U_i,n) \geqslant \beta - \sR(\beta) / |n| \geqslant \betaa + \delta/2$, and therefore that~$\sC_{i+1} \leqslant \sC_i - \delta/2$.
\end{enumerate}

Now, let us count calls to the~$\sA\sB$-,~$\sR\sA\sB$- or base-case balancing algorithms when transforming a tree~$\T_u$ into the tree~$\T_{u+1}$ by using Algorithm~\ref{alg:BU-rebalancing} or~\ref{alg:TD-updating}.
The base-case algorithm was called at most once, triggering~$\O(1)$ rotations.
In addition, the~$\sA\sB$- or~$\sR\sA\sB$-balancing algorithms were applied to sub-trees of weights~$w_1 > w_2 > \cdots > w_j$.
No more than~$2/\delta$ such calls concerned trees of weight~$w < 2/\delta$, each other call decreasing the value of the sum~$\sC_i$ by at least~$\delta/2$;
calls are said to be \emph{small} if they belong to the former category, or \emph{large} if they belong to the latter one.

Consequently, during the entire process of transforming the tree~$\U_0 = \T_0$ into~$\U_\ell = \T_k$, let~$\sB\sC$,~$\sS$ and~$\sL$ the number of base-case, small and large rebalancing operations, respectively.
Each one of them triggers~$\O(1)$ write operations, and we just proved that~$\sB\sC \leqslant k$,~$\sS \leqslant 2 k / \delta$ and~$0 \leqslant \sC_\ell \leqslant \sC_0 + 16 k - \mathsf{L} \delta / 2$, which means that~$\sL \leqslant 32 k / \delta$.
Therefore,~$\O(\sB\sC + \sS + \sL + k) = \O(k / \delta)$ rotations were triggered.

Finally, with a computer algebra system, we observe that
\begin{align}
\delta \geqslant \min\{\alpha-1/\sqrt{2},\beta - \scrB(\alpha),\alpha^2-\beta\}/12. \tag{10.1}
\end{align}
The desired result follows in case~$\beta < \alpha^2$.

The situation is in fact slightly simpler when~$\beta = \alpha^2$.
Indeed, in that case, we observe, again with a computer algebra system, that~$\alpha(\alpha k + \sR(\alpha)) \leqslant \beta k + \sR(\beta)$
for all integers~$k$.
Hence, a node that is~$\cbal{\alpha}$ is necessarily~$\gcbal{\beta}$, and a node that is robustly~$\cbal{\alpha}$ is necessarily robustly~$\gcbal{\beta}$ too.
In other words, when~$\beta = \alpha^2$, steps~3 of both algorithms~\ref{alg:AB} and~\ref{alg:RAB} never lead to any rotation.
Consequently, in the above analysis, we may simply define~$\delta$ as~$\delta = \min\{\alpha-\alphaa,1/62\}$, observing this time that~$\delta \geqslant \alpha-1/\sqrt{2}$.
This completes the proof of Theorem~\ref{thm:TD-balancing}.
\end{proof}
\end{document}